\title{Conditional Outcome Equivalence: A Quantile Alternative to CATE}
\author{
Josh Givens \\ University of Bristol \\ \texttt{josh.givens@bristol.ac.uk} 
\And Henry W J Reeve \\ University of Bristol \\ \texttt{henry.reeve@bristol.ac.uk} \And Song Liu \\ University of Bristol \\ \texttt{song.liu@bristol.ac.uk} \And Katarzyna Reluga \\ University of Bristol \\ \texttt{katarzyna.reluga@bristol.ac.uk}}
\begin{document}

\maketitle
\begin{abstract}
   Conditional quantile treatment effect (CQTE) can provide insight into the effect of a treatment beyond the conditional average treatment effect (CATE). This ability to provide information over multiple quantiles of the response makes CQTE especially valuable in cases where the effect of a treatment is not well-modelled by a location shift, even conditionally on the covariates. Nevertheless, the estimation of CQTE is challenging and often depends upon the smoothness of the individual quantiles as a function of the covariates rather than smoothness of the CQTE itself. 
   This is in stark contrast to CATE where it is possible to obtain high-quality estimates  which have less dependency upon the smoothness of the nuisance parameters when the CATE itself is smooth. Moreover, relative smoothness of the CQTE lacks the interpretability of smoothness of the CATE making it less clear whether it is a reasonable assumption to make. We combine the desirable properties of CATE and CQTE by considering a new estimand, the conditional quantile comparator (CQC). The CQC not only retains information about the whole treatment distribution, similar to CQTE, but also having more natural examples of smoothness and is able to leverage simplicity in an auxiliary estimand. We provide finite sample bounds on the error of our estimator, demonstrating its ability to exploit simplicity. We validate our theory in numerical simulations which show that our method produces more accurate estimates than baselines. Finally, we apply our methodology to a study on the effect of employment incentives on earnings across different age groups. We see that our method is able to reveal heterogeneity of the effect across different quantiles. 
\end{abstract}
\section{Introduction}
In many real world scenarios such as personalised treatment allocation and individual level policy decisions, understanding the effect of a treatment/intervention at an individual level is invaluable in providing bespoke care. The field which aims to understand a treatment's effect given certain covariates is referred to as heterogeneous treatment effect (HTE) estimation and has seen popularity across many applications \citep{Hirano2009,Collins2015, Obermeyer2016, Lei2021}. Within HTE, the conditional average treatment effect (CATE) has proved itself to be a popular target of study in this area due to its simplicity and interpretability \citep{Abadie2002a,Imbens2004,Semenova2021}. 
A key limitation of the CATE however, is that it fails to paint a full picture of the differences between distributions of the two responses. 
In addition, it can be sensitive to outliers, with extreme values leading to a biased outcome. As such, conditional quantile treatment effect (CQTE), an estimand which compares the conditional quantiles of the distributions in the treated and untreated populations, has established itself as a popular alternative \citep{Abadie2002, Autor2017, Powell2020}. 

While the CQTE offers more information than the CATE and is more robust to outliers, it lacks some of the CATE's desirable estimation properties.
Specifically, CQTE estimation involves estimating the quantile functions for the two marginal outcomes. This harms the estimation procedure in cases where estimation of marginal quantile functions is more challenging than estimation of the CQTE itself. An example of this is when the marginal quantile functions are less smooth as a function of the covariates than the CQTE. 
This aligns with a recurring idea within HTE estimation that the effect of a treatment may be simpler than the marginal outcomes.
In contrast to CQTE, there are many CATE estimators which aim to learn the CATE directly allowing them to exploit its relative simplicity. These include the X-learner \cite{Kunzel2019}, R-learner \cite{Nie2020}, and Doubly Robust (DR) learner \citep{Kennedy2023b}. Before estimating the CATE, these procedures require estimation of intermediary estimands (nuisance parameters) which condition on the covariates such as the average marginal outcomes and the propensity score (the probability of being assigned to treatment group). These nuisance parameters are then used to aid the estimation of the CATE.
With the DR learner specifically, it has been shown that it can still achieve optimal convergence rates even when estimation of the nuisance parameters is worse than estimation of the CATE itself. This notion is referred to as \textbf{double robustness}, as our estimation is robust to sub-optimal estimation of both of the nuisance parameters. 

Some attempts have been made to improve CQTE estimation \citep{Zhou2022, YingZhang2020} with a key work being that of \citet{Kallus2023}. In this they provide an extension of the double robustness property to the CQTE, creating an estimation procedure that can achieve strong convergence even when nuisance parameters are more difficult to estimate. Unfortunately, one of the nuisance parameters which must be estimated is the reciprocal of the conditional densities over the response. These are highly difficult to estimate and risk the errors blowing up in low density regions which could potentially nullify the desirable estimation rates they achieve even with the dependence on the estimation accuracy of these nuisance parameters being less strong.
Furthermore it is still unclear how one can interpret relative smoothness in the CQTE compared to the individual quantiles with their being relatively little discussion of this within the literature. In general there is a distinct lack of illustrative examples; which are present for the CATE. To our knowledge not no other works specifically aim to tackle this double robustness phenomenon for the CQTE or other quantile based treatment effect estimands.

We introduce a novel estimand called the ``conditional quantile comparator'' (CQC). The CQC gives the outcome for a treated individual in the same quantile as a given outcome for an untreated individual, conditional on covariates. This relates to the conditional Quantile-Quantile (QQ) plot for the treated and untreated outcomes as demonstrated in Figure \ref{fig:QQ_demonstration}.
\begin{figure}[h]
\centering
\begin{subfigure}{0.45\textwidth}
    \includegraphics[width=\textwidth]{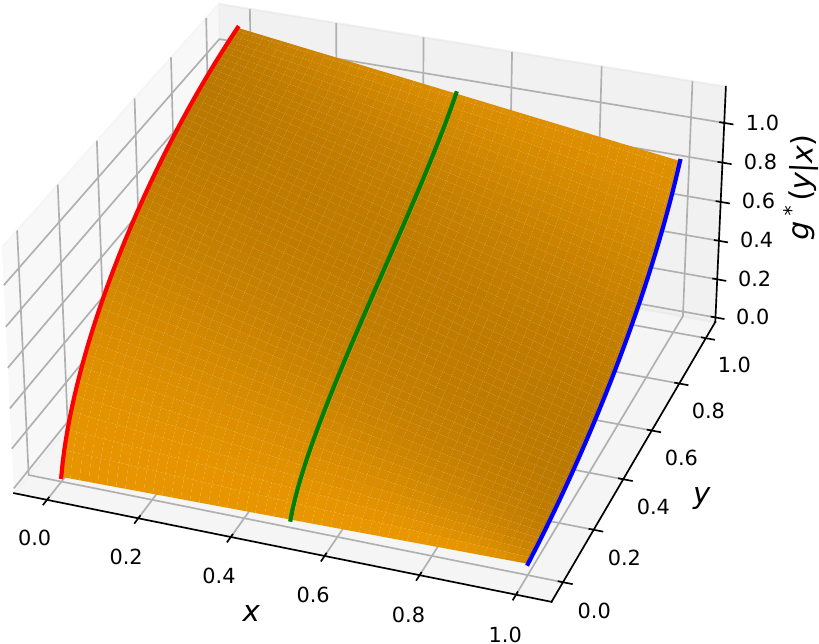}
\end{subfigure}
\begin{subfigure}{0.45\textwidth}
    \includegraphics[width=\textwidth]{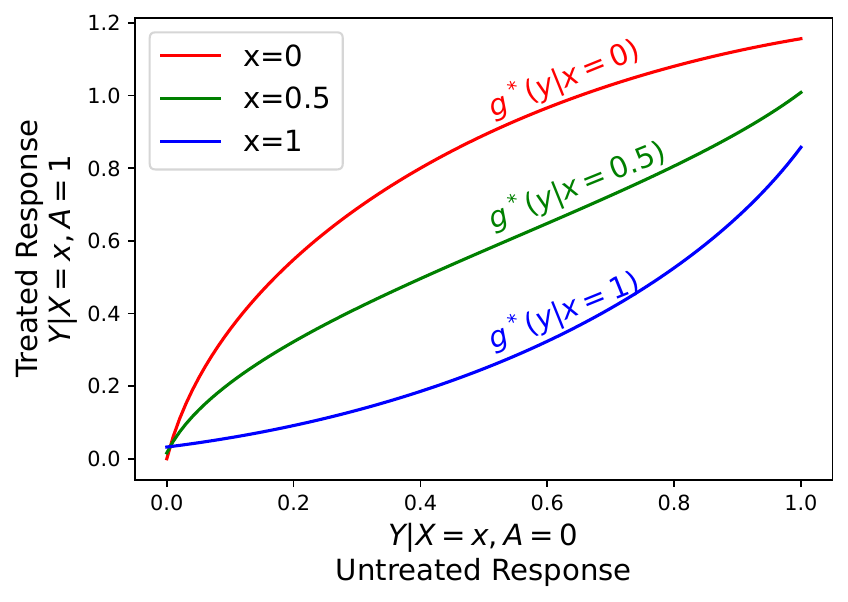}
\end{subfigure}
 \vspace{2mm}
\caption{The left plot gives the CQC surface which takes in covariates ($x$) and an untreated response ($y$) and returns the treated response of the equivalent quantile ($\transfunc^*(y|x)$). The right plot is a QQ-plot of the responses ($Y$) in the untreated ($A=0$) vs treated ($A=1$) population conditional on various covariates ($X =0, 0.5, 1$). These conditional QQ-plots correspond to ``slices'' of the CQC surface, as shown by the coloured lines in the left plot. The plot is best viewed in colour.}

\label{fig:QQ_demonstration}
\end{figure}
Similarly to the CQTE, our new estimand, the CQC, allows us to compare equivalent quantiles while working exclusively in the response landscape, making it a more interpretable tool. This allows us to construct canonical examples of the CQC being smoother than various nuisance parameters, the CQTE, and the CATE; adding to this interpretability.
In addition, using the pseudo-outcome framework presented in \citet{Kennedy2023b}, we can leverage CATE estimation procedures to directly estimate the CQC in a doubly robust way, as mentioned above.
Crucially, the CQC can keep the valuable quantile-level information previously offered by the CQTE while building on much of the CATE literature to acquire its desirable robustness properties and interpretability.
Our contributions are as follows:
\begin{itemize}
    \item Introduce a new estimand for HTE analysis: the conditional quantile comparator (CQC).
    \item Propose an estimation procedure which we prove to be doubly robust.
    \item Demonstrate better estimation accuracy 
    especially when the CQC is smooth but individual conditional cumulative distribution functions are not. 
    \item 
    Provide insights into real-world datasets on employment intervention and medical treatment.
\end{itemize}

\section{Set-up}
We now introduce the standard HTE set-up in our notation. Let $Z$ denote the random triple $(Y,X,A)$
with $Y$ a random variable (RV) on $\responsespace\subseteq\R$, $X$ a RV on $\covariatespace\subseteq\R^\datadim$, and $A$ a RV on $\{0,1\}$. We treat $Z$  as representing an individual and interpret the components as
\begin{align*}
    Y&:~\text{Outcome/Response}&X&:~\text{Observed covariates}&A&:~\text{Treatment assignment}
\end{align*}
\begin{remark}
We could view our setting as coming from a potential outcome framework \citep{Rubin2005}. Under this framework we assume there exists RVs $Y_0,Y_1$ on $\responsespace$ representing the outcome with and without treatment and that $Y\equiv Y_A$. $Y_{1-A}$ would then be unobserved/unknown for each individual.
\end{remark}

We define the \emph{propensity score} $\pi:\covariatespace\rightarrow (0,1)$ by 
\begin{align*}
    \pi(\bm x)\coloneqq\prob(A=1|X=\bm x),
\end{align*}
that is, the $\pi$ denotes the conditional probability of being assigned to treatment given the covariates. We shall assume that $\pi$ is continuous and bounded away from $0$ and $1$. From a potential outcomes perspective, this means that each individual could potentially be assigned to either treatment. We also define
\begin{align}
    \ccdf(y|\bm x)&\coloneqq \prob(Y\leq y|X=\bm x,A=\zerone),\\
    \invccdf[\zerone](\alpha|\bm x)&\coloneqq \inf\{y\in\R|\ccdf(y|\bm x)\geq\alpha\},\label{eq:quantile_func}
\end{align}
and refer to them as the \emph{conditional cumulative distribution function (CCDF)} and the \emph{quantile function} respectively. We also refer to $\invccdf[\zerone]$ in \eqref{eq:quantile_func} as the \emph{generalised inverse} of $\ccdf$.

We can now define the CATE and CQTE to be given by $\cate:\covariatespace\rightarrow \R$ and $\cqte:[0,1]\times \covariatespace\rightarrow \R$ with
\begin{align*}
    \cate(\bm x)&\coloneqq\E[Y|X=\bm x,A=1]-\E[Y|X=\bm x,A=0],\\
    \cqte(\alpha|\bm x)&\coloneqq \invccdf[1](\alpha|\bm x)-\invccdf[0](\alpha|\bm x).
\end{align*}

We let $\rsamp\coloneqq\lrbrc{Z_i}_{i=1}^{2n}\equiv\{(Y_i,X_i,A_i)\}_{i=1}^{2n}$ for $n\in\N$ be IID copies of $Z$ representing our data sample with $i$ indexing the individual. We assume an even number of samples for notational convenience. For $\zerone\in\{0,1\}$, we take $I_a\coloneqq\{i|A_i=1\}$, the indices of individuals on treatment~$\zerone$. We can then define $\rsamp_\zerone\coloneqq\{Z_i\}_{\{i\in I_a\}}$ and $n_a\coloneqq|I_a|$ as the dataset and sample size of those on treatment $a$.

For $n\in\N$, let $[n]\coloneqq\{1,\dotsc,n\}$. For a vector $\bm w\in\R^p$ let $w_j$ to represent the $j$\textsuperscript{th} component of $\bm w$ and let $\|\bm w\|$ be the Euclidean norm unless otherwise specified. We also take $\|\bm w\|_1$ as the $1$-norm and $\|\bm w\|_{\infty}\coloneqq\max_{j\in[p]}|w_j|$. 
We keep a summary table of all notation used in Appendix \ref{app:notation}.

\subsection{Introducing the quantile comparator}
Our aim is to find ``equivalent quantiles" between the treated and non-treated distributions conditional on the covariates. Specifically, for each $y_0\in\responsespace$, $\bm x\in\covariatespace$ we aim to find $y_1$ such that
\begin{align*}
    \ccdf[1](y_1|\bm x)=\ccdf[0](y_0|\bm x). 
\end{align*}

\hrnote{$y_1$ here is not uniquely defined.. we should comment on this in some way. e.g. ``we wish to find a $y_1 \in \responsespace$ such that}\jgnote{Later on I state conditions we assume for it to be uniquely defined. I.e. strictly increasing CDF/strictly positive density on support. Feels simpler to restrict to this case but can generalise if you feel that's better.}

This now allows us to define our primary estimand of interest, the \emph{conditional quantile comparator}.
\begin{defn}[Conditional quantile comparator (CQC)]
    For our triple $(Y,X,A)$, the \emph{conditional quantile comparator} is the measurable function $\transfunc^*:\responsespace\times\covariatespace\rightarrow \responsespace$ such that, for all $y\in \responsespace,~\bm x\in\covariatespace$, 
\begin{align*}
    \ccdf[1](\transfunc^*(y|\bm x)|\bm x)=\ccdf[0](y|\bm x).
\end{align*}
\hrnote{I think we neeed an $\inf$ or something to make this well-defined. }
\end{defn}

We then simply define $y_1$ as $\transfunc^*(y_0|\bm x)$. The name conditional quantile comparator derives from the fact that it returns the value of $y_1$ in the equivalent quantile of $Y|X=\bm x,A=1$ as the quantile of $Y|X=\bm x,A=0$ that $y_0$ is in.

\begin{remark}
    For simplicity and to ensure such a function is well defined, we will assume that $Y|X=\bm x,A=\zerone$ is a continuous RV for any given $\bm x\in\covariatespace,~\zerone\in\{0,1\}$ with strictly positive density on its support. We will however allow the support of $Y|X=x,A=\zerone$ to vary in both $\bm x$ and $\zerone$.
\end{remark}

We now introduce another estimand which will serve as a useful stepping stone in our estimation.

\begin{defn}[CCDF contrasting function]
    The \emph{CCDF contrasting function} is defined to be
    \\ $\contrastfunc^*:\responsespace\times\responsespace\times\covariatespace\rightarrow[-1,1]$ given by
    \begin{align}\label{eq:cdf_contrast}
        \contrastfunc^*(y_0,y_1|\bm x)&\coloneqq \ccdf[1](y_1|\bm x)-\ccdf[0](y_0|\bm x).
    \end{align}
\end{defn}
This estimand allows following alternative definitions for $\transfunc^*$ which help its interpretation and estimation (detailed later). We take $\contrastfunc^{*-1}$ representing the inverse of $\contrastfunc^*$ with respect to the 2\textsuperscript{nd} argument.
    \begin{align}
        \transfunc^*(y_0|\bm x)=\contrastfunc^{*-1}(y_0,0|\bm x)=\invccdf[1](\ccdf[0](y_0|\bm x)|\bm x) \label{eq:g_equiv}.
    \end{align}
The second equality still holds if we replace the inverses in the above with generalised inverses.
The equality \eqref{eq:g_equiv} falls straight from the definition of each object and shows how we can use $\contrastfunc^*$ to estimate $\transfunc^*$. Moreover, the equality \eqref{eq:g_equiv} allows us to generalise $\transfunc^*$ to discontinuous $Y$ (or pdfs with non-trivial $0$ density regions inside the support). 

\subsection{Exploring the CQC}
We specifically focus on the CQC, $\transfunc^*$, as we feel it gives insightful information allowing comparing the two distributions ($Y|X,A=1$) and ($Y|X,A=0$).

The CQC allows us to compare the two distributions beyond simply a single point estimate such as that given by the CATE. This is especially valuable in cases where the two distributions differ beyond just a shift. For example, the effect of some treatments varies greatly between individuals with the same or similar covariates. An example of this is antidepressants, where some patients respond positively while others may have adverse reactions leading to a worse outcome than no treatment whatsoever. Another example is the use of opioids as painkillers where some patients have an increased tolerance making them less effective \citep{Huynh2021, Nadeau2021}.

As well as being of interest on its own, the quantile comparator relates closely to other estimands of interest. For example, if we take $\quantilediff^*(y_0|\bm x)\coloneqq \transfunc^*(y_0|\bm x)-y_0$ then $\quantilediff^*$ tells us whether the equivalent quantile in the treated distribution is higher or lower. This estimand then serves as a heuristic for whether the treatment is beneficial at that untreated response value. 

Furthermore, CQTE can be written as 
\begin{align}
\label{eq:cqc_cqte_relation}
    \cqte(\alpha|\bm x)=\quantilediff^*\left(F^{-1}_{Y|X,A=0}(\alpha|\bm x)\Big|\bm x\right)=\transfunc^*\lrbr{F^{-1}_{Y|X,A=0}(\alpha|\bm x)\Big|\bm x}-F^{-1}_{Y|X,A=0}(\alpha|\bm x),
\end{align} linking the quantile comparator back to the CQTE. This equivalence highlights the perspective that the CQC can be seen as rephrasing the input of the CQTE in terms of the outcome space.

A key idea within CATE literature is the notion that the CATE itself may be a simpler estimand to study than the marginal treatment outcomes ($\E[Y|X=\bm x,A=\zerone]$) may be individually. One can exploit this feature to improve the CATE's estimation.
A similar concept exists with the CQC as we will see in the example below. 
\begin{example}[Illustrative Example]\label{ex:illustrative}
Suppose that  
\begin{align*}
    Y|X=x,A=0&\sim \gaussianDistribution(\sin(10x),~ 1^2), &
    Y|X=x,A=1&\sim \gaussianDistribution(2\sin(10x),~ 2^2).
\end{align*}
Then we have
$\transfunc^*(y|x)=2y$ which does not depend on $x$ and does not include the sine term present in the individual CDFs. Interestingly, $\E[Y|X,A=1]-\E[Y|X,A=0]=\sin(x)$ hence the CATE is still non-constant in this case (the same also holds for the CQTE). Additionally, we have $\quantilediff^*(y|x)=\transfunc^*(y|x)-y=y$ suggesting the intervention is beneficial for positive $y$ and detrimental for negative $y$.
We now show 3D plots of the two CDFs and $\quantilediff^*(y|\bm x)$ in Figure \ref{fig:ill_example}.

\begin{figure}[H]
    \begin{subfigure}{0.3\textwidth}
        \includegraphics[width=\textwidth]{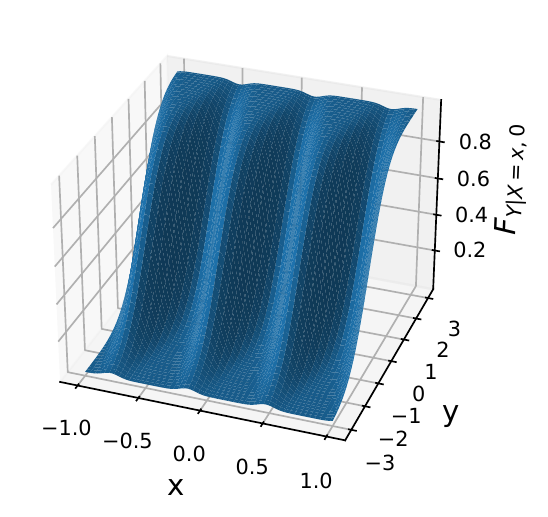}
        \caption{$F_{Y|X,0}(y|x)$}
    \end{subfigure}
    \begin{subfigure}{0.3\textwidth}
        \includegraphics[width=\textwidth]{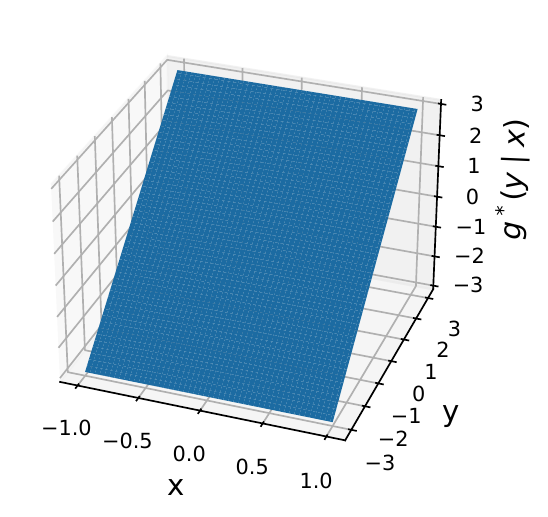}
        \caption{$g^*(y|x)$}
    \end{subfigure}
    \begin{subfigure}{0.3\textwidth}
        \includegraphics[width=\textwidth]{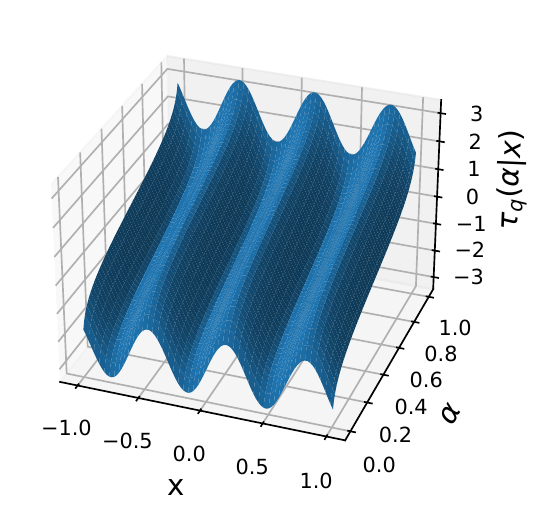}
        \caption{$\tau_q^*(\alpha|x)$}
    \end{subfigure}
    \caption{Surface plots for CCDF (panel (a)), CQC (panel (b)) and CQTE (panel (c)). We can see that CCDF, and CQTE have high-frequency change in $x$ while the CQC does not depend on $x$.}
    \label{fig:ill_example}
\end{figure}
\end{example}

\begin{example}[General Smoothness Case]
    Suppose we are in the potential outcomes framework so that $Y_0,Y_1$ exist with $Y\equiv Y_A$. Now also suppose that $Y_1=\phi(Y_0,X)$ for some transformation $\phi$ increasing in $Y_0$ for each $X$. Then $\phi$ gives the CQC (i.e. $\phi=\transfunc^*$) meaning that smoothness of the CQC can be seen as smoothness of $\phi$. This gives a generalisation of the CATE case where smoothness is present when $Y_1=Y_0+\psi(X)$ with $\psi$ smooth 

    A specific example could be a treatment which halves all individuals blood pressure. In this case $\phi^*(y,\bm x)=\transfunc^*(y|\bm x)=\half y$ and so the CQC is smooth but the CQTE and CATE would not be if the individual responses are non-smooth.
\end{example}
\section{Estimation procedure}\label{sec:method}
We now describe our estimation procedure for the CQC which is motivated by equation \eqref{eq:g_equiv}. At a high level our approach for estimating $\transfunc^*(y_0|\bm x)$ will be the following:
\begin{itemize}
\item  Estimate $\contrastfunc^*(y_0,.|\bm x)$ using a pseudo-outcome -- a specified proxy response computed from $(Y,X,A)$ which we will regress against.
\item Find the value $\hat y_1$ which makes our estimate of $\contrastfunc^*(y_0,.|\bm x)$ closest to $0$.
\end{itemize}

Section \ref{sec:h_est} and Algorithm \ref{alg:h_est} give our $\contrastfunc^*$ estimation procedure while Section \ref{sec:g_est} and Algorithm \ref{alg:g_est} give our $\transfunc^*$ estimation procedure.

\subsection{Estimating the CCDF contrasting function \texorpdfstring{$\contrastfunc^*$}{h*}}\label{sec:h_est}
We focus on estimating $\contrastfunc^*$ primarily for its two nice properties:
\begin{enumerate}
    \item Similar to $\transfunc^*$, $\contrastfunc^*$ can exhibit smoothness even when the individual CCDFs are not smooth.
    \item The estimation of $\contrastfunc^*$ can be re-framed as a CATE problem.
\end{enumerate}

The first property is important as the smoothness of $\contrastfunc^*$ determines the best estimation rate that can be achieved when using non-parametric regression, setting a target for our approach. In particular, smoother functions have better estimation rates. 
The second property is important as it gives us a method for attaining this target rate. By re-framing the estimation as a CATE problem, we can leverage existing results to build a robust estimator which achieves the target estimation accuracy rate even when the rate of estimating nuisance parameters is sub-optimal. We demonstrate this robustness later using finite sample bounds on the estimation accuracy (Proposition \ref{prop:h_acc} \& Theorem \ref{thm:g_acc}). 

First, we show how the estimation of $\contrastfunc^*$ can be solved using a CATE estimator. Note that
\begin{align*}
    \contrastfunc^*(y_0,y_1|\bm x)=\E[\one\{Y\leq y_1\}|X=\bm x,A=1]-\E[\one\{Y\leq y_0\}|X=\bm x,A=0].
\end{align*}
Hence, for a given $y_0,y_1$, if we define the RV $W_{y_0,y_1}\coloneqq \one\{Y\leq y_A\}$, then estimating $\contrastfunc^*(y_0,y_1|.)$ is equivalent to estimating the CATE with $W_{y_0,y_1}$ replacing $Y$ as the response. To perform this estimation, we turn to a recent method developed by \citet{Kennedy2023b}. They propose to write the CATE as a conditional expectation of a function of $Z$ called a pseudo-outcome. A robust estimator is then obtained by regressing this pseudo-outcome against $X$.
In our setting, the pseudo-outcome with the new response $W_{y_0,y_1}$, for a sample $(y,\bm x,a)$ is given by
\begin{align}\label{eq:pseudo_outcome}
    \pseudo_{y_0,y_1}(y,\bm x,a):&=\frac{a-\pi(\bm x)}{\pi(\bm x)(1-\pi(\bm x))}\lrbrc{\one\{y\leq y_{a}\}-\ccdf(y_{a}|\bm x)}+\ccdf[1](y_1|\bm x)-\ccdf[0](y_0|\bm x)\\
    &=\frac{a-\pi(\bm x)}{\pi(\bm x)(1-\pi(\bm x))}\lrbrc{\one\{y\leq y_{a}\}-\ccdf(y_{a}|\bm x)}+\contrastfunc^*(y_0,y_1|\bm x).\nonumber
\end{align}
Since $\contrastfunc^*(y_0,y_1|\bm x) = \E[\varphi_{y_0,y_1}(Z)|X = \bm x]$ (Proposition \ref{prop:pseudo_expectation}, Appendix \ref{app:h_acc}), regressing $\varphi_{y_0,y_1}(Z)$ on $X$ provides an estimate for $\contrastfunc^*$.

As we do not know the CDFs nor the propensity score, we need to replace them in \eqref{eq:pseudo_outcome} with estimates. We define $\hat\pi,\estccdf[0], \estccdf[1]$ to be estimates of $\pi, \ccdf[0], \ccdf[1]$ respectively. We then construct $\hat\pseudo_{y_0,y_1}$ in the same way as $\pseudo_{y_0,y_1}$, but using estimated quantities $\hat\pi,~\estccdf$ instead. 
$\hat\varphi_{y_0,y_1}$ can now serve as the pseudo-outcome in our regression. 
We also use sample splitting to de-correlate the propensity score and CDF estimates from the $\contrastfunc^*$ estimate.
This helps make our estimator doubly robust, as we will see in the following theory.

We are now ready to define our Doubly Robust (DR)-learner to estimate $\contrastfunc^*$ in Algorithm \ref{alg:h_est}.
\begin{algorithm}[H]
    \caption{DR estimation procedure for the CCDF contrasting function $\contrastfunc^*$}\label{alg:h_est}
    \begin{algorithmic}[1]
    \Require{$y_0,y_1\in\responsespace$, Data $\rsamp$, a regressor (e.g. linear smoother)}
    \State Define $\isplitone\coloneqq[n]$, $\isplittwo\coloneqq\{n+1,\dotsc, 2n\}$ and split $D$ into $\rsampsplitone\coloneqq\{Z_i\}_{i\in\isplitone},\rsampsplittwo\coloneqq\{Z_j\}_{j\in\isplittwo}$.
    \State Using $\rsampsplitone$ to estimate $\hat\pi,\estccdf[0](y_0|.),\estccdf[1](y_1|.)$ by regressing $\one\{A=1\},\one\{Y\leq y_0\},\one\{Y\leq y_1\}$ respectively against $X$.
    \State Use these estimates to obtain $\hat\pseudo(Z_j)$ for $j\in\rsampsplittwo$ using equation \eqref{eq:pseudo_outcome}
    \State Using $\rsampsplittwo$ to regress $\hat\pseudo$ against $X$ to obtain estimate $\hat\contrastfunc(y_0,y_1|.)$ of $\contrastfunc^*(y_0,y_1|.)$.
    \end{algorithmic}
\end{algorithm}
\begin{remark}
    Cross-fitting can be implemented by repeating the procedure with the roles of $\isplitone$ and $\isplittwo$ switched and then averaging the two estimates of $\contrastfunc^*$. We could also perform this procedure multiple times with different random splits of the data to improve our estimator's potential stability.  
\end{remark}

Note that our algorithm is not specific on which form of regression to use allowing for any parametric or non-parametric procedure. Further to this, it can also be easily adapted to use other pseudo-outcome procedures such as the R-learner of \citet{Nie2020} or a standard inverse propensity weighting approach which we describe in Appendix \ref{app:IPW}.

\subsection{Finite sample bound of \texorpdfstring{$\contrastfunc^*$}{h} estimator}
In this section, we prove the estimation accuracy of $\hat{\contrastfunc}$, which will play important roles in the following notation and assumptions we need for estimating $\transfunc^*$.
These accuracy statements 
will be made for an arbitrarily fixed $\bm x\in \covariatespace$. 
For our theoretical and experimental results we use linear smoothers for the final regression. 
This means our estimate $\hat{\contrastfunc}$ and oracle estimate $\contrastfunc'$ are of the form
\begin{align*}
    \hat \contrastfunc(y_0,y_1|\bm x)&=\sum_{j\in\isplittwo}w_j\hat\pseudo_{y_0,y_1}(Z_j) & \contrastfunc'(y_0,y_1|\bm x)&=\sum_{j\in\isplittwo}w_j\pseudo_{y_0,y_1}(Z_j)
\end{align*}
where the weights $w_j\equiv w_j(\bm x,X_\isplittwo)$ are constructed using $X_\isplittwo\coloneqq\{X_j\}_{j\in\isplittwo}$ with $w_j\geq0$ and $\|\bm w\|_1$. Linear smoothers encompass a broad class of estimation techniques used in both low and high-dimensional settings. Examples include k-NN regression \citep{Chen2019, Chen2019a}, kernel ridge regression \citep{Singh2023}, generalised forests \citep{Athey2019}, and Mondrian forests \citep{Lakshminarayanan2014}. Additionally linear smoothers have been shown to adapt to intrinsic low dimensionality in regression problems in higher dimensions \citep{Kpotufe2011}, making them an apt estimator for our purposes.

For $\phi:\cal Z\rightarrow \R$ treated as deterministic and $p>1$, we also define the norms  
\begin{align*}   
\|\phi\|_{\bm w^s}\coloneqq \sqrt{\frac{1}{\|\bm w^s\|_1}\sum_{i\in\isplittwo} w_j^s\,\E\lrbrs{\phi(Z)^2|X=X_i}}
\end{align*}
and take $\|\phi\|_{\bm w}\coloneqq\|\phi\|_{\bm w^1}$. Note that this norm is random as the weights depend upon $X_\isplittwo$.

We now aim to show that we are able to exploit smoothness in $\contrastfunc^*$ even when the CCDFs and propensity score are less smooth. We introduce the notion of smoothness through H\"{o}lder functions.
\begin{defn}[H\"{o}lder functions]
    We say that a function $f:\covariatespace\rightarrow \R$ is $(\gamma,C)$-H\"{o}lder for $\gamma\in(0,1],C\geq 1$ if for any $\bm x',\bm x''\in\covariatespace$,
\begin{align*}
    \abs{f(\bm x')-f(\bm x^{\prime\prime})}\leq C\|\bm x'-\bm x^{\prime\prime}\|^{\gamma}.
\end{align*}
\end{defn}
Here, larger $\gamma$ represents a smoother function which can be estimated at faster rates.
\begin{assumption}\label{ass:h_acc}
For any $y_0,y_1\in\responsespace,~\delta>0,\zerone\in\{0,1\}$:
    \begin{enumerate}[(a)]
        \item There exists $\xi\in(0,1/2]$ such that $\pi(\bm x),\hat\pi(\bm x')\in[\xi,1-\xi]$ for all $\bm x'\in\covariatespace$.
        \item With probability at least $1-\delta$, $\|\hat \pseudo_{y_0,y_1}-\pseudo_{y_0,y_1}\|_{\bm w^2}\leq \eps_{\hat\pseudo}(n,\delta)$ .
        \item With probability at least $1-\delta$,  $\|\hat\pi-\pi\|_{\bm w}\leq \eps_\alpha(n,\delta)$ and $\|\ccdf(y_\zerone|.)-\estccdf(y_\zerone|.)\|_{\bm w}\leq \eps_{\beta}(n,\delta)$.
        \item For $\gamma\in(0,1],~C\geq1$, $\contrastfunc^*(y_0,y_1|.)$ is $(\gamma,C)$-H\"{o}lder.
    \end{enumerate}
\end{assumption}
Assumption \ref{ass:h_acc}(a) exists to ensure for any covariate value, neither treatment assignment has too low a probability. 
Assumption \ref{ass:h_acc}(b) controls the convergence of the estimated pseudo-outcome to the true pseudo-outcome. 
Assumption \ref{ass:h_acc}(c) sets up the smoothness of the propensity score and CCDFs alongside the convergence rates of their estimators as $\eps_\alpha,\eps_\beta$ respectively. Assumption \ref{ass:h_acc}(d) sets up the smoothness of $\contrastfunc^*$ which will control the convergence rate of the oracle estimation procedure. Assumptions \ref{ass:h_acc} (c) and (d) control the accuracy of our estimator in the following result.
\begin{prop}\label{prop:h_acc}
Suppose that Assumption \ref{ass:h_acc} holds and let $\hat \contrastfunc$ be a linear smoother estimated as in Algorithm \ref{alg:h_est}. 
    Then for any $y_1,y_0\in\responsespace,~\delta\in(0,2/e]$ and our $\bm x\in\covariatespace$, with probability at least $1-\delta$,
        \begin{align*}
\left|\hat \contrastfunc(y_0,y_1|\bm x)-\contrastfunc^*(y_0,y_1|\bm x)\right|\leq \eps_{h}(n,\delta).
\end{align*}
Here, for each $\delta \in (0,2/e]$ we have
\begin{align*}
\eps_{\contrastfunc}(n,\delta)&\coloneqq \sqrt{2\log(8/\delta)/n}\,\eps_{\hat\pseudo}(n,\delta/4)+\eps_\alpha(n,\delta/4)\eps_\beta(n,\delta/4)+\eps_\gamma(n,\delta/4),\\
\eps_{\gamma}(n,\delta)&\coloneqq | \E[\contrastfunc'(y_0,y_1|\bm x)-\contrastfunc^*(y_0,y_1|\bm x)|X_{\isplittwo},\rsampsplitone]|\\
&\quad+\sqrt{{2\log(2/\delta)/n}}\,\|\bm w\|\|\varphi-h(y_0,y_1|.)\|_{\bm w^2}+{2\|\bm w\|_{\infty}\log(2/\delta)}/({3\xi}).
        \end{align*}
\end{prop}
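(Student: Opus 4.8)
The plan is the standard oracle decomposition for a doubly-robust pseudo-outcome regressor, specialised to the bounded responses $W_{y_0,y_1}=\one\{Y\le y_A\}$. Fix $y_0,y_1\in\responsespace$ and the point $\bm x$, and work conditionally on $\rsampsplitone$ -- which makes the nuisance estimates $\hat\pi,\estccdf[0],\estccdf[1]$ deterministic -- and on $X_\isplittwo$ -- which makes the linear-smoother weights $\bm w$ deterministic; by sample splitting the summands $\{Z_j\}_{j\in\isplittwo}$ are then conditionally independent with only $(Y_j,A_j)$ random given $X_j$. Introducing the oracle estimator $\contrastfunc'(y_0,y_1|\bm x)=\sum_{j\in\isplittwo}w_j\pseudo_{y_0,y_1}(Z_j)$ built from the true nuisances, the triangle inequality splits the error into a \emph{nuisance error} $\hat\contrastfunc-\contrastfunc'$ and an \emph{oracle error} $\contrastfunc'-\contrastfunc^*$, which I would bound separately and recombine with a union bound at the end.

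\textbf{Oracle error (the $\eps_\gamma$ term).} Using $\E[\pseudo_{y_0,y_1}(Z)\mid X=\bm x]=\contrastfunc^*(y_0,y_1|\bm x)$ from Proposition \ref{prop:pseudo_expectation}, write
\begin{align*}
\contrastfunc'(y_0,y_1|\bm x)-\contrastfunc^*(y_0,y_1|\bm x)&=\sum_{j\in\isplittwo}w_j\bigl(\pseudo_{y_0,y_1}(Z_j)-\contrastfunc^*(y_0,y_1|X_j)\bigr)+\sum_{j\in\isplittwo}w_j\bigl(\contrastfunc^*(y_0,y_1|X_j)-\contrastfunc^*(y_0,y_1|\bm x)\bigr).
\end{align*}
The second sum is exactly $\E[\contrastfunc'-\contrastfunc^*\mid X_\isplittwo,\rsampsplitone]$, the first summand of $\eps_\gamma$, and Assumption \ref{ass:h_acc}(d) bounds it by $C\sum_{j}w_j\|X_j-\bm x\|^{\gamma}$. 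The first sum is a weighted average of conditionally independent mean-zero terms $\zeta_j:=\pseudo_{y_0,y_1}(Z_j)-\contrastfunc^*(y_0,y_1|X_j)$; here the deterministic part $\contrastfunc^*$ of the pseudo-outcome cancels, so $\zeta_j$ equals only the augmentation term $\frac{A_j-\pi(X_j)}{\pi(X_j)(1-\pi(X_j))}\bigl(\one\{Y_j\le y_{A_j}\}-\ccdf(y_{A_j}|X_j)\bigr)$, which has range at most $1/\xi$ by Assumption \ref{ass:h_acc}(a) and conditional second moment $\E[\zeta_j^2\mid X_j]$ matching the weights inside $\|\pseudo_{y_0,y_1}-\contrastfunc^*(y_0,y_1|.)\|_{\bm w^2}$. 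Bernstein's inequality at level $\delta/4$ then produces the remaining two summands of $\eps_\gamma$.

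\textbf{Nuisance error (the $\eps_\alpha\eps_\beta$ and $\eps_{\hat\pseudo}$ terms).} Write $\hat\contrastfunc-\contrastfunc'=\sum_{j\in\isplittwo}w_j\bigl(\hat\pseudo_{y_0,y_1}(Z_j)-\pseudo_{y_0,y_1}(Z_j)\bigr)$ and decompose each summand into its conditional mean given $(X_j,\rsampsplitone)$ plus a mean-zero residual. The key doubly-robust computation, verified by conditioning on $A$ and using $\E[\one\{Y\le y_a\}\mid X=\bm x,A=a]=\ccdf(y_a|\bm x)$, is
\begin{align*}
\E\bigl[\hat\pseudo_{y_0,y_1}(Z)-\pseudo_{y_0,y_1}(Z)\mid X=\bm x,\rsampsplitone\bigr]&=\frac{\pi(\bm x)-\hat\pi(\bm x)}{\hat\pi(\bm x)}\bigl(\ccdf[1](y_1|\bm x)-\estccdf[1](y_1|\bm x)\bigr)\\
&\quad+\frac{\pi(\bm x)-\hat\pi(\bm x)}{1-\hat\pi(\bm x)}\bigl(\ccdf[0](y_0|\bm x)-\estccdf[0](y_0|\bm x)\bigr),
\end{align*}
so no first-order term in either nuisance error survives. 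Summing this against $\bm w$, bounding $\hat\pi,1-\hat\pi\ge\xi$ (Assumption \ref{ass:h_acc}(a)) and applying Cauchy--Schwarz with respect to the weights $\{w_j\}$, this deterministic ``bias'' piece is bounded, on the event of Assumption \ref{ass:h_acc}(c), by a quantity of order $\eps_\alpha(n,\delta/4)\,\eps_\beta(n,\delta/4)$. The residual piece is once more a weighted average of conditionally independent mean-zero terms whose conditional variance is controlled by $\|\hat\pseudo_{y_0,y_1}-\pseudo_{y_0,y_1}\|_{\bm w^2}^2\le\eps_{\hat\pseudo}(n,\delta/4)^2$ on the event of Assumption \ref{ass:h_acc}(b); a Bernstein/Hoeffding bound at level $\delta/4$ gives the $\sqrt{2\log(8/\delta)/n}\,\eps_{\hat\pseudo}(n,\delta/4)$ term, the factor $\log(8/\delta)=\log(2/(\delta/4))$ just recording this confidence level.

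\textbf{Conclusion and main difficulty.} A union bound over the four events -- Assumptions \ref{ass:h_acc}(b), \ref{ass:h_acc}(c) and the two Bernstein inequalities, each at level $\delta/4$ -- together with the triangle inequality yields $|\hat\contrastfunc(y_0,y_1|\bm x)-\contrastfunc^*(y_0,y_1|\bm x)|\le\eps_\contrastfunc(n,\delta)$ with probability at least $1-\delta$; the restriction $\delta\le 2/e$ simply guarantees $\log(2/\delta)\ge 1$ so the Bernstein terms tidy into the stated form. I expect the main obstacle to be the doubly-robust algebra -- checking that the conditional bias of $\hat\pseudo_{y_0,y_1}-\pseudo_{y_0,y_1}$ collapses to the bilinear-in-errors expression above (this is exactly the mechanism behind the product rate $\eps_\alpha\eps_\beta$ rather than $\eps_\alpha+\eps_\beta$) -- followed by the bookkeeping of threading the two distinct weighted norms $\|\cdot\|_{\bm w}$ and $\|\cdot\|_{\bm w^2}$, along with $\|\bm w\|$, $\|\bm w\|_\infty$, $\xi$ and $n$, correctly through the Cauchy--Schwarz and Bernstein steps so that the constants reproduce $\eps_\contrastfunc$ exactly.
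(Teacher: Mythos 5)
Your proposal is correct and follows essentially the same route as the paper: the same oracle/nuisance decomposition, the same conditional Bernstein argument for the stochastic terms (packaged in the paper as Theorem \ref{thm:stability}), the same doubly-robust computation showing the conditional bias $\E[\hat\pseudo_{y_0,y_1}(Z)-\pseudo_{y_0,y_1}(Z)\mid X=\bm x,\rsampsplitone]$ collapses to a bilinear form in the nuisance errors, the same Cauchy--Schwarz step (Proposition \ref{prop:b_split}) yielding the product $\eps_\alpha\eps_\beta$, and the same union bound over four events at level $\delta/4$. The explicit bias formula you derive agrees with the paper's expression for $\hat b(\bm x)$ after rewriting $\pi/\hat\pi-1=(\pi-\hat\pi)/\hat\pi$.
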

We have that $\eps_{\gamma}$ gives an upper bound on the accuracy of the oracle estimation and so acts as a target for our estimation procedure.
If $\eps_{\hat\pseudo}(n,\delta)\rightarrow 0$ then the first term in $\eps_h$ is guaranteed to be $o(\eps_\gamma(n,\delta))$ for fixed $\delta$. Hence the first and last terms converge at oracle rates with respect to $n$.
As the $\eps_{\alpha}$ and $\eps_{\beta}$ terms are multiplied together we can obtain better rates than either of them individually have. This is because both $\eps_{\alpha}$ and $\eps_{\beta}$ can converge to $0$ slower than $\eps_{\gamma}$ while their product $\eps_{\alpha}\cdot\eps_\beta$ converges quicker. This provides the desired double robustness as our estimation can converge at oracle rates even when convergence for the nuisance parameters is slower.
Now that we have this we can convert our estimate of $\contrastfunc^*$, into an estimate of $\transfunc^*$.
\subsection{Estimating the conditional quantile comparator \texorpdfstring{$\transfunc^*$}{g*}}\label{sec:g_est}
In order to obtain an estimate of $\transfunc^*(y_0|\bm x)$ at a fixed $y_0,\bm x$,  we need to obtain estimates of $\contrastfunc^*(y_0,y_1|\bm x)$ at various values of $y_1$. As $\contrastfunc^*$ is monotonic, we would then like to search for a monotonic function  which aligns with these estimates. This monotonicity is especially important because it allows us to bound the estimation accuracy of $\hat{\contrastfunc}$ uniformly over all $y_1$ at a similar rate to our pointwise accuracy.
With this, we can easily translate the estimation accuracy in $\hat{\contrastfunc}$ (obtained in proposition \ref{prop:h_acc}) into the accuracy in $\hat{\transfunc}$. Additionally, it simplifies the process of inverting $\hat{\contrastfunc}$.
In general our method in Algorithm \ref{alg:h_est} will not produce monotonic $\hat\contrastfunc$ (this is in contrast to some other approaches such as an IPW pseudo-outcome, see Appendix \ref{app:IPW}), or separately estimating the CCDFs).
We can however obtain a monotonic estimate of $\contrastfunc^*$ using isotonic projection.

\begin{defn}[Isotonic Projection]
    We define the \emph{isotonic projection} of $\bm \alpha'\in\R^p$ as follows:
    \begin{align*}
        \proj(\bm\alpha')\coloneqq\argmin_{\bm\alpha\in\iso(p)}\|\bm\alpha-\bm\alpha'\|
    \end{align*}
    where $\iso(p)\coloneqq\{\bm \alpha\in\R^p|\alpha_j\leq\alpha_{l+1}~\forall l\in[p-1]\}$, the set of all isotonic vectors in $\R^p$.
\end{defn}

\begin{remark}
     We can use the Pool Adjacent Violators Algorithm (PAVA) \citep{Barlow1972} which performs isotonic projection and is implemented in the \verb|IsotonicRegression| class of sci-kit learn in Python \citep{Pedregosa2011}.
\end{remark}

Hence, for a fixed value of $(y_0,\bm x)$ and a set of predictions $\hat\alpha_l=\hat\contrastfunc(y_0,y_1^{(l)}|\bm x)$ with $y^{(l)}\leq y^{(l+1)}$, we can take $\tilde{\bm \alpha}\coloneqq\proj_{\iso(p)}(\hat{\bm \alpha})$ and use these to obtain a new monotonic estimate of $\contrastfunc^*$. Furthermore, by a result in \citet{Yang2019}, $\tilde {\bm \alpha}$ will be at least as accurate as $\hat{\bm \alpha}$ in the worst case.
We now describe our approach for estimating the CQC using this projection approach in Algorithm \ref{alg:g_est}. 
\begin{algorithm}[H]
\caption{DR estimation procedure for the CQC $\transfunc^*$
}\label{alg:g_est}
\begin{algorithmic}[1]
\Require{ Data $\rsamp$; test point $(y_0,\bm x)$; sorted evaluation points $\{y^{(l)}\}_{l=1}^p$}
\State Apply Algorithm \ref{alg:h_est} to obtain estimate of $\contrastfunc^*$ given by $\hat\contrastfunc$.
\State Define $\hat\alpha_l\coloneqq \hat \contrastfunc(y_0,y^{(l)}|\bm x)$ for $l\in[p]$.
\State Isotonically project $\hat{\bm \alpha}$ using PAVA to obtain $\tilde{\bm \alpha}$ with $\tilde \alpha_i\leq\tilde\alpha_{i+1}$.
\State Take $\hat\transfunc(y_0|\bm x)\coloneqq y^{(l^*)}$ with $l^*\coloneqq\argmin_{l\in[p]}|\tilde \alpha_l|$.
\end{algorithmic}
\end{algorithm}

\begin{remark}
 For the case where $\hat\contrastfunc$ is a step function, these steps can serve as our evaluation points while for continuous $\hat\contrastfunc$ one could take these candidate $y_1$ points at small evenly spaced intervals. Empirically we also find that $\hat h$ is already close to isotonic and so step 3. of the algorithm is mostly for the theoretical justification of our approach.
\end{remark}

While it may seem inefficient to be estimating the CQC via the CCDF contrasting function and then inverting, due to the monotonicity of $\contrastfunc^*$, we actually pay a very small cost in estimation accuracy for having to estimate the CCDF contrasting function over all $y_1$. We make this notion more explicit in the following section.

\subsection{Finite sample bound of the CQC estimator}
We now provide the accuracy of our estimate $\hat{\transfunc}$ obtained by Algorithm \ref{alg:g_est} when used in conjunction with linear smoothers. We assume that $\estccdf$ are also fit using linear smoothers of the form
\[\estccdf(y|\bm x',a)=\sum_{i\in\isplitone}w_{\ccdf;i}(\bm x';X_{\isplitone},A_{\isplitone})\one\{Y_i\leq y\}\] with $w_j(\bm x')\equiv w_j(\bm x',X_{\isplitone},A_{\isplitone})>0, \|\bm w(\bm x')\|_1=1$. We will also require the following assumptions.

\begin{assumption}\label{ass:g_acc}For our RV $X$, any $y\in\responsespace,~\bm x'\in\covariatespace$, $\delta<e^{-1}$:
\begin{enumerate}[(a)]
    \item There exists some $\densityradius,\eta>0$ such that $\ccdf[1](y'|\bm x)\geq \eta$ for all $y'\in B_{s}(\transfunc^*(y|\bm x))$.
    \item W.p. at least $1-\delta$, $\max_{j\in\isplittwo}w_j\leq \eps_{\bm w}(n,\delta)$
    and $\max_{i\in\isplitone}w_{\ccdf;i}(X)\leq \eps_{\bm w}(n,\delta)$.
\end{enumerate}

\end{assumption}
Assumption \ref{ass:g_acc}(a) is a mild assumption which allows us to convert the $\hat \contrastfunc$ accuracy into $\hat\transfunc$ accuracy while
Assumption \ref{ass:g_acc}(b) bounds the rates of decay of the weights in our linear smoothers.
\begin{theorem}
\label{thm:g_acc}
    Let $\hat\transfunc$ be estimated as using Algorithm \ref{alg:g_est} with $\{Y_i\}_{i\in I_1}$, sorted and then used as our evaluation points and linear smoothers used for regressions in Algorithm \ref{alg:h_est}.
    Then provided Assumptions \ref{ass:h_acc} \& \ref{ass:g_acc} hold we have that for $\delta\in(0,e^{-1})$ and sufficiently large $n$, w.p. at least $1-\delta$,
    \begin{align*}
        &|\hat \transfunc(y|\bm x)-\transfunc^*(y|\bm x)|\leq 2\lrbr{\eta^{-1}\eps_{\contrastfunc}(n,\delta/(2n))+\xi^{-1}\eps_{\bm w}(n,\delta/(2n))}.
    \end{align*}
\end{theorem}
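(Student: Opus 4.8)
The plan is to convert the pointwise accuracy of $\hat{\contrastfunc}$ from Proposition~\ref{prop:h_acc} into accuracy of $\hat{\transfunc}$ via the monotonicity structure, in three stages. First I would obtain a \emph{uniform-over-$y_1$} bound on $|\hat{\contrastfunc}(y,y^{(l)}|\bm x)-\contrastfunc^*(y,y^{(l)}|\bm x)|$ across the $p \leq n$ evaluation points $\{Y_i\}_{i\in I_1}$. Naively this costs a union bound, replacing $\delta$ by $\delta/(2n)$ — which is exactly the argument appearing in the theorem statement. Then, since $\contrastfunc^*(y,\cdot|\bm x)$ is monotone and the isotonic projection $\tilde{\bm\alpha}$ is the $\ell_2$-closest isotonic vector, I would invoke the result of \citet{Yang2019} cited in the text to argue $\|\tilde{\bm\alpha}-\bm\alpha^*\|_\infty$ (where $\alpha^*_l = \contrastfunc^*(y,y^{(l)}|\bm x)$) is no worse than $\|\hat{\bm\alpha}-\bm\alpha^*\|_\infty$, so the isotonic projection does not degrade the bound; call the resulting uniform bound $\epsilon^{\mathrm{unif}} := \eta^{-1}\eps_{\contrastfunc}(n,\delta/(2n)) + \xi^{-1}\eps_{\bm w}(n,\delta/(2n))$ up to constants (the precise split of the two terms is tracked below).

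Second, I would translate a uniform bound on $|\tilde\alpha_l - \contrastfunc^*(y,y^{(l)}|\bm x)|$ into a bound on $|\hat{\transfunc}(y|\bm x) - \transfunc^*(y|\bm x)|$. Write $y_1^* := \transfunc^*(y|\bm x)$, so $\contrastfunc^*(y,y_1^*|\bm x)=0$. The estimator picks $l^* = \argmin_l |\tilde\alpha_{l}|$, hence $|\tilde\alpha_{l^*}| \le |\tilde\alpha_{l}|$ for the index $l$ whose evaluation point is closest to $y_1^*$; combined with the uniform bound this forces $|\contrastfunc^*(y, y^{(l^*)}|\bm x)| = |\contrastfunc^*(y,y^{(l^*)}|\bm x) - \contrastfunc^*(y,y_1^*|\bm x)|$ to be small. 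Now Assumption~\ref{ass:g_acc}(a) gives a lower bound $\eta$ on the density $\partial_{y_1}\contrastfunc^*(y, y_1|\bm x) = \ccdf[1]'(y_1|\bm x)$ in a ball $B_s(y_1^*)$, so $\contrastfunc^*(y,\cdot|\bm x)$ is bilipschitz there with constant $\eta$; inverting this Lipschitz lower bound converts the smallness of $|\contrastfunc^*(y,y^{(l^*)}|\bm x)|$ into $|y^{(l^*)} - y_1^*| \le \eta^{-1}(\text{bound})$, which is precisely where the $\eta^{-1}$ factor enters. One must also control the \emph{grid resolution}: since the evaluation points are the sorted treated responses $\{Y_i\}_{i\in I_1}$, the gap between consecutive points near $y_1^*$ is $O(\eps_{\bm w})$ with the stated probability (this is the role of Assumption~\ref{ass:g_acc}(b), bounding $\max_i w_{\ccdf;i}$, together with $\eta>0$ ensuring points actually land near $y_1^*$), contributing the $\xi^{-1}\eps_{\bm w}$ term. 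The "sufficiently large $n$" caveat is needed to ensure $y^{(l^*)}$ and the relevant grid points lie inside $B_s(y_1^*)$ so the density lower bound applies.

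Third, I would assemble the pieces with a single union bound over the $O(n)$-many high-probability events (the $p$ instances of Proposition~\ref{prop:h_acc}, the event in Assumption~\ref{ass:g_acc}(b), and the grid-resolution event), each at level $\delta/(2n)$ roughly, to get total failure probability $\le \delta$, and collect constants into the factor $2$. The main obstacle I anticipate is the \textbf{grid-resolution / empirical-quantile step}: one must show that among the random evaluation points $\{Y_i\}_{i\in I_1}$ there is one within $O(\eps_{\bm w})$ of $y_1^*$ on both sides with high probability, which requires relating the spacing of order statistics of $Y|A=1$ near $y_1^*$ to the weight-decay rate $\eps_{\bm w}$ — this couples the randomness of the data used to build the grid with the density lower bound $\eta$, and is the least mechanical part. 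The inversion of the Lipschitz bound and the union bounds are routine; the monotonicity/isotonic-projection step is immediate given the cited \citet{Yang2019} result.
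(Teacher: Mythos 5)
Your first stage (union bound over the $\le n$ evaluation points at level $\delta/(2n)$, then invoking \citet{Yang2019} so that the isotonic projection does not worsen the sup-norm error) matches the paper, and your argmin-plus-density-inversion idea for extracting $|\hat{\transfunc}-\transfunc^*|\le \eta^{-1}(\cdot)$ is essentially the paper's Theorem~\ref{thm:single_g_acc_from_cdf}. The gap is in how you handle the error away from the grid points, and it concerns the term $\xi^{-1}\eps_{\bm w}(n,\delta/(2n))$. You interpret this as a \emph{grid-resolution} term, claiming that the spacing of the sorted treated responses near $y_1^*=\transfunc^*(y|\bm x)$ is $O(\eps_{\bm w})$ and that this is the role of Assumption~\ref{ass:g_acc}(b). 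That is not what the assumption says and not what the term is. Assumption~\ref{ass:g_acc}(b) bounds the \emph{weights of the linear smoothers}, and its consequence (Proposition~\ref{prop:h_step}) is that the piecewise-constant estimator $\hat{\contrastfunc}(y_0,\cdot|\bm x)$ has maximum \emph{jump size} at most $\xi^{-1}\eps_{\bm w}(n,\delta/n)$ at its discontinuity points (each jump is a single weight $w_j$ or $w_{\contrastfunc;i}$ inflated by at most $\hat\pi^{-1}\le\xi^{-1}$). The paper then combines the grid-point accuracy, this jump bound, and monotonicity of both $\tilde{\contrastfunc}$ and $\contrastfunc^*$ in a sandwiching argument (Proposition~\ref{prop:unif_bound}) to get a bound on $\sup_{y_1\in\responsespace}|\tilde{\contrastfunc}(y_0,y_1|\bm x)-\contrastfunc^*(y_0,y_1|\bm x)|$; that uniform bound is what feeds into Theorem~\ref{thm:single_g_acc_from_cdf}, where one needs control at $y_1^*$ itself, which is generically not a grid point.

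Your alternative route — find a grid point within $O(\eps_{\bm w})$ of $y_1^*$ and compare $|\tilde\alpha_{l^*}|$ to $|\tilde\alpha_{l_0}|$ at that point — would not go through as stated. First, there is no reason the order-statistic spacing of $\{Y_i\}_{i\in I_1}$ near $y_1^*$ should be related to the smoother weight-decay rate $\eps_{\bm w}$; the natural spacing scale is of order $1/n$ under a density lower bound, and no assumption ties it to $\eps_{\bm w}$. Second, even granting a nearby grid point $y^{(l_0)}$, converting $|y^{(l_0)}-y_1^*|$ into smallness of $|\contrastfunc^*(y,y^{(l_0)}|\bm x)|=|\ccdf[1](y^{(l_0)}|\bm x)-\ccdf[1](y_1^*|\bm x)|$ requires an \emph{upper} bound on the conditional density of $Y\mid X=\bm x,A=1$ near $y_1^*$, which is nowhere assumed (Assumption~\ref{ass:g_acc}(a) is a lower bound, used in the opposite direction). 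Third, the samples $\{Y_i\}_{i\in I_1}$ are drawn from the marginal treated distribution, not from $\ccdf[1](\cdot|\bm x)$, so controlling their spacing in the relevant conditional quantile scale needs further assumptions. The step-size/sandwiching argument avoids all three issues, which is why the paper takes that route; to repair your proof you should replace the grid-resolution step with the bound on the maximum jump of $\hat{\contrastfunc}$ and the monotone interpolation argument of Proposition~\ref{prop:unif_bound}.
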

From this result we see that if our weights decay at rate faster than $\eps_{h}(n,\delta)$ then this error will be dominated by the $\eps_{\contrastfunc}$ term. We believe this to hold in most cases and show that it does comfortably when using Nadaraya-Watson (NW) estimation \citep{Nadaraya1965,Watson1964} with a box kernel in Appendix \ref{app:nw_weight_decay}.
Furthermore, if the dependence on $\delta$ in both terms is of the form $\log^{c}(1/\delta)$ for some $c>0$ then we obtain the same rate of estimation as for $\contrastfunc^*$ up to polylog factors. This means we translate our desirable double robustness $\hat\contrastfunc$ over to $\hat\transfunc$. We also obtain finite sample bounds on $\E[|\hat g(Y|\bm x)-\transfunc^*(Y|\bm x)|~\mid A=0,\hat g]$ with high probability and present this in Appendix \ref{app:g_exp_acc}.
\section{Numerical experiments}\label{sec:results}
We now apply our approach to a series of simulated and real data scenarios in order to demonstrate the utility of our estimand and the effectiveness of our estimation procedure. For these, we use NW estimation as our regression procedure throughout. See appendix \ref{app:nw_estimation} for details on NW estimation.
\subsection{Simulated experiment} \label{sec:simulated_results}
In this section, we test our method's performance in terms of our estimator's mean absolute error under simulated scenarios.\footnote{Code implementation can be found at: \href{https://github.com/joshgivens/ConditionalOutcomeEquivalence}{github.com/joshgivens/ConditionalOutcomeEquivalence}} In each scenario we test against a separate estimator which estimates the two CCDFs separately and simply takes their difference, an IPW pseudo-outcome estimator detailed in Appendix \ref{app:IPW}, the CQTE estimator of \citet{Kallus2023}, and the oracle DR estimator where $\hat\pseudo$ is replaced with $\pseudo$ (i.e. exact $\pi,\ccdf$ are used).
In this experiment, we return back to the set-up of example \ref{ex:illustrative}. We now change the frequency of the sine term by taking
\begin{align*}
    Y|X=x,A=0&\sim \gaussianDistribution(\sin(\gamma\pi x),~ 1^2), &
    Y|X=x,A=1&\sim \gaussianDistribution(2\sin(\gamma\pi x),~ 2^2),\\
    \pi(x)&=0.4\sin(\gamma\pi x)+0.5.
\end{align*}
for $\gamma\in[0,10]$ so that increasing $\gamma$ imitates decreasing smoothness of our nuisance parameters.
In our experiments half the samples are used to estimate the propensity score and CCDFs and the other half are used to regress against the pseudo-outcome. 
Our estimate $\hat\transfunc$ is then compared against $\transfunc^*$ using a hold-out testing set. This process is repeated 500 times with new training data on each run. From this, a Monte-Carlo estimate of $\E_{\hat g} [\E_{X}[\E_{Y|X,A=0}[|\hat \transfunc(Y|X)-\transfunc^*(Y|X)|]]]$ is produced alongside 95\% confidence intervals (CIs).
In our first experiment, we let $2n=1000$ and vary $\gamma$ in $[0,10]$. In our second experiment, we let $\gamma=6$ and $2n$ vary in $[200,5000]$. The results of this are shown in Figure \ref{fig:ill_example_accruacy}.
\begin{figure}[H]
    \centering
    \begin{subfigure}{0.4\textwidth}
    \includegraphics[width=\textwidth]{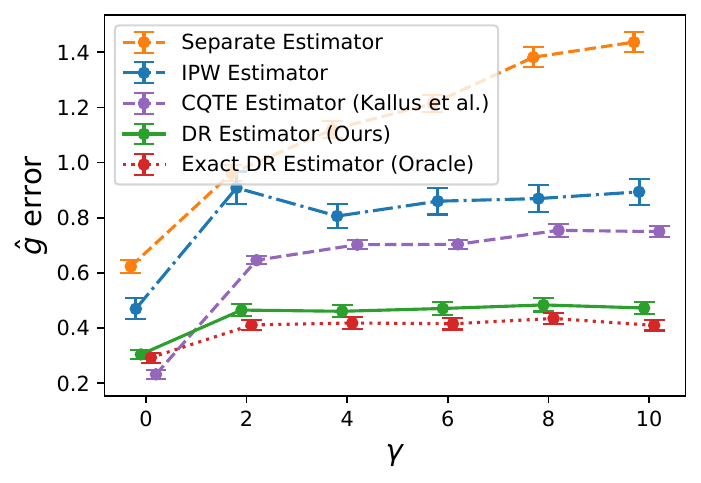}
    \end{subfigure}
    \begin{subfigure}{0.4\textwidth}\includegraphics[width=\textwidth]{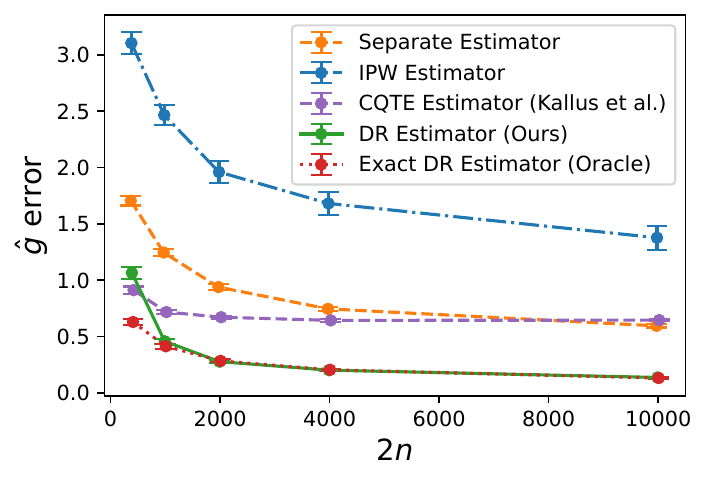}
    \end{subfigure}
    \caption{Mean absolute error with 95\% CIs for various estimators. The left plot has fixed sample size ($2n=1000$) and increasing $\gamma$. The right plot has $\gamma=6$ and increasing sample size.}
\label{fig:ill_example_accruacy}
\end{figure}
We can see that the average error decreases as the sample size $2n$ grows and mildly increases as $\gamma$ grows. This is expected as increasing $\gamma$ makes estimating the nuisance parameters more challenging.
The result shows that the proposed DR method achieves the best performance compared with the Separate and IPW estimators and is only marginally outperformed by the oracle estimator. Additionally we see that the method of \citet{Kallus2023} is much more affected by the increase in $\gamma$. This is because unlike the CQC, the CQTE has a complexity that depends on the frequency term ($\gamma$). We also observe much better performance as the sample size increases. We hypothesise that the plateau in the CQTE approach is due to the difficulty of estimating the reciprocal of the PDF, which causes the estimation to be unstable irrespective of sample size. Further simulated experiments including 10-dimensional $\bm x$ and linear CQC are given in Appendix \ref{app:simulated_experiments}. 
\subsection{Real world employment example}
To show the performance in the real-world scenarios, 
we use a dataset on an employment programme which has been studied in various prior works \citep{Autor2010, Autor2017, Powell2020}. Within the programme, some participants were given job placements or temporary help jobs while others received no intervention. Participants' earnings were then monitored over the next 8 quarters following their enrolments. We take their net earnings as our response ($Y$) and the employment intervention as the treatment ($A=1$).
We use each participant's age at their entry to the study as our covariate ($X$). We fit our quantile comparator function on 2,000 participants. Figure \ref{fig:Employment} shows our estimate of $\Delta^*(y|\bm x)=\transfunc^*(y|\bm x)-y$  for various values of $(y,\bm x)$.

\begin{figure}[ht]
    \centering
    \begin{subfigure}{0.38\textwidth}
    \includegraphics[width=\textwidth]{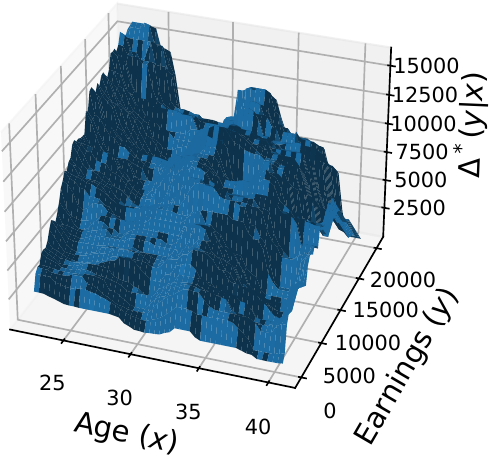}    
    \end{subfigure}
    \begin{subfigure}{0.52\textwidth}
    \includegraphics[width=\textwidth]{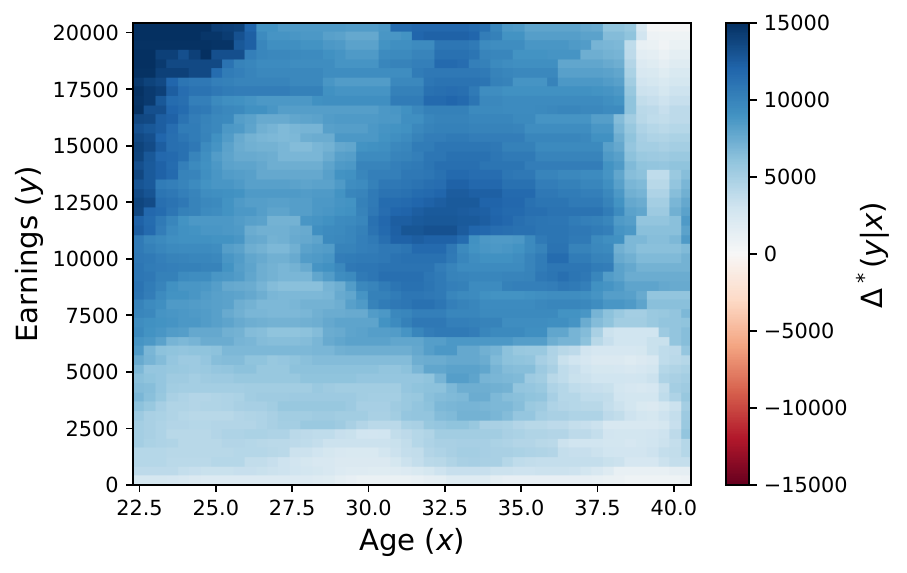}    
    \end{subfigure}
    \caption{Surface and heat plot of $\quantilediff^*(y|\bm x)$ for our employment data with $X=$Age, $Y$=Income.}
    \label{fig:Employment}
\end{figure}
We see that participants around age 23 and between ages 32-37 benefit most from this scheme, as indicated by the darker colour on the heat map. Additionally, the lower quantile of the income distribution (wage $\le$ \$7500) shows the least change, indicating that wage improvements primarily occur for the higher income group. For participants aged 40, there appears to be little change in outcomes overall. To demonstrate our approach in a medical setting, we apply our method to evaluate the effectiveness of a colon cancer treatment, as detailed in Appendix \ref{app:colon}. For comparative purposes, we also provide an estimate of the CQTE for this data in Appendix \ref{app:employment_cqte}.

\section{Limitations}
The theory provided here gives a strong foundation and motivation for framing our problem in this particular manner. There is however a great deal more to be explored in this area from a theoretical perspective. For example, one immediate improvement would be to give a more general case where our weight decay (in Assumption \ref{ass:g_acc} (b) for Theorem \ref{thm:g_acc}) is sufficiently fast. In addition more work needs to be done exploring the relationship between the smoothness of $\transfunc^*$ and the smoothness $\contrastfunc^*$. Smoothness in $\contrastfunc^*$ appears to be a stronger condition so ideally we would like to make theoretical statements directly on the smoothness of $\transfunc^*$. Interestingly our experiments on synthetic data do seem to suggest that it is the complexity of $\transfunc^*$ which drives the estimation rate as in these experiments $\contrastfunc^*$ increases in complexity while $\transfunc^*$ remains constant. Additionally, changing our experiment in Section \ref{sec:simulated_results} to have uniform response so that both $\contrastfunc^*,\transfunc^*$ are constant rather than just $\transfunc^*$, seems to give no material improvement to the performance of our estimator (see Appendix \ref{app:unif_experiment}).

Another limitation of our current estimation procedure is that it requires learning an estimand and then inverting it to obtain our final estimator. While this process is relatively simple, it could be streamlined and made more computationally efficient if we could produce a more direct estimator similar to the DR-Learner for CATE \citep{Kennedy2023b} or the CQTE estimator in \citet{Kallus2023}.

Finally, while we have been able to provide more concrete examples of smoothness for the CQC these are still limited to the case of a deterministic treatment effect which we would like to expand upon. This is closely related to a more general limitation with quantile-based estimands, the CQC included, in that they lack meaningful interpretability for the individual. While the CATE can be viewed as the expected difference in an individual's outcome on and off the treatment, no such individual-level interpretation exists for the CQC or indeed any other estimand trying to learn higher level distributional information than the mean. As a result the CATE is still a more naturally interpretable estimand. To facilitate this interpretation however, one still needs to make assumptions about a lack of confounding between the treatment assignment and the potential outcomes, which are only verifiable in certain restrictive scenarios \cite{VanderWeele2008}.

\section{Conclusion}
In this paper we have introduced a new treatment effect estimand, the conditional quantile comparator and demonstrated its efficacy both in terms of its doubly robust estimation, and its ability to provide valuable data insights. This is a promising direction as it allows quantile-based treatment effect exploration to ``keep up" with the CATE in terms of estimation quality offering more flexibility as to which estimand can be used to best describe the data. For these reasons, we see the CQC as an exciting and worthwhile new direction within the HTE framework.

\begin{ack}
Josh Givens was supported by a PhD studentship from the EPSRC Centre for Doctoral Training in Computational Statistics and Data Science (COMPASS).
\end{ack}

\bibliographystyle{apalike}
\bibliography{ref}
\newpage
\appendix
\section{Additional details}
\subsection{Notation table}\label{app:notation}
\begin{longtable}[H]{p{2.1cm}p{6.5cm}p{4cm}}
\caption{Table of notation.\label{table:notation}}\\
    \toprule
    Notation & Definition & Description\\
    \midrule
    \endfirsthead
    \toprule
    Notation & Definition & Description\\
    \midrule
    \endhead
    \hline
    \endfoot
    \hline
    \endlastfoot
    $Y$ & RV on $\responsespace\subseteq\R$ & Response\\
    $X$ & RV on $\covariatespace\subseteq\R^d$ & Covariates\\
    $A$ & RV on $\{0,1\}$ & Treatment\\
    $Z$ &$(Y,X,A)$ &  \\
    $\pi(\bm x)$ & $\prob(A=1|X=\bm x)$ & Propensity Score\\
    $\ccdf(y|\bm x)$ & $\prob(Y\leq y|X=\bm x, A=a)$& Conditional CDF (CCDF)\\
    $\invccdf[\zerone](\alpha|\bm x)$ & $\inf\lrbrc{y\in\responsespace|\ccdf(y|\bm x)\geq\alpha}$ & Conditional quantile function\\
    $\cate(\bm x)$ & $\E[Y|X=\bm x,A=1]-\E[Y|X=\bm x,A=0]$ & Conditional average treatment effect (CATE)\\
    $\cqte(\alpha|\bm x)$ & $\invccdf[1](\alpha|\bm x)-\invccdf[0](\alpha|\bm x)$ & Conditional quantile treatment effect (CQTE)\\
    $\rsamp$ & $\{Z_i\}_{i=1}^{2n}$ & IID data sample\\
    $I_a$ & $\{i\in[2n]|A_i=1\}$& Treatment $a$ index\\
    $\rsamp_a$ & $\{Z_i\}_{i\in I_a}$ & \\
    $n_a$ & $|I_a|$ & \\
    $[n]$ & $\{1,\dotsc,n\}$ &\\
    $\|\bm x\|$ & $\sqrt{\inv{d}\sum_{i=1}^dx_i^2}$& Euclidean Norm\\
    $\|\bm x\|_1$ & $\inv{d}\sum_{j=1}^d|x_j|$ &$1$-norm\\
    $\|\bm x\|_{\infty}$ & $\max_{j\in[d]}|w_j|$ &$\infty$-norm\\
    $\transfunc^*(y|\bm x)$& $\invccdf[1](\ccdf[0](y|\bm x)|\bm x)$ & Conditional quantile comparator CQC\\
    $\contrastfunc^*(y_0,y_1|\bm x)$ & $\ccdf[1](y_1|\bm x)-F_{Y|X,0(y_0|\bm x)}$& CCDF contrasting function\\
    $\quantilediff^*(y|\bm x)$ & $\transfunc^*(y|\bm x)-y$& Quantile difference function\\
    $\hat f$ & Estimate of $f/f^*$ &\\
    $\pseudo_{y_0,y_1}(y,\bm x,a)$ & 
    $\begin{array}{l}
    \frac{a-\pi(\bm x)}{\pi(\bm x)(1-\pi(\bm x))}\lrbrc{\one\{y\leq y_{a}\}-\ccdf(y_{a}|\bm x)}\\
    ~+~\ccdf[1](y_1|\bm x)-\ccdf[0](y_0|\bm x)\end{array}$& The Pseudo-Outcome \\ 
    $\hat\varphi_{y_0,y_1}(y,\bm x,a)$ & $\varphi$ with $\pi,\ccdf$ replaced by $\hat \pi,\estccdf$ &\\
    $\isplitone,\isplittwo$ & $\{1,\dotsc,n\},\{n+1,\dotsc,2n\}$ & Sample splitting indices\\
    $\rsampsplitone,\rsampsplittwo$ & $\{Z_i\}_{i\in\isplitone},\{Z_j\}_{j\in\isplittwo}$ & Data split\\
    $\bm w\in\R^{\nsplittwo}$ & $w_j\equiv w_j(\bm x)$& $\hat \contrastfunc$ estimation weights\\
    $\hat \contrastfunc(y_0,y_1|\bm x)$ & $\sum_{j\in \isplittwo}w_j(\bm x)\hat\pseudo_{y_0,y_1}(Z_j)$ & $\contrastfunc^*$ estimate\\
    $\contrastfunc'(y_0,y_1|\bm x)$ & $\sum_{j\in \isplittwo}w_j(\bm x)\pseudo_{y_0,y_1}(Z_j)$ & Oracle $\contrastfunc^*$ estimate\\
    $ \|\phi\|_{\bm w^s}$& $\sqrt{\|\bm w^s\|^{-1}_1\sum_{i\in\isplittwo} w_j^s\,\E\lrbrs{\phi(Z)^2|X=X_i}}$ & \\
    $\xi$ & $\pi(\bm x')\in[\xi,1-\xi]$ for all $\bm x'\in\covariatespace$  & \\
    $\eps_{\hat\pseudo}(n,\delta)$ & $\|\hat\varphi-\varphi\|_{\bm w^2}\leq\eps_{\hat\pseudo}(n,\delta)$ w.p. $1-\delta$& $\hat\pseudo$ error bound w.h.p.\\
    $\eps_\alpha(n,\delta)$ & $\|\hat\pi-\pi\|_{\bm w^2}\leq\eps_\alpha(n,\delta)$ w.p. $1-\delta$& $\hat\pi$ error bound w.h.p.\\
    $\eps_\beta(n,\delta)$ & $\|\estccdf(y|.)-\ccdf(y|.)\|_{\bm w^2}\leq\eps_\beta(n,\delta)$ w.p. $1-\delta$& $\estccdf$ error bound w.h.p.\\
    $\gamma$ &Smoothness of $\contrastfunc^*(y_0,y_1|.)$ &\\
    $\eps_\gamma(n,\delta)$ & $\begin{array}{l}\E[\contrastfunc'(y_0,y_1|\bm x)-\contrastfunc^*(y_0,y_1|\bm x)|X_{\isplittwo},\rsampsplitone]|\\
    ~+~\sqrt{2\log(1/\delta)/n}\,\|\bm w\|\|\varphi-h(y_0,y_1|.)\|_{\bm w^2}\\
    ~+~2\xi^{-1}\|\bm w\|_{\infty}\eps_{\delta}(n)/3\end{array}$& $h'$ error bound w.h.p.\\
    $\eps_{T_n}(n,\delta)$&$\sqrt{2\log(1/\delta)/n}\eps_{\bm w}(n,\delta)\eps_{\hat\pseudo}(n,\delta)$& $T_n(\bm x)$ error bound w.h.p.\\
    $\eps_{\contrastfunc}(n,\delta)$ & $\begin{array}{l}\eps_{T_n}(n,\delta/4)+(\eps_\alpha\cdot\eps_\beta)(n,\delta/4)+\\\eps_\gamma(n,\delta/4)\end{array}$& $\hat\contrastfunc$ error bound w.h.p.\\
    $\iso(p)$ & $\{\bm \alpha\in\R^p|\alpha_l\leq\alpha_{l+1}\forall l\in[p-1]\}$ & Set of isotonic vectors\\
    $\proj_{\iso(p)})(\bm \alpha')$ & $\argmin_{\bm \alpha\in\iso(p)}\|\bm \alpha-\bm \alpha'\|$ & Isotonic projection\\
  $\bm w_{\ccdf}\in\R^{\nsplitone}$ & $w_{\ccdf;i}(\bm x)\equiv w_{\ccdf;i}(\bm x',X_{\isplitone})$& $\estccdf$ estimation weights\\
    $\estccdf(y|\bm x')$ & $\sum_{i\in\isplittwo}w_{\ccdf}(\bm x)\one\{Y_j\leq y\}$ & CCDF estimate\\
    $\eta$ &$\ccdf[1](y'|\bm x)>\eta~\forall,y'\in B_{s}(\transfunc^*(y|\bm x))$ & Density lower bound\\
    $\eps_{\bm w}(n,\delta)$ & $\max\{\|\bm w(\bm x)\|_{\infty},\|\bm w_{\ccdf}(X)\|_{\infty}\}$ w.p. $1-\delta$& Variance lower bound w.h.p.\\
    \toprule
    \multicolumn{3}{c}{Appendix Notation}\\
    $k(\bm x,\bm x')$ & & NW estimation kernel\\
    $m_f(\bm x)$ & $\E[f(Z)|X=\bm x]$&\\
    $\hat m_f(\bm x)$ &$\sum_{j\in\isplittwo}w_j(\bm x) f(Z_j)$ & \\
    $\hat b(\bm x)$&  $m_{\hat\varphi-\varphi}(\bm x)$ &\\
    $\eps_\delta(n)$ & $\max\{\log(1/\delta)/n,1/n\}$ & \\
    $[n]_0$ & $\{0,1,\dotsc,n\}$ & \\
    $f(\Delta y)$ & $\lim_{\eps\downarrow0} f(y)-f(y-\eps)$ & Step size of $f$ at $y$\\
\end{longtable}
\subsection{Nadaraya-Watson estimation}\label{app:nw_estimation}
Throughout, we use NW estimation as our standard non-parametric regression technique. For a kernel $k:\covariatespace\times\covariatespace\rightarrow[0,\infty)$, IID data sample $\rsamp \coloneqq \{(Y_i,X_i)\}_{i=1}^n$, and $\bm x\in\covariatespace$ the NW estimate of $\E[Y|X=\bm x]$ is given by
\begin{align*}
    \sum_{i=1}^n\frac{k(\bm x,X_i)}{\sum_{j=1}^nk(\bm x,X_j)}Y_i.
\end{align*}

Applying this to our pseudo-outcome regression in Algorithm \ref{alg:h_est}, we get that 
\begin{align}\label{eq:kernel_pseudo}
    \hat\contrastfunc(y_0,y_1|\bm x)&\coloneqq \sum_{i\in\isplittwo}\frac{k(\bm x,X_i)}{\sum_{j\in\isplittwo}k(\bm x,X_j)}\hat\pseudo_{y_0,y_1}(Y_i,X_i,A_i)\\
    \label{eq:h_est_nw}&=\sum_{i\in\isplittwo}\frac{k(\bm x,X_i)}{\sum_{j\in\isplittwo}k(\bm x,X_j)}\bigg(\frac{A_i-\hat\pi(X_i)}{\hat\pi(X_i)(1-\hat\pi(X_i))}\\
    &\qquad\qquad\qquad\qquad\qquad\quad
    \cdot~\lrbrc{\one\{Y_i\leq y_{A_i}\}-\hat F_{Y|X,A_i}(y_{A_i}|X_i)}\nonumber\\
    &\qquad\qquad\qquad\qquad\qquad\quad+~\estccdf\lrbr{y_1|X_i}-\estccdf[0]\lrbr{y_0|X=X_i}\bigg)\nonumber
\intertext{where for any $\bm x'\in\covariatespace$, $\zerone\in\{0,1\}$}
\hat\pi(\bm x')&\coloneqq\sum_{i\in\isplitone}\frac{k(\bm x',X_i)}{\sum_{j\in\isplitone}k(\bm x',X_j)}\one\{A_i=1\}\label{eq:prop_NW}\\
\estccdf(y_\zerone|\bm x')&\coloneqq\sum_{i\in\isplitone}\frac{k(\bm x',X_i)}{\sum_{j\in\isplitone}k(\bm x',X_j)\one\{A_j=\zerone\}}\one\{Y_i\leq y_\zerone\}\one\{A_i=\zerone\}\label{eq:cdf_NW}.
\end{align}
Note that for our theoretical results we do not specify how our nuisance parameters are estimated and the results only depend on the accuracy of our estimation of the nuisance parameters.

\begin{remark}
    We can re-write \eqref{eq:h_est_nw} as:
    \begin{align*}
         \hat\contrastfunc(y_0,y_1|\bm x)&\coloneqq \sum_{i\in \cal I_1}\frac{k(\bm x,X_i)}{\sum_{i=1}^nk(\bm x,X_i)}\bigg(\inv{\hat\pi(X_i)}\lrbrc{\one\{Y_i\leq y_1\}-\estccdf[1](y_1|X_i)}\\
         &~\qquad\qquad\qquad\qquad\qquad\quad+\estccdf[1](y_1|X_i)-\estccdf[0](y_0|X_i)\bigg)\\
        &\quad~~-~\sum_{i\in\cal I_0}\frac{k(\bm x,X_i)}{\sum_{i=1}^nk(\bm x,X_i)}\bigg(\inv{1-\hat\pi(X_i)}\lrbrc{\one\{Y_i\leq y_0\}-\estccdf[0](y_0|X_i)}\\
        &~\qquad\qquad\qquad\qquad\qquad\quad+\estccdf[0](y_0|X_i)-\estccdf[1](y_1|X_i)\bigg)
    \end{align*}
    Which can be helpful in terms of the practical implementation.
\end{remark}

\subsection{IPW pseudo-outcome estimator}\label{app:IPW}
Another pseudo-outcome one can use for estimating the treatment effect is the based on inverse propensity weighting. Specifically we can take our pseudo-outcome to be
\begin{align}\label{eq:ipw_pseudo}
    \psi_{y_0,y_1}(Y',X',A')\coloneqq \frac{A'-\pi(X')}{\pi(X')(1-\pi(X'))}\one\{Y'\leq y_{A'}\}.
\end{align}
If we regress against it using NW estimation, and also use NW estimation for our nuisance parameter estimation as in \eqref{eq:prop_NW} \& \eqref{eq:cdf_NW}, our estimate $\hat \contrastfunc$ will be increasing in $y_1$ and decreasing in $y_0$ meaning we do not need to perform any isotonic projection.

\subsection{Additional experimental details}
\label{app:experimental_details}

As the method of \citet{Kallus2023} is one for estimating the CQTE, we transform it into an estimator of the CQC (which we denote by $\hat\transfunc$) using the following formula where $\hat\tau_q(.|.)$ is our CQTE estimator

\begin{align*}
    \hat\transfunc(y|\bm x)&=\hat\tau_q\lrbr{F_{Y|X,0}(y|\bm x)|\bm x}+y.
\end{align*}
using the true CCDF, $F_{Y|X,0}$.

This is the inverse of equation \eqref{eq:cqc_cqte_relation} which defines the CQTE in terms of the CQC. Conversely we also tested transforming all our CQC estimators into CQTE estimators (using exactly equation \eqref{eq:cqc_cqte_relation} with $\hat \transfunc$ replacing $\transfunc^*$) and testing the accuracy on this space and found similar results.

Each experiment took no longer than 1 hour to run on a single 4 core CPU with 8GB of RAM. 

The bandwidths of the kernels for the NW estimation of the nuisance parameters were chosen by a limited grid search on additional simulated data by validating against the true value of the nuisance parameters. While this is unrealistic in practice it was done to make estimation of the nuisance parameters as strong as possible. This was to err on the side of caution as strong nuisance parameter estimation would naturally favour the baseline approaches. 

Code to implement our approach alongside Jupyter notebooks running our numerical experiments can be found in the supplementary materials. The code to implement the kernels is adapted from \url{https://github.com/wittawatj/kernel-gof/} which is free to use under the MIT Licence.

\section{Additional results}
\subsection{Simulation results}
\label{app:simulated_experiments}
We run additional simulated experiments in order to further test and explore our approach. Throughout, our overall experimental set-up is the same as in Section \ref{sec:simulated_results} 
\subsubsection{10-dimensional example}
To test our problem in higher dimensions we ran experiments where $X$ was 10-dimensional with $X$ uniform on $[-1,1]^{10}$. That is each component $X_j$ was independent with $X_j\sim U(-1,1)$. We then took
\begin{align*}
    Y|X=\bm x,A=a&\sim N(\sin(\gamma\pi (\bm\beta^\T \bm x),1)\\
    \pi(\bm x)&=0.4\sin(\gamma\pi (\bm\beta^\T \bm x))+0.5
\end{align*}
where $\bm \beta$ was randomly sampled from $N(0, (0.2)^2)$ and $\gamma\in[0,3]$. This gave the maximum gradient multiplied by $0.5\diam(\covariatespace)=\sqrt{d}$ close to $1$ making the rate of change of the CDFs and propensities similar to our 1-dimensional examples.

In our first experiment we take $2n=1000$ and $\gamma\in\{0,0.5,1,1.5,2,2.5,3\}$. The results of this are shown in Figure \ref{fig:10dim_freq} In our second experiment we take $\gamma=1$ and $2n\in\{200,500,1000,2000,5000\}$. The results of this are shown in Figure \ref{fig:10dim_sample}.

\begin{figure}[h]
    \centering
    \begin{subfigure}{0.45\textwidth}
    \includegraphics[width=\textwidth]{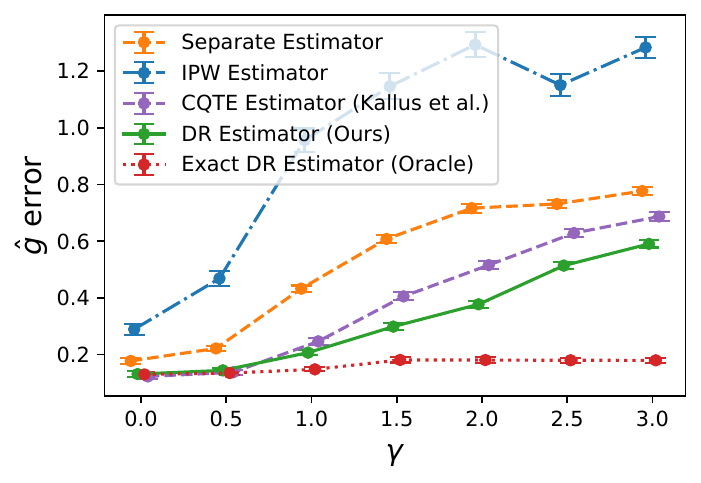}    
    \caption{Average error as $\gamma$ increases with 95\% CIs for various approaches.}
    \label{fig:10dim_freq}
    \end{subfigure}
    \begin{subfigure}{0.45\textwidth}
    \includegraphics[width=\textwidth]{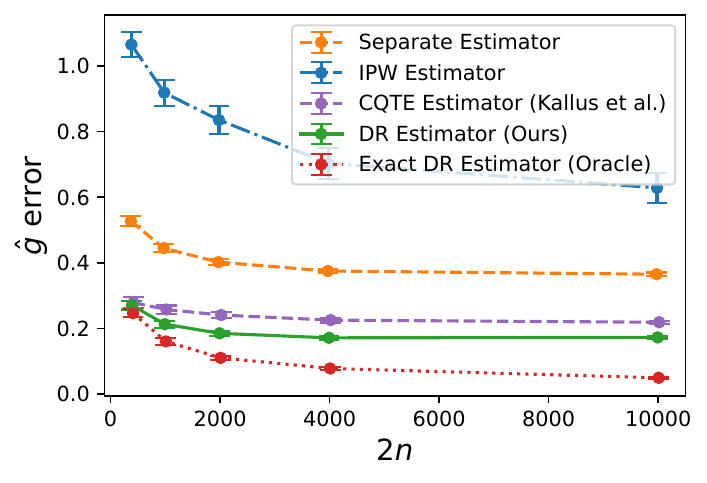}  \caption{Average error as sample size $n$ increases with 95\% CIs for various approaches.}
    \label{fig:10dim_sample}
    \end{subfigure}
        \caption{Estimation accuracy of 10-dimensional CQC estimation procedures for highly varying CCDFs and constant CQC with respect to $\bm x$.}
    \label{fig:10dim_accuracy}
\end{figure}

As we can see the DR approach performs the best with it performing close to oracle for low sample size and low frequency. As the frequency increases the DR estimator deviates from the oracle.

\subsubsection{Varying CQC}
In this experiment our set-up is
\begin{align*} 
Y|X,A=0&\sim N\left(\frac{\sin(\gamma\pi x)}{0.5x+1.5}, 1\right)\\ Y|X,A=1&\sim N\left(\sin(\gamma\pi x)+0.25x+0.75, (0.5x+1.5)^2\right) \\
\pi(x)&=0.4\sin(\gamma\pi x)+0.5\end{align*} 
for $\gamma\in[0,10].$ This gives $g^*=(y+0.5)(0.5x+1.5)$ which is still simpler than the individual CCDFs but now does depend upon $x$.

In our first experiment we take $2n=1000$ and $\gamma\in\{0,2,4,6,8,10\}$. The results of this are shown in figure \ref{fig:10dim_freq} In our second experiment we take $\gamma=6$ and $2n\in\{200,500,1000,2000,5000\}$. The results of this are shown in Figure \ref{fig:10dim_sample}.
\begin{figure}[h]
    \centering
    \begin{subfigure}{0.45\textwidth}
    \includegraphics[width=\textwidth]{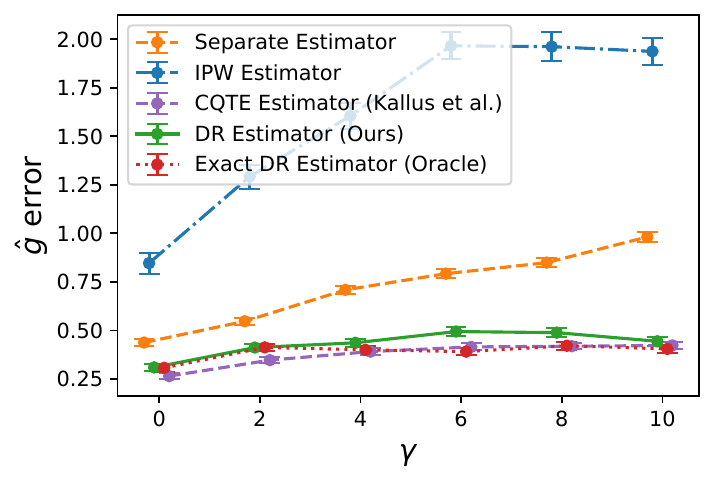}    
    \caption{Average error as $n$ increases with 95\% CIs for various approaches.}
    \end{subfigure}
    \begin{subfigure}{0.45\textwidth}
    \includegraphics[width=\textwidth]{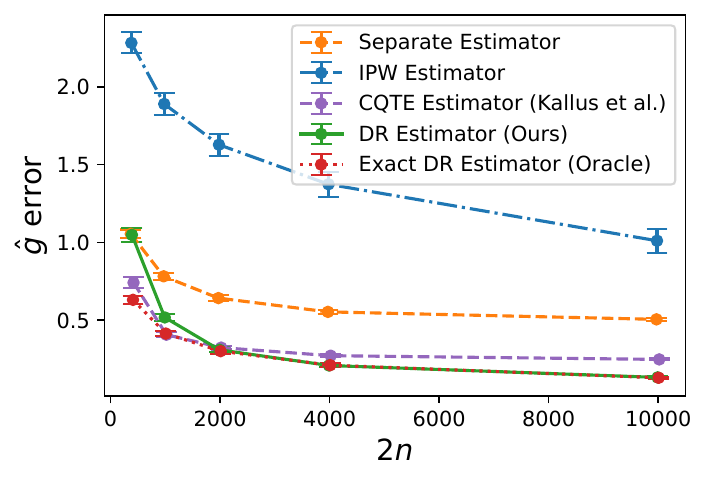}    
    \caption{Average error as $n$ increases with 95\% CIs for various approaches.}
    \end{subfigure}
    \caption{Estimation accuracy of CQC estimation procedures for highly varying CCDFs and linear CQC with respect to $x$.}
    \label{fig:Experiment10_accuracy}
\end{figure}

\subsubsection{Constant \texorpdfstring{$\contrastfunc^*$}{h}}\label{app:unif_experiment}
In our previous examples, while $\transfunc^*$ has been simple and or constant, $\contrastfunc^*$ has actually included the high frequency sine term. In this experiment we adjust our original illustrative example to give a constant $\contrastfunc^*$ as well. Specifically we take 
\begin{align*}
    Y|X=x,A=0&\sim \text{Unif}(\sin(\gamma\pi x),\sin(\gamma\pi x)+1), \\
    Y|X=x,A=1&\sim \text{Unif}(2\sin(\gamma\pi x),2\sin(\gamma\pi x)+2).
\end{align*}
for $\gamma\in[0,10]$. We also take the propensity to be $\pi(x)=0.4\sin(\gamma\pi x)+0.5$. In this case $\contrastfunc^*(y_0,y_1|\bm x)=\half y_1-y_0$ which does not depend on $\bm x$ for any $y_0,y_1\in\responsespace$.

In our first experiment we take $2n=1000$ and $\gamma\in\{0,2,4,6,8,10\}$. The results of this are shown in figure \ref{fig:10dim_freq} In our second experiment we take $\gamma=6$ and $2n\in\{200,500,1000,2000,5000\}$. The results of this are shown in Figure \ref{fig:10dim_sample}.
\begin{figure}[h]
    \centering
    \begin{subfigure}{0.45\textwidth}
    \includegraphics[width=\textwidth]{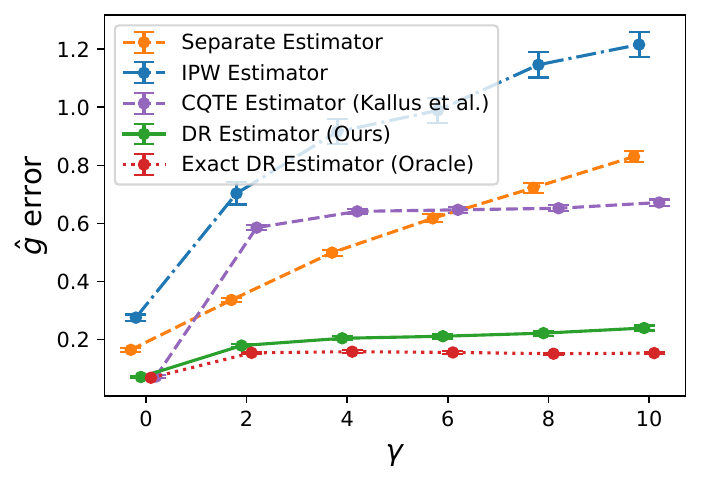}    
    \caption{Average error as $n$ increases with 95\% CIs for various approaches.}
    \end{subfigure}
    \begin{subfigure}{0.45\textwidth}
    \includegraphics[width=\textwidth]{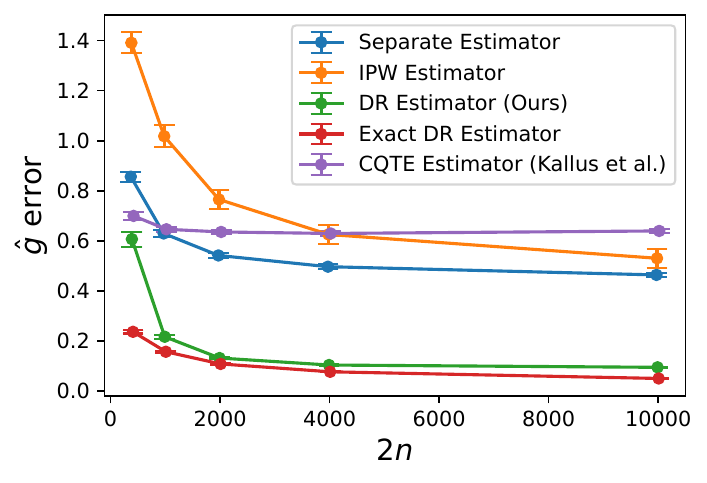}
    \caption{Average error as $n$ increases with 95\% CIs for various approaches.}
    \end{subfigure}
    \caption{Constant $\contrastfunc^*$}
    \label{fig:constant_h_accuracy}
        \caption{Estimation accuracy of CQC estimation procedures for highly varying CCDFs and constant $\contrastfunc^*$ and CQC.}
\end{figure}
As we can see we obtain very similar results to original illustrative example suggesting that our method is effectively exploiting simplicity in $\transfunc^*$ rather than in $\contrastfunc^*$.

\subsection{Employment example: Comparing to the CQTE}
\label{app:employment_cqte}
To further explore the interpretability of our estimator we provide an estimate of the CQTE for comparison. This can be seen in Figure \ref{fig:Employment_CQTE}.

\begin{figure}[H]
    \centering
    \begin{subfigure}{0.30\textwidth}
    \includegraphics[width=\textwidth]{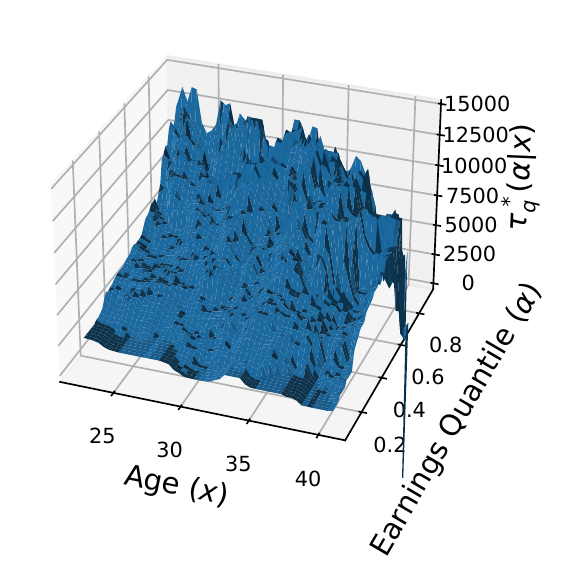} 
    \end{subfigure}
    \begin{subfigure}{0.41\textwidth}
    \includegraphics[width=\textwidth]{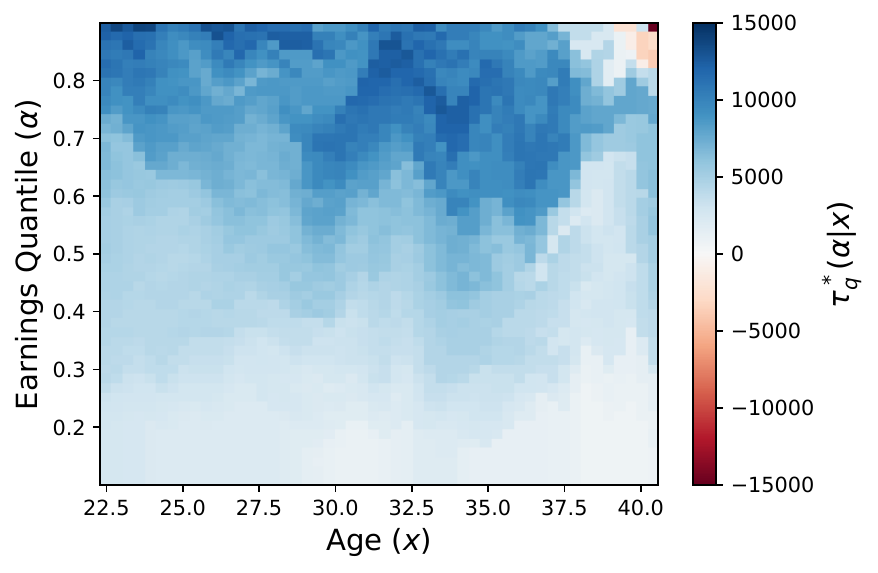}    
    \end{subfigure}
    \caption{Surface and heat plot of the CQTE, $\tau_q^*(\alpha|\bm x)$, for our employment data with $X=$Age, $\alpha$=Income Quantile.}
    \label{fig:Employment_CQTE}
\end{figure}

In these plots we can see that the interpretation of the CQTE is less immediate than for the case of the CQC. This is because, for each covariate value, the quantile value corresponds to a different untreated response making comparisons between various values of the covariates less direct. From this plot we do still see that higher income quantiles are associated with a greater increase in wages. Interestingly the value of the CQTE plummets 90\% at the highest quantile for individuals of age 40. This is likely an estimation error due to the limited data at higher quantiles as and higher ages.

\subsection{Colon cancer treatment}\label{app:colon} 
To demonstrate the effectiveness of our approach in medical settings, we apply it to a trial on the effect of colon cancer treatment on survival time/time to remission. This dataset was originally introduced in \citet{Laurie1989} and can be found in the ``survival" package in R and loaded with the line \verb|data(colon, package="survival")|. In this dataset 929 are randomised to either receive Placebo or Levamisole. They are then followed up for a period of up to 3329 days and the time till either death or recurrence of their cancer. For our analysis we take our response ($Y$) as the time to first of death or recurrence. For simplicity, when an individual makes it to the end of a trial without an event and is censored we take that to be the time of their event. Looking at the data censoring times are mostly at around 2,000-3,000 days while over 90\% of death or recurrence events that do occur, occur within 1,500 days. Therefore censored individuals will still have better responses than those who had events, as we would want. As our covariate, we took the patients' ages when joining the trial. We show a 3D plot and heat plot of the estimate of $\Delta(y|\bm x)=\transfunc^*(y|\bm x)-y$ in Figure \ref{fig:Colon}.
\begin{figure}[ht]
    \centering
    \begin{subfigure}{0.38\textwidth}
    \includegraphics[width=\textwidth]{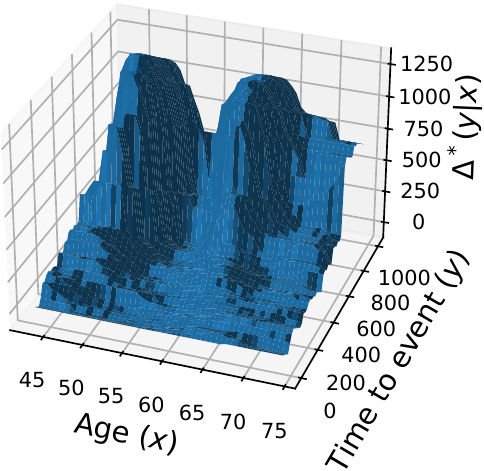}    
    \end{subfigure}
    \begin{subfigure}{0.52\textwidth}
    \includegraphics[width=\textwidth]{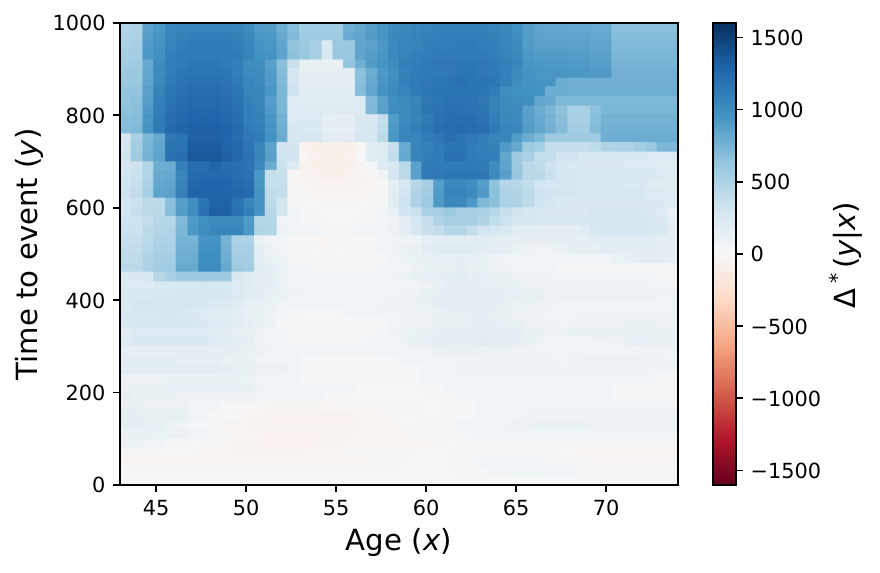}    
    \end{subfigure}
    \caption{Surface plot and heat plot of $\quantilediff(y|\bm x)$ over $y,\bm x$ for colon cancer trial data with $X=$Age, $Y$=Time to Event.}
    \label{fig:Colon}
\end{figure}

Interestingly from Figure~\ref{fig:Colon} we see that the treatment appears to have little effect on patients aged 53-57. For the reaming patients, we can see that there is relatively little effect on the time to event for events of 400 days or less while at around 500 days the difference in time to event jumps to 1000 days. This jump takes the time to event (TTE) to the time when censoring begins. This gives evidence that rather than the treatment delaying the time to event, it increases the proportion of the patients who have no event during the trial essentially fully treating those patients for the duration of the trial. This shows the value of the CQC it was able to reveal meaningful information about the treatment beyond simply its positive effect. On top of this, we are able to identify the censoring within our results without explicitly controlling for it in any way.

\section{Additional theory}
\subsection{Proof of \texorpdfstring{$\hat\contrastfunc$}{h_hat} accuracy}\label{app:h_acc}
We first define some additional notation.
For an output $f$ which depends upon $z=(y,x,a)$ define $m_f(\bm x)=\E[f(Z)|X=\bm x]$ and 
\begin{align*}
    \hat m_f(\bm x)=\sum_{j\in\isplittwo}w_jf(Z_j),
\end{align*} with $w_j$ defined as before so that $\hat m_{\hat\pseudo}(\bm x)$ is our estimator and $\hat m_{\pseudo}(\bm x)$ is the oracle estimator.

Additionally define \[\hat b(\bm x)\coloneqq m_{\hat\pseudo-\pseudo}(\bm x),\] in other words the bias in our estimate of the pseudo-outcome for a given $\bm x$. We will also work with $\hat m_b(\bm x)$ where we view $b$ as being a function of $z$.

Finally for $n \in \N$, $\delta\in(0,1)$, define
\begin{align*}
  \epsilonByNDeltaBase{n}{\delta}\coloneqq\begin{cases}
        \log(2/\delta)/n &\text{if $\delta\in (0,2/e]$}\\
        1/n&\text{ otherwise.}
    \end{cases}.
\end{align*}

\begin{theorem}[Stability Result]\label{thm:stability}
        For $\delta\in(0,1)$ we have that with probability at least $1-\delta$
\begin{align}\label{eq:firstBoundInThm:linearSmootherGeneralBound}
        |\hat m_{\pseudo}(\bm x)-m_{\varphi}(\bm x)|\leq& \E[\hat m_{\pseudo}(\bm x)-m_{\pseudo}(\bm x)|X_{\isplittwo},\rsampsplitone]|
        \\
        &+\sqrt{2\eps_{\delta}(n)}\|\bm w\|\|\varphi-h(y_0,y_1|.)\|_{\bm w^2}+\frac{2\|\bm w\|_{\infty}\eps_{\delta}(n)}{3\xi}=:\bound(\pseudo).\nonumber
\end{align}
Moreover, with probability at least $1-\delta$ we have
\begin{align}\label{eq:secondBoundInThm:linearSmootherGeneralBound}
    |\hat m_{\hat\pseudo}(\bm x)-m_{\pseudo}(\bm x)|&\leq {\bound(\pseudo)+|\hat m_{\hat b}(\bm x)|+\sqrt{2\eps_{\delta}(n)}\|\bm w\|_2\|\hat\pseudo-\pseudo\|_{\bm w^2}}{=:\proxyBound(\hat\pseudo)}.
\end{align}
\end{theorem}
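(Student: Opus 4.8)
The plan is to prove both displays by the usual bias--fluctuation decomposition of a linear smoother, applying Bernstein's inequality conditionally on the ``design'' $(X_{\isplittwo},\rsampsplitone)$, on which the weights $w_j$ --- and, for the second display, also the plugged-in nuisances $\hat\pi,\estccdf$ and hence $\hat b$ --- are deterministic, while $(Y_j,A_j)_{j\in\isplittwo}$ remain IID with their population conditional law given $X_j$ (this is exactly the decorrelation that sample splitting buys).

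For \eqref{eq:firstBoundInThm:linearSmootherGeneralBound}, decompose $\hat m_{\pseudo}(\bm x)-m_{\pseudo}(\bm x)$ into the conditional bias $\E[\hat m_{\pseudo}(\bm x)-m_{\pseudo}(\bm x)\mid X_{\isplittwo},\rsampsplitone]$, which is retained verbatim as the first summand of $\bound(\pseudo)$, and the conditional fluctuation $\hat m_{\pseudo}(\bm x)-\E[\hat m_{\pseudo}(\bm x)\mid X_{\isplittwo},\rsampsplitone]=\sum_{j\in\isplittwo} w_j\big(\pseudo_{y_0,y_1}(Z_j)-m_{\pseudo}(X_j)\big)$. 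Conditionally on the design this is a sum of independent, mean-zero terms, mean zero because $m_{\pseudo}(X_j)=\E[\pseudo_{y_0,y_1}(Z)\mid X=X_j]=\contrastfunc^*(y_0,y_1|X_j)$ by Proposition \ref{prop:pseudo_expectation}. Each term is bounded: the inverse-propensity factor in $\pseudo$ is at most $1/(\xi(1-\xi))\le 2/\xi$ by Assumption \ref{ass:h_acc}(a) and the bracketed indicator-minus-CCDF factor lies in $[-1,1]$, so $|w_j(\pseudo(Z_j)-m_{\pseudo}(X_j))|\le 2\|\bm w\|_{\infty}/\xi$; and its summed conditional second moment is $\sum_{j}w_j^2\,\E\big[(\pseudo(Z)-\contrastfunc^*(y_0,y_1|X_j))^2\mid X=X_j\big]$, which by the definition of $\|\cdot\|_{\bm w^2}$ equals $\|\bm w^2\|_1\,\|\pseudo-\contrastfunc^*(y_0,y_1|\cdot)\|_{\bm w^2}^2$ --- i.e.\ the quantity written $\|\varphi-h(y_0,y_1|.)\|_{\bm w^2}$ in the statement. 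Feeding these into Bernstein's inequality with the deviation level calibrated so that the failure probability is $\delta$ --- which is exactly what the two-case definition of $\eps_{\delta}(n)$ arranges, uniformly over $\delta\in(0,1)$ --- produces the remaining two summands of $\bound(\pseudo)$. Since the conditional bound holds for (almost) every value of $(X_{\isplittwo},\rsampsplitone)$, it holds unconditionally with probability at least $1-\delta$, even though $\bound(\pseudo)$ is itself random.

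For \eqref{eq:secondBoundInThm:linearSmootherGeneralBound}, write $\hat m_{\hat\pseudo}(\bm x)-m_{\pseudo}(\bm x)=\big(\hat m_{\pseudo}(\bm x)-m_{\pseudo}(\bm x)\big)+\hat m_{\hat\pseudo-\pseudo}(\bm x)$ and split the second term further as $\hat m_{\hat\pseudo-\pseudo}(\bm x)=\hat m_{\hat b}(\bm x)+\sum_{j\in\isplittwo} w_j\big((\hat\pseudo-\pseudo)(Z_j)-\hat b(X_j)\big)$, using $\E[(\hat\pseudo-\pseudo)(Z_j)\mid X_j,\rsampsplitone]=m_{\hat\pseudo-\pseudo}(X_j)=\hat b(X_j)$, valid because $\hat\pseudo$ depends on the sample only through the $\rsampsplitone$-measurable nuisance estimates. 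The first bracket is bounded by $\bound(\pseudo)$ via the first display. The last sum is, conditionally on the design, again a sum of independent, mean-zero, bounded terms (bounded in terms of $\|\bm w\|_{\infty}$ and $\xi$, since $\hat\pi\in[\xi,1-\xi]$ by Assumption \ref{ass:h_acc}(a) and $\estccdf\in[0,1]$), so a second application of Bernstein's inequality yields the variance term $\sqrt{2\eps_{\delta}(n)}\,\|\bm w\|\,\|\hat\pseudo-\pseudo\|_{\bm w^2}$ --- the norm $\|\cdot\|_{\bm w^2}$ now evaluated at $\hat\pseudo-\pseudo$, which is deterministic given $\rsampsplitone$ --- while its linear Bernstein term is of the same $\|\bm w\|_{\infty}\eps_{\delta}(n)$ order already present in $\bound(\pseudo)$ and is absorbed there. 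A union bound over the two Bernstein events then gives $\proxyBound(\hat\pseudo)$ with the stated probability.

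The main obstacle is the conditioning / cross-fitting bookkeeping: one must keep straight that given $(X_{\isplittwo},\rsampsplitone)$ the weights and every plugged-in nuisance (hence $\hat b$) are frozen while $(Y_j,A_j)_{j\in\isplittwo}$ stay IID, so that both fluctuation sums are genuinely conditionally independent, conditionally centred, and bounded --- only then does Bernstein apply cleanly, and only then does plugging in $\hat\pseudo$ not reintroduce dependence. Once that is in place, identifying $m_{\pseudo}=\contrastfunc^*$ via Proposition \ref{prop:pseudo_expectation}, identifying the variance proxies with the $\|\cdot\|_{\bm w^2}$ norms, and extracting boundedness from Assumption \ref{ass:h_acc}(a) are all routine. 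A secondary subtlety is purely cosmetic: $\bound(\pseudo)$ and $\proxyBound(\hat\pseudo)$ are random, so the high-probability claims must be read as holding after integrating out the design, and the constants in $\eps_{\delta}(n)$ (and the factor-of-two from the union bound) must be tracked so each Bernstein step contributes failure probability at most $\delta$.
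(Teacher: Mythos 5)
Your treatment of the first display is essentially the paper's: condition on the design $\conditioningSigmaAlgebra = \sigma(X_{\isplittwo},\rsampsplitone)$, split into conditional bias and a conditionally independent, centred, bounded fluctuation sum, identify the variance proxy with $\|\pseudo-\contrastfunc^*(y_0,y_1|\cdot)\|_{\bm w^2}$ via Proposition \ref{prop:pseudo_expectation}, and apply Bernstein conditionally. For the second display, however, your route genuinely differs from the paper's. You decompose $\hat m_{\hat\pseudo}-m_{\pseudo}$ additively into $(\hat m_{\pseudo}-m_{\pseudo})+\hat m_{\hat b}+\sum_j w_j\bigl((\hat\pseudo-\pseudo)(Z_j)-\hat b(X_j)\bigr)$ and run Bernstein a second time on the last sum, then union-bound. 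The paper instead applies the \emph{first} bound once more with $\hat\pseudo$ in place of $\pseudo$, and then controls the resulting conditional bias by the triangle inequality (producing $|\hat m_{\hat b}|$) and the resulting conditional variance by the projection/orthogonality identity $\|\hat\pseudo-\E[\pseudo\mid\conditioningSigmaAlgebra]\|_{\bm w^2}^2=\|\hat\pseudo-\pseudo\|_{\bm w^2}^2+\|\pseudo-\E[\pseudo\mid\conditioningSigmaAlgebra]\|_{\bm w^2}^2\leq(\|\hat\pseudo-\pseudo\|_{\bm w^2}+\|\pseudo-\E[\pseudo\mid\conditioningSigmaAlgebra]\|_{\bm w^2})^2$, both of which are deterministic given $\conditioningSigmaAlgebra$. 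The payoff of the paper's route is that only one high-probability event is invoked, so the failure probability is exactly $\delta$ and only one linear Bernstein term appears, recovering $\proxyBound(\hat\pseudo)$ verbatim. Your route proves the same result only up to constants: the union bound over two Bernstein events costs a factor of two in $\delta$ (or forces $\eps_{\delta/2}(n)$ in place of $\eps_{\delta}(n)$), and the second application contributes an additional linear term of order $\|\bm w\|_{\infty}\eps_{\delta}(n)/\xi$ that is not literally present in $\proxyBound(\hat\pseudo)$; you flag both issues but do not resolve them, so as written you obtain a bound of the same order rather than the stated inequality. Everything else --- the measurability bookkeeping under sample splitting, the boundedness from Assumption \ref{ass:h_acc}(a), and the identification of the norms --- is correct and matches the paper.
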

\begin{proof}
To prove \eqref{eq:firstBoundInThm:linearSmootherGeneralBound} we first observe that since $\weights$ is $\conditioningSigmaAlgebra$-measurable we have
\begin{align*}
    \E[\hat m_{\pseudo}(\bm x)^2|\rsampsplitone,X_{\isplittwo}]-\E[\hat m_{\pseudo}(\bm x)|\rsampsplitone,X_{\isplittwo}]^2=\sum_{j\in\isplittwo}w_j^2\E[(\pseudo(Z_j)-\E[\pseudo(Z)|X=X_j])^2|\rsampsplitone,X_{\isplittwo}]
\end{align*}

Note also that we have $\max_{j \in \isplittwo}\bigl| \weight_j \bigl\{\pseudo(Z_j)- \E\bigl[ \pseudo(Z_j)| \conditioningSigmaAlgebra\bigr] \bigr\}\bigl| \leq \|\weights\|_{\infty} \outputValueRadius$ almost surely. Note also that $\{\pseudo(Z_j)\}_j\in{\isplittwo}$ are conditionally independent given $\conditioningSigmaAlgebra$. Hence, the bound \eqref{eq:firstBoundInThm:linearSmootherGeneralBound} follows from two applications Bernstein's inequality \citet[Theorem 2.10]{Boucheron2013}, applied conditionally on $\conditioningSigmaAlgebra$.

Next we note that by the triangle inequality we have
\begin{align*}
&\phantom{\leq}~~\bigl|\E\bigl\{ \hat m_{\hat\pseudo}(\bm x)- m_{\pseudo}(\bm x), \big|\, \conditioningSigmaAlgebra \bigr\}\bigr|\\
&\leq \bigl| \E\bigl\{ \hat m_{\hat\pseudo}(\bm x)- \hat m_{\pseudo}(\bm x) \, \big|\, \conditioningSigmaAlgebra \bigr\}\bigr| + \bigl|\E\bigl\{ \hat m_{\pseudo}(\bm x)-m_{\pseudo}(\bm x) \, \big|\, \conditioningSigmaAlgebra \bigr\}\bigr|\\
& \leq \abs{\sum_{j\in\isplittwo}w_j\E\lrbrs{\hat\pseudo(Z_j)-\pseudo(Z_j) \big| \conditioningSigmaAlgebra}} +\bigl|\E\bigl\{ \hat m_{\pseudo}(\bm x)-m_{\pseudo}(\bm x) \, \big|\, \conditioningSigmaAlgebra \bigr\}\bigr|\\
&=\|\hat m_{\hat b}\|_{\bm w,1}+\bigl|\E\bigl\{ \hat m_{\pseudo}(\bm x)-m_{\pseudo}(\bm x) \, \big|\, \conditioningSigmaAlgebra \bigr\}\bigr|\\
\end{align*}
Moreover, since $\E\bigl[\hat\pseudo(Z_j)| \conditioningSigmaAlgebra\bigr]$ is the projection of $\hat\pseudo(Z_j)$ onto the subspace of $\conditioningSigmaAlgebra$-measureable functions we have
\begin{align*}
&\phantom{=}~~\| \weights\|_2^2\,\|\hat m_{\hat\pseudo} -m_{\hat\pseudo}(\bm x)\|_{\weights^{ 2}}^2 \\
& = \sum_{j\in \isplittwo}  \weight_j^2 \,\E\bigl\{ \bigl(\hat \pseudo(Z)- \E\bigl\{\hat\pseudo(Z) | X=X_j\bigr\}\bigr)^2 | \conditioningSigmaAlgebra\bigr\} \\
& \leq \sum_{i\in [n]}  \weight_j^2 \,\E\bigl\{ \bigl(\hat\pseudo(Z_j)- \E\bigl\{ \pseudo(Z_j)| \conditioningSigmaAlgebra\bigr\}\bigr)^2 | \conditioningSigmaAlgebra\bigr\} \\
&= \sum_{i\in [n]}  \weight_j^2 \,\E\bigl( \bigl\{\bigl(\hat\pseudo(Z_j)-\pseudo(Z_j)\bigr)+ \bigl(\pseudo(Z_j)-\E\bigl\{ \pseudo(Z_j)| \conditioningSigmaAlgebra\bigr\}\bigr)\bigr\}^2 | \conditioningSigmaAlgebra\bigr) \\
&= \sum_{i\in [n]}  \weight_j^2 \,\bigl( \E\bigl\{\bigl(\hat\pseudo(Z_j)-\pseudo(Z_j)\bigr)^2 | \conditioningSigmaAlgebra\bigr\} +\E\bigl\{\bigl(\pseudo(Z_j)-\E\bigl\{ \pseudo(Z_j)| \conditioningSigmaAlgebra\bigr\}\bigr)^2 | \conditioningSigmaAlgebra\bigr\}\bigr) \\
& = \| \weights\|_2^2\,\bigl\{\bigl(\| \hat m_{\hat\pseudo}-\hat m_{\pseudo}\|_{\weights^{2}}^2+ \|\hat m_{\pseudo}- m_{\pseudo}\|_{\weights^{2}}^2\bigr\} \\
& \leq \| \weights\|_2^2\,\bigl\{\bigl(\| \hat m_{\hat\pseudo}-\hat m_{\pseudo}\|_{\weights^{2}}+ \|\hat m_{\pseudo}- m_{\pseudo}\|_{\weights^{2}}\bigr\}^2. \\
\end{align*}
Hence, to deduce \eqref{eq:secondBoundInThm:linearSmootherGeneralBound} we apply the first bound with $\hat\pseudo$ in place of $\pseudo$ to obtain the following bound with probability at least $1-\delta$,
\begin{align*}
\bigl| \hat m_{\pseudo}(\bm x)-m_{\pseudo}(\bm x)\bigr|  \leq & \bigl|\E\bigl\{ \hat m_{\hat\pseudo}(\bm x)- m_{\pseudo}(\bm x) \, \big|\, \conditioningSigmaAlgebra \bigr\}\bigr|+ \sqrt{ 2\epsilonByNDeltaBase{n}{\delta} }\,\| \weights\|_2\,\|\hat m_{\hat\pseudo}-m_{\hat\pseudo}\|_{\weights^{ 2}}\\&+{\outputValueRadius \| \weights\|_\infty \,\epsilonByNDeltaBase{n}{\delta}}/{3}\\
 \leq & \|\hat m_{\hat b}\|_{\bm w,1}+\bigl|\E\bigl\{ \hat m_{\pseudo}(\bm x)-m_{\pseudo}(\bm x) \, \big|\, \conditioningSigmaAlgebra \bigr\}\bigr|\\
 &+\sqrt{2\eps_n(\delta)}\| \weights\|_2\,\bigl\{\bigl(\| \hat m_{\hat\pseudo}-\hat m_{\pseudo}\|_{\weights^{2}}+ \|\hat m_{\pseudo}- m_{\pseudo}\|_{\weights^{2}}\bigr\}\\
 &+{\outputValueRadius \| \weights\|_\infty \,\epsilonByNDeltaBase{n}{\delta}}/{3}\\
 = &\, \bound(\pseudo)+ \|\hat m_{\hat b}\|_{\bm w,1} + \sqrt{ 2\epsilonByNDeltaBase{n}{\delta} }\,\| \weights\|_2\,\| \hat m_{\hat\pseudo}-\hat m_{\pseudo}\|_{\weights^{2}}\\
 = & \proxyBound(\hat\pseudo),
 \end{align*}
 as required.
\end{proof}

We apply the following result from \citet{Kennedy2023b}.
\begin{prop}[Proposition 2 from \citet{Kennedy2023b}]\label{prop:b_split}
    Suppose that $\hat b(\bm x)=\hat b_1(\bm x)\hat b_2(\bm x)$ then
    \begin{align*}
        |\hat m_{\hat b}(\bm x)|&=\|\bm w\|_1\|\hat b_1\|_{\bm w}\|\hat b_1\|_{\bm w}.\\
    \end{align*}
\end{prop}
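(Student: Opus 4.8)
The plan is to prove the bound by a single application of the Cauchy--Schwarz inequality together with the definition of the weighted norm $\|\cdot\|_{\bm w}$. First I would unpack $\hat m_{\hat b}(\bm x)$ as a linear smoother and insert the hypothesised factorisation:
\[
  \hat m_{\hat b}(\bm x) \;=\; \sum_{j\in\isplittwo} w_j\,\hat b(X_j) \;=\; \sum_{j\in\isplittwo} w_j\,\hat b_1(X_j)\,\hat b_2(X_j).
\]
Since the smoother weights are non-negative, I would write $w_j = \sqrt{w_j}\cdot\sqrt{w_j}$, pair the first factor with $\hat b_1(X_j)$ and the second with $\hat b_2(X_j)$, and apply Cauchy--Schwarz over $j\in\isplittwo$ to obtain
\[
  \bigl|\hat m_{\hat b}(\bm x)\bigr| \;\le\; \Bigl(\sum_{j\in\isplittwo} w_j\,\hat b_1(X_j)^2\Bigr)^{1/2}\Bigl(\sum_{j\in\isplittwo} w_j\,\hat b_2(X_j)^2\Bigr)^{1/2}.
\]

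The second step is to recognise each of these sums as $\|\bm w\|_1\,\|\hat b_k\|_{\bm w}^2$. The functions $\hat b_1,\hat b_2$ depend on the covariate alone, so when viewed as functions of $z=(y,\bm x,a)$ they are $\sigma(X)$-measurable, and hence $\E[\hat b_k(Z)^2\mid X=X_i] = \hat b_k(X_i)^2$. Plugging this into the definition $\|\hat b_k\|_{\bm w}^2 = \|\bm w\|_1^{-1}\sum_{i\in\isplittwo} w_i\,\E[\hat b_k(Z)^2\mid X=X_i]$ gives $\sum_{i\in\isplittwo} w_i\,\hat b_k(X_i)^2 = \|\bm w\|_1\,\|\hat b_k\|_{\bm w}^2$ for $k=1,2$. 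Substituting into the Cauchy--Schwarz bound yields $\bigl|\hat m_{\hat b}(\bm x)\bigr| \le \|\bm w\|_1\,\|\hat b_1\|_{\bm w}\,\|\hat b_2\|_{\bm w}$, which is the asserted statement (I read the equality sign and the repeated $\hat b_1$ in the displayed claim as typographical slips for an inequality and for $\hat b_2$, respectively).

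I do not expect any real obstacle in this argument: it is one line of Cauchy--Schwarz plus bookkeeping. The only points deserving a moment's attention are the normalising factor $\|\bm w\|_1^{-1}$ inside the $\|\cdot\|_{\bm w}$ norm, which is exactly what converts $\sqrt{\sum_j w_j\hat b_k(X_j)^2}$ into $\sqrt{\|\bm w\|_1}\,\|\hat b_k\|_{\bm w}$, and the (trivial) collapse of the conditional expectation in that norm because $\hat b_k$ does not involve the response or the treatment. Getting these two conventions straight is what makes the final product form come out with the stated constant $\|\bm w\|_1$ rather than, say, $\|\bm w\|_1^{1/2}$ or $1$.
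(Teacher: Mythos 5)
Your argument is correct and is precisely the Cauchy--Schwarz computation that the paper's one-line proof (``Follows from the Cauchy--Schwartz inequality'') leaves implicit, including the correct bookkeeping of the $\|\bm w\|_1^{-1}$ normalisation in the weighted norm. You are also right to read the displayed equality and the repeated $\hat b_1$ as typographical slips for an inequality and for $\hat b_2$.
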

\begin{proof} Follows from the Cauchy-Schwartz inequality. 
\end{proof}

\begin{prop}\label{prop:pseudo_expectation}
    Our pseudo-outcome is conditionally unbiased,
    \begin{align*}
        \E[\varphi_{y_0,y_1}(Z)|X=\bm x]=\contrastfunc^*(y_0,y_1|\bm x).\\
    \end{align*}
\end{prop}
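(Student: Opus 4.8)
The plan is a direct computation via the law of total (iterated) expectation, conditioning first on $X = \bm x$ and then on the treatment assignment $A$. Fixing $X = \bm x$, the additive term $\ccdf[1](y_1|\bm x) - \ccdf[0](y_0|\bm x)$ appearing in $\varphi_{y_0,y_1}$ equals $\contrastfunc^*(y_0,y_1|\bm x)$ and is non-random, so it suffices to show that the inverse-propensity-weighted term
\[\frac{A - \pi(X)}{\pi(X)(1-\pi(X))}\,\Big(\one\{Y \le y_A\} - \ccdf[A](y_A|X)\Big)\]
has conditional expectation zero given $X = \bm x$, where $y_A$ and $\ccdf[A]$ denote $y_1$ and $\ccdf[1]$ on the event $\{A=1\}$ and $y_0$ and $\ccdf[0]$ on $\{A=0\}$.

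For this I would write $\E[\,\cdot \mid X=\bm x] = \E\big[\,\E[\,\cdot \mid X=\bm x,\, A\,]\,\big|\,X=\bm x\,\big]$. The weight $\tfrac{A-\pi(X)}{\pi(X)(1-\pi(X))}$ is $\sigma(X,A)$-measurable and, since $\pi$ is bounded away from $0$ and $1$ (Assumption \ref{ass:h_acc}(a)), it is bounded; hence it factors out of the inner conditional expectation. On the event $\{A = a\}$ for $a \in \{0,1\}$ what remains is $\E\big[\one\{Y \le y_a\} \mid X = \bm x, A = a\big] - \ccdf[a](y_a|\bm x)$, which is $\ccdf[a](y_a|\bm x) - \ccdf[a](y_a|\bm x) = 0$ by the very definition of the CCDF. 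Thus the inner conditional expectation vanishes almost surely; taking the outer expectation over $A \mid X = \bm x$ keeps it zero, and adding back the deterministic term yields $\E[\varphi_{y_0,y_1}(Z)\mid X = \bm x] = \contrastfunc^*(y_0,y_1|\bm x)$.

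There is no genuine obstacle here: the only thing requiring care is the bookkeeping of the $A$-dependence concealed in the shorthand $y_A$ and $\ccdf[A]$ — once one conditions on $A = a$ everything collapses to the elementary identity above. This is precisely the standard argument showing that the Neyman-orthogonal (doubly robust) pseudo-outcome for the CATE is conditionally unbiased, specialised here to the binary response $\one\{Y \le y_A\}$ in place of $Y$.
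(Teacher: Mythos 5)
Your proposal is correct and follows essentially the same route as the paper's proof: condition on $X=\bm x$ and $A=a$, observe that $\E[\one\{Y\leq y_a\}|X=\bm x,A=a]-\ccdf(y_a|\bm x)=0$ so the inverse-propensity-weighted term vanishes, and conclude by the law of total expectation. The only cosmetic difference is that the paper writes out the $A=1$ case explicitly with the weight evaluated as $1/\pi(\bm x)$ and appeals to symmetry for $A=0$, whereas you keep the weight generic; the substance is identical.
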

\begin{proof}
    We have  
    \begin{align*}
        \E[\varphi_{y_0,y_1}(Z)|X=\bm x,A=1]
        &=\frac{1}{\pi(\bm x)}\Big(\E[\one\{Y\leq y_1\}|X=\bm x,A=1]-\ccdf[1](y_1|\bm x)\Big)+\contrastfunc^*(y_0,y_1|\bm x)\\
        &=\contrastfunc^*(y_0,y_1|\bm x),
        \end{align*}
        and similarly $\E[\varphi_{y_0,y_1}(Z)|X=\bm x,A=1]=\contrastfunc^*(y_0,y_1|\bm x)$. The result now follows by the law of total expectation.
\end{proof}

\subsubsection{Proof of Proposition \ref{prop:h_acc}}

\begin{proof}[Proof of  Proposition \ref{prop:h_acc}]
Fix $y_0,y_1,\bm x$
    Let $\hat m_{\hat \varphi}(\bm x)$ be the regression of $X$ against the estimated pseudo-outcome $\hat\varphi_{y_0,y_1}(Z)$ evaluated at $\bm x$ (i.e. $\hat \contrastfunc(y_0,y_1|\bm x)$), let $\hat m_{\varphi}$ be the same but with the estimated pseudo-outcome replaced with the exact pseudo-outcome $\varphi_{y_0,y_1}$. Finally take 
    \begin{align*}
        m_{\varphi}(\bm x)\coloneqq\E[\varphi_{y_0,y_1}(Z)|X=\bm x]=\prob(Y\leq y_1|X=\bm x,A=1)-\prob(Y\leq y_1|X=\bm x,A=0).
    \end{align*}

    Theorem \ref{thm:stability} gives us that with probability. at least $1-\delta$
    \begin{align*}
        |\hat m_{\hat\pseudo}(\bm x)-m_{\pseudo}(\bm x)|&\leq\eps_{\gamma}(n,\delta)+|\hat m_{\hat b}(\bm x)|+\sqrt{2\eps_{\delta}(n)}\|\bm w\|_2\|\hat\pseudo-\pseudo\|_{\bm w^2}\\
        &\leq\eps_{\gamma}(n,\delta)+|\hat m_{\hat b}(\bm x)|+\sqrt{2\eps_{\delta}(n)}\|\hat\pseudo-\pseudo\|_{\bm w^2}\\
    \end{align*}
    with $\hat b((y,\bm x,a))\coloneqq \E[\hat\varphi(Z)-\varphi(Z)|X=\bm x]$.
    
    To bound $\hat b$ we have that from Proposition \ref{prop:pseudo_expectation},
    \begin{align*}
        \E[\varphi_{y_0,y_1}(Z)|X=\bm x]&=
        \prob(Y\leq y_1|X=\bm x,A=1)-\prob(Y\leq y_0|X=\bm x,A=0).
    \end{align*}
    Additionally, by splitting over the events $\{A=1\},\{A=0\}$ and noting that $\prob(A=1|X=\bm x)=\pi(\bm x)$ we have
    \begin{align*}
        \E[\hat\varphi_{y_0,y_1}(Z)|X=\bm x]
        &=\lrbr{\frac{\pi(\bm x)}{\hat\pi(\bm x)}}\lrbr{\prob(Y\leq y_1|X=\bm x, A=1)-\hat\prob(Y\leq y_1|X=\bm x, A=1)}\\
        &~~~-~\frac{1-\pi(\bm x)}{1-\hat\pi(\bm x)}\lrbr{\prob(Y\leq y_0|X=\bm x, A=0)-\hat\prob(Y\leq y_0|X=\bm x, A=0)}\\
        &~~~+~\hat \prob(Y\leq y_1|X=\bm x,A=1)-\hat\prob(Y\leq y_0|X=\bm x,A=0).
    \end{align*}
    Hence
    \begin{align*}
        \hat b(\bm x)=&\lrbr{\frac{\pi(\bm x)}{\hat\pi(\bm x)}-1}\lrbr{\prob(Y\leq y_1|X=\bm x, A=1)-\hat\prob(Y\leq y_1|X=\bm x, A=1)}\\
        &-~\lrbr{\frac{1-\pi(\bm x)}{1-\hat\pi(\bm x)}-1}\lrbr{\prob(Y\leq y_0|X=\bm x, A=0)-\hat\prob(Y\leq y_0|X=\bm x, A=0)}.
    \end{align*}
    We have that using Proposition \ref{prop:b_split}
    \begin{align*}
        |\hat m_{\hat b}(\bm x)|&\leq\|\bm w\|_1\left\|\frac{\pi}{\hat\pi}-1\right\|_{\bm w}\left\|\prob(Y\leq y_1|X,A=1)-\hat \prob(Y\leq y_1|X,A=1)\right\|_{\bm w}\\
        &~~~+~\left\|\frac{1-\pi}{1-\hat\pi}-1\right\|_{\bm w}\left\|\prob(Y\leq y_0|X,A=0)-\hat \prob(Y\leq y_0|X,A=0)\right\|_{\bm w}\\
        &\leq \inv{\xi}\sum_{a=0}^1\left\|\pi-\hat \pi\right\|_{\bm w}\left\|\prob(Y\leq y_a|X,A=a)-\hat \prob(Y\leq y_a|X,A=a)\right\|_{\bm w}.
    \end{align*}

    Now it is simply a matter of bounding all the relevant terms which we do through the following events
    \begin{align*}
    E_{\mathrm{oracle}}&\coloneqq\lrbrc{|\hat m_{\hat\pseudo}- m_{\pseudo}|\leq|\hat m_{\pseudo}(\bm x)-m_{\pseudo}(\bm x)|+|\hat m_{\hat b}(\bm x)|+\sqrt{2\eps_{\delta}(n)}\|\bm w\|_2\|\hat\pseudo-\pseudo\|_{\bm w^2}}\\
    E_{\hat\varphi}&\coloneqq\lrbrc{\|\hat\varphi-\varphi\|_{\bm w^2}\leq\eps_{\hat\pseudo}(n,\delta/4)}\\
    E_{\gamma}&\coloneqq\lrbrc{\hat m_{\varphi}-m_{\varphi}\leq \eps_\gamma(n,\delta/4)}\\
    E_{\alpha}&\coloneqq\lrbrc{\left\|\pi-\hat\pi\right\|_{\bm w, 2}\leq \eps_\alpha(n,\delta/4)}\\
    E_{\beta,0}&\coloneqq\lrbrc{\|\hat\prob(Y\leq y_0|X,A=0)-\prob(Y\leq y_0|X,A=0)\|_{\bm w, 2}\leq g_\beta(n,\delta/4)}\\
    E_{\beta,1}&\coloneqq\lrbrc{\|\hat\prob(Y\leq y_1|X,A=1)-\prob(Y\leq y_1|X,A=1)\|_{\bm w, 2}\leq \eps_{\beta}(n,\delta/4)}
    \end{align*}
    According to our assumptions and previous results, $E_{\text{oracle}}, E_{\hat\pseudo}, E_{\gamma}, E_{\alpha}\cap E_{\beta,0}\cap E_{\beta,1}$ separately occur w.p. at least $1-\delta/4$. 
    Then by the union bound, the intersection of all these events holds w.p. at least $1-\delta$ and under these events
    \begin{align*}
        |\hat m_{\hat \varphi}-\hat m_{\varphi}+\hat m_{\varphi}-m_{\varphi}|&\leq\eps_{T_n}(n,\delta/4)+\frac{2}{\xi}\eps_\alpha(n,\delta/4)\eps_\beta(n,\delta/4)+\eps_\gamma(n,\delta/4)\eps_{\bm w}(n,\delta/4)
    \end{align*}
    Where the first term comes from Proposition \ref{thm:stability}, the second from Proposition \ref{prop:b_split} and the final term from smoothness of $\transfunc^*$.
\end{proof}

\subsection{Proof of \texorpdfstring{$\hat\transfunc$}{g_hat} accuracy}\label{app:g_acc}

\begin{prop}[Maximum step-size bound]\label{prop:h_step}
Suppose that both the outer regression and estimation of CCDFs is fit using linear smoothers so that
\begin{align*}
    \hat \contrastfunc(y_0,y_1|\bm x)&=\sum_{j\in\isplittwo} w_j(\bm x;X_{\isplittwo})\hat\pseudo(Z_j) & \estccdf(y|\bm x')&=\sum_{i\in\isplitone} w_{\ccdf;i}(\bm x';X_{\isplittwo})\one\{Y_i\leq y\}
\end{align*}
with $w_j(\bm x),w_i(\bm x')$
Additionally suppose with probability at least $1-\delta$
\begin{align*}
    \max\lrbrc{\max_{j\in\isplittwo} w_j(\bm x), \max_{i\in\isplitone} w_{\ccdf;i}(X)}&\leq \eps_{\bm w}(n,\delta).
\end{align*}
Then with probability at least $1-\delta$
 \begin{align*}
        &\max_{j\in[n_1+1]_0} |\hat\contrastfunc(y_0,Y_1^{(j)}|\bm x)-\hat\contrastfunc(y_0,Y_1^{(j+1)}|\bm x)|\geq\xi^{-1}\eps_{\bm w}(n,\delta/n)
    \end{align*}
    where $[n_1+1]_0\coloneqq[n_1+1]\cup\{0\}$, $\hat\contrastfunc(y_0,Y_1^{(0)}|\bm x)=-1$, $\hat\contrastfunc(y_0,Y_1^{(n_1+1)}|\bm x)$,
    and $\{Y^{(i)}\}_{i\in I_1}$ is the sorted version of $\{Y_i\}_{i\in I_1}$.
\end{prop}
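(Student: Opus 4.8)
The plan is to establish the stated lower bound on the maximum consecutive step of $\hat\contrastfunc(y_0,\cdot|\bm x)$ along the sorted grid by a telescoping (pigeonhole) argument, supported by an exact description of the jumps of $\hat\contrastfunc$ in its second argument. I fix $y_0$ and $\bm x$ throughout and write $v_j\coloneqq\hat\contrastfunc(y_0,Y_1^{(j)}|\bm x)$ for $j\in[n_1+1]_0$, so that the boundary conventions give $v_0=-1$ and $v_{n_1+1}=1$. The signed increments $d_j\coloneqq v_{j+1}-v_j$ then telescope, $\sum_{j=0}^{n_1}d_j=v_{n_1+1}-v_0=2$, whence $\sum_{j}|d_j|\ge 2$ and the maximal increment is at least the average, $\max_{j}|d_j|\ge 2/(n_1+1)$.

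The second step is to read off the structure and scale of these increments from the pseudo-outcome. Substituting the two branches of $\hat\pseudo_{y_0,y_1}$ into $\hat\contrastfunc(y_0,y_1|\bm x)=\sum_{k\in\isplittwo}w_k(\bm x)\hat\pseudo_{y_0,y_1}(Z_k)$ and isolating the second argument, the dependence on $y_1$ enters only through $\one\{Y_k\le y_1\}$ for treated units $Z_k$ in $\isplittwo$ (carrying the factor $1/\hat\pi(X_k)$), through the fitted $\estccdf[1](y_1|X_k)$, and through the additive $\estccdf[1](y_1|\bm x)$. Each of these is a step function whose jumps occur only at treated observations, and since the evaluation grid $\{Y^{(i)}\}_{i\in I_1}$ is exactly the sorted set of treated responses, every jump of $\hat\contrastfunc(y_0,\cdot|\bm x)$ lands on a grid point. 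A single crossing therefore moves $\hat\contrastfunc$ by an amount set by the associated smoother weight divided by $\hat\pi(X_k)$, and since $\hat\pi\ge\xi$ by Assumption \ref{ass:h_acc}(a), the characteristic scale of an increment is $\xi^{-1}$ times a weight of size $\eps_{\bm w}(n,\delta/n)$.

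Finally I would work on the high-probability event on which all the relevant weights obey the bound of Assumption \ref{ass:g_acc}(b) simultaneously; applying that assumption at confidence level $\delta/n$ and taking a union bound over the at most $n$ grid points yields this event with probability at least $1-\delta$, which explains the $\delta/n$ appearing in $\eps_{\bm w}(n,\delta/n)$. Combining the telescoping lower bound from the first step with the per-crossing scale from the second then pins the maximal step to order $\xi^{-1}\eps_{\bm w}(n,\delta/n)$. The main obstacle I anticipate is precisely this last quantitative matching: the pigeonhole step only guarantees $\max_j|d_j|\ge 2/(n_1+1)$, so the delicate part is to show that the total variation of size $2$ cannot be distributed evenly in increments all strictly below $\xi^{-1}\eps_{\bm w}(n,\delta/n)$, namely that at least one crossing must carry a weight of the full order $\eps_{\bm w}$; it is here that the magnitude of the linear-smoother weights, rather than merely their decay, enters decisively.
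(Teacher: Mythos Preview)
There is a genuine gap, but it stems from a typo in the statement rather than from your reasoning: the displayed inequality should read $\leq$, not $\geq$. The proposition is titled ``Maximum step-size bound'' and is invoked downstream (Proposition~\ref{prop:unif_bound} and Theorem~\ref{thm:g_acc}) as an \emph{upper} bound on the jumps of $\hat\contrastfunc(y_0,\cdot\mid\bm x)$ between consecutive grid points; the paper's own proof concludes with $\max_i|\hat\contrastfunc(y_0,Y^{(i)}|\bm x)-\hat\contrastfunc(y_0,Y^{(i+1)}|\bm x)|\leq \xi^{-1}\eps_{\bm w}(n,\delta/n)$. You took the literal $\geq$ at face value and built a telescoping/pigeonhole argument to force a large increment somewhere. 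That argument only yields $\max_j|d_j|\geq 2/(n_1+1)$, and your ``delicate part'' --- upgrading this to $\xi^{-1}\eps_{\bm w}(n,\delta/n)$ --- cannot be completed in general, because $\eps_{\bm w}$ is an \emph{upper} bound on weights, not a lower one. There is no mechanism preventing all increments from being strictly smaller than $\xi^{-1}\eps_{\bm w}$.

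Once the direction is corrected, your second paragraph is essentially the whole proof. The paper computes, for each jump location, the increment $\hat\contrastfunc(y_0,\Delta y\mid\bm x)$: at a treated $Y_j$ with $j\in\isplittwo$ the jump is $w_j/\hat\pi(X_j)\leq w_j/\xi$, and at a treated $Y_i$ with $i\in\isplitone$ the jump is bounded by $\xi^{-1}\max_{j\in\isplittwo}w_{\ccdf;i}(X_j)$. Thus every increment is at most $\xi^{-1}$ times the largest relevant smoother weight. The union bound over the (at most $n$) covariate locations $X_j$ in $\isplittwo$ turns the per-point assumption $\max_i w_{\ccdf;i}(X)\leq\eps_{\bm w}(n,\delta)$ into a simultaneous bound at level $\eps_{\bm w}(n,\delta/n)$, and this is the only place probability enters. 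Drop the pigeonhole step entirely; the result is a pure pointwise upper bound on each jump, not a lower bound on the maximum.
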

\begin{proof}
    We immediately have via union bounds that with probability at least $1-\delta$
    \begin{align*}
        \max\lrbrc{\max_{j\in\isplittwo}w_j(\bm x),\max_{j\in\isplittwo}w_{\ccdf;i}(X_j)}\leq\eps_{\bm w}(n,\delta/n).
    \end{align*}
   Now we can try to bound the jump size of our function. 
   To do so for a step function $f:\cal Y\rightarrow \R$ and a step points $y\in\cal Y$ we define 
   \begin{align*}
       f(\Delta y)\coloneqq \lim_{\eps\downarrow 0}f(y)-f(y-\eps),
   \end{align*} 
   i.e. the size of the step at $y$.    From the definition of our pseudo-outcome, the jumps occur in different forms at $\{Y_j\}_{\{j\in\isplittwo,A_j=1\}}$, $\{Y{(i)}\}_{\{i\in\isplitone|A^{(i)}=1\}}$. Note that we are assuming $Y$ continuous so these points are all distinct a.s. .
   
   For $Y_j$ with $j\in\isplittwo$ and $A_j=1$. We have 
   \begin{align*}
       \hat h(y_0,\Delta Y_j|\bm x)&=  w_{\pseudo,j}\inv{\hat\pi(X_j)}\one\{Y_j\leq \Delta Y_j\}\\
       &\leq \frac{w_{\pseudo,j}}{\xi}.
    \end{align*}
    
    For $i\in\isplitone$ with $A^{(i)}=1$, we have that
    \begin{align*}
        |\hat h(y_0,\Delta Y_i|\bm x)|&=\abs{\sum_{j\in\isplittwo} w_j\lrbr{-\one\{A_j=1\}\inv{\hat\pi(X_j)}\estccdf[1](\Delta Y_i|X_j)+\estccdf[1](\Delta Y_i|X_j)}}\\
        &\leq \inv{\xi}\max_{j\in\isplittwo}\estccdf[1](\Delta Y_i|X_j)\\
        &\leq \inv{\xi}\max_{j\in\isplittwo}w_{\ccdf;i}(X_j)\one\{Y_i\leq \Delta Y_i\}\\
        &=\inv{\xi}\max_{j\in\isplittwo}w_{\ccdf;i}(X_j).
    \end{align*}
    Hence the maximum jump size is less than
    \begin{align*}
        \inv{\xi}\max\lrbrc{\max_{j\in\isplittwo} w_{\pseudo,j}(\bm x), \max_{i\in\isplitone,j\in\isplittwo} w_{\ccdf;i}(X_j)}.
    \end{align*}
    Therefore combining this with our previous bounds gives that with probability at least $1-\delta$
    \begin{align*}
        &\max_{i\in[n_1+1]_0}|\hat\contrastfunc(y_0,Y^{(i)}|\bm x)-\hat\contrastfunc(y_0,Y^{(i+1)}|\bm x)|\leq \xi^{-1}\eps_{\bm w}(n,\delta/n)
    \end{align*}
\end{proof}
$ \ccdf[1]$
We also have a result ensuring the step size of the projected function is even smaller.
\begin{prop}\label{prop:proj_step}
    Let $\bm \alpha\in\R^p$ and $\tilde{\bm \alpha}\coloneqq \proj_{\iso(p)}(\bm \alpha)$. Then $|\alpha_l-\alpha_{l+1}|\geq|\tilde \alpha_l-\tilde \alpha_{l+1}|$ for all $l\in[p]$.
\end{prop}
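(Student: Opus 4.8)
The plan is to use the level-block structure of the isotonic projection (equivalently, of the pool-adjacent-violators solution). For $1\le u\le v\le p$ write $\bar\alpha_{u:v}\coloneqq(v-u+1)^{-1}\sum_{k=u}^{v}\alpha_k$. I would first recall the classical description of $\tilde{\bm\alpha}=\proj_{\iso(p)}(\bm\alpha)$: there is a partition of $[p]$ into consecutive blocks $B_1,B_2,\dots,B_k$ (ordered left to right) on each of which $\tilde{\bm\alpha}$ is constant, equal to the block mean $\bar\alpha_{B_m}$ on $B_m$, with $\bar\alpha_{B_1}\le\bar\alpha_{B_2}\le\cdots\le\bar\alpha_{B_k}$.

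The key step is a ``betweenness'' property of the block means: if $B_m=\{u,\dots,v\}$ then $\alpha_v\le\bar\alpha_{B_m}\le\alpha_u$. This rests on the fact that a level block cannot be split so as to improve the fit, which forces $\bar\alpha_{u:j}\ge\bar\alpha_{(j+1):v}$ for every $u\le j<v$ (the standard characterisation of isotonic-regression level sets, \citet{Barlow1972}; alternatively one verifies it by tracking the running means through a left-to-right pool-adjacent-violators sweep). Taking $j=u$ yields $\alpha_u\ge\bar\alpha_{(u+1):v}$, and since $\bar\alpha_{u:v}$ is a convex combination of $\alpha_u$ and $\bar\alpha_{(u+1):v}$ this gives $\bar\alpha_{u:v}\le\alpha_u$; taking $j=v-1$ yields $\bar\alpha_{u:(v-1)}\ge\alpha_v$, and the analogous convex-combination argument gives $\bar\alpha_{u:v}\ge\alpha_v$.

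To conclude, fix $l\in[p-1]$. If $l$ and $l+1$ lie in the same block then $\tilde\alpha_l=\tilde\alpha_{l+1}$ and the claimed inequality is immediate. Otherwise $l=\max B_m$ and $l+1=\min B_{m+1}$ for some $m$, so by betweenness $\tilde\alpha_{l+1}-\tilde\alpha_l=\bar\alpha_{B_{m+1}}-\bar\alpha_{B_m}\le\alpha_{l+1}-\alpha_l$; since $\tilde{\bm\alpha}\in\iso(p)$ this difference is non-negative, and therefore $|\tilde\alpha_l-\tilde\alpha_{l+1}|=\tilde\alpha_{l+1}-\tilde\alpha_l\le\alpha_{l+1}-\alpha_l\le|\alpha_l-\alpha_{l+1}|$.

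The only non-routine ingredient is the betweenness property of the block means; everything else is bookkeeping with the block partition. In the write-up I would either cite the standard PAVA level-set characterisation or, for self-containedness, include a short induction over a left-to-right pool-adjacent-violators execution; I would favour the citation for brevity.
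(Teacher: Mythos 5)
Your proof is correct, but it takes a genuinely different route from the paper's. The paper argues algorithmically: it defines a cyclic sequence of pairwise-averaging maps $\psi_l$ (replacing any violating adjacent pair by their common mean), invokes \citet{Yang2019} to assert that iterating these maps converges to $\proj_{\iso(p)}(\bm\alpha)$, and then argues by case analysis that whenever an iteration pushes two adjacent coordinates further apart, a later iteration collapses that gap to zero, so in the limit no adjacent gap exceeds its original value. You instead work with the static level-block characterisation of the projection: $\tilde{\bm\alpha}$ is constant on consecutive blocks, equal to the block means, and your key lemma is the betweenness property $\alpha_v\le\bar\alpha_{u:v}\le\alpha_u$ for a block $\{u,\dots,v\}$, which you correctly derive from the no-improving-split property of level sets via the two convex-combination observations. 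With that in hand the conclusion is immediate: within a block the projected gap is zero, and across a block boundary the sandwich $\alpha_l\le\tilde\alpha_l\le\tilde\alpha_{l+1}\le\alpha_{l+1}$ gives $0\le\tilde\alpha_{l+1}-\tilde\alpha_l\le\alpha_{l+1}-\alpha_l$ (note the sandwich also shows this boundary case can only arise when $\alpha_l\le\alpha_{l+1}$, so the final passage to absolute values is harmless). Your argument is shorter and rests only on the standard characterisation of isotonic regression already citable from \citet{Barlow1972}, avoiding both the convergence claim for the cyclic pairwise-averaging scheme (which is not PAVA in the usual weighted-pooling sense and genuinely needs the external citation) and the somewhat delicate bookkeeping of the paper's case analysis; the paper's version, in exchange, stays closer to an explicit algorithmic description of the projection. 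Both proofs implicitly read the statement's ``for all $l\in[p]$'' as $l\in[p-1]$.
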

\begin{proof}
We will prove this by first giving an algorithm to compute the projection. Define
\begin{align*}
    \psi_l(\bm \alpha)\coloneqq
    \begin{cases}
        \bm \alpha &\text{if }\alpha_{l}\leq \alpha_{l+1}\\
        \lrbr{\alpha_1,\dotsc,\frac{\alpha_l+\alpha_{l+1}}{2},\frac{\alpha_l+\alpha_{l+1}}{2},\alpha_{m}}&\text{if }\alpha_l>\alpha_l+1.
    \end{cases}
\end{align*}

Now define $\bm \alpha^{(0,0)}=\bm \alpha$, $\bm \alpha^{(n,i)}=\bm \psi_i(\alpha^{(n,i-1)})$ for $i\in[p-1]$ and $\bm \alpha^{(n,0)}=\bm \alpha^{(n-1,p-1)}$ for $n\in\N$. 

Finally define $\bm \alpha^{(t)}=\bm \alpha^{(\floor{t/p},\text{mod}(t,p)}$. Then by \citet{Yang2019} we know that $\lim_{t\rightarrow\infty}\bm \alpha^{(t)}=\proj_{\iso(p)}(\bm \alpha)$.

Now if $|\alpha^{(n,l+1)}_{l+1}-\alpha^{(n,l+1)}_{l}|>|\alpha^{(n,l)}_{l+1}-\alpha^{(n,l)}_{l}|$. Then $\alpha^{(n,l+1)}_{l+1}$ has be moved by $\psi_{l+1}$. 
Hence $\alpha^{(n,l+1)}_{l+1}<\alpha^{(n,l)}_{l+1}$. For this to increase the distance then we must have
\begin{align*}
    \alpha^{(n,l+1)}_{l+1}<\alpha^{(n,l+1)}_{l}.
\end{align*}
Therefore as $\alpha^{(n+1,l-1)}_l\geq\alpha^{n,l+1}_{l}$, we have that
\begin{align*}
    \alpha^{(n+1,l)}_{l+1}&=\alpha^{(n+1,l)}_{l}.
\end{align*}

By a similar (simpler) argument $|\alpha^{(n,l-1)}_l-\alpha^{(n,l-1)}_{l+1}|>|\alpha^{(n,l-2)}_l-\alpha^{(n,l-2)}_{l+1}|$ then $\alpha^{(n,l)}_l=\alpha^{(n,l)}_{l+1}$.
Hence if any iteration moves adjacent points further apart, a later iteration will make them equal meaning that on convergence they will be equal. Hence we have proved our claim.
\end{proof}

We now have the result which justifies our choice to take the isotonic projection of the data which is take from \citet{Yang2019}.
\begin{prop}[Theorem 1 from \citet{Yang2019}]\label{prop:iso_proj}
    Let $\bm z^*,\hat {\bm z}\in\R^{p}$ with $\bm{z}^*\in\iso(p)$. 
    Then for $\tilde{\bm z}\coloneqq\proj_{\iso(p)}(\hat{\bm z})$,
    \begin{align*}
        \max_{l}|z^*_{l}-\tilde z_{l}|\leq \max_{l}|z^*_{l}-\hat z_{l}|.
    \end{align*}
\end{prop}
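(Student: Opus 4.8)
The plan is to prove the claim via the classical \emph{min--max} (equivalently max--min) representation of the Euclidean projection onto the monotone cone, which expresses each coordinate of $\tilde{\bm z}$ as a simple average over a consecutive block of entries of $\hat{\bm z}$. Concretely, writing $\mathrm{Av}(u,v)\coloneqq \frac{1}{v-u+1}\sum_{i=u}^{v}\hat z_i$ for $1\leq u\leq v\leq p$, the projection $\tilde{\bm z}=\proj_{\iso(p)}(\hat{\bm z})$ satisfies, for every $l\in[p]$,
\[
\tilde z_l = \min_{v\geq l}\max_{u\leq l}\mathrm{Av}(u,v) = \max_{u\leq l}\min_{v\geq l}\mathrm{Av}(u,v).
\]
This identity is standard for isotonic regression and can be cited from the order-restricted inference literature; alternatively it can be derived from the PAVA block structure already exploited in Proposition \ref{prop:proj_step}. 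With this representation in hand the proof reduces to two one-sided estimates obtained by choosing convenient indices inside the outer optimisation.

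Set $M\coloneqq \max_l |z^*_l-\hat z_l|$, so that $z^*_i-M\leq \hat z_i\leq z^*_i+M$ for every $i$. For the upper bound I would use the min--max form and simply take $v=l$ in the outer minimisation, giving $\tilde z_l\leq \max_{u\leq l}\mathrm{Av}(u,l)$. For any $u\leq l$, monotonicity of $\bm z^*$ yields $z^*_i\leq z^*_l$ whenever $i\leq l$, hence $\mathrm{Av}(u,l)\leq \frac{1}{l-u+1}\sum_{i=u}^{l}(z^*_i+M)\leq z^*_l+M$; maximising over $u$ then gives $\tilde z_l\leq z^*_l+M$.

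Symmetrically, for the lower bound I would use the max--min form and take $u=l$ in the outer maximisation, so that $\tilde z_l\geq \min_{v\geq l}\mathrm{Av}(l,v)$. For $v\geq l$, monotonicity gives $z^*_i\geq z^*_l$ for $i\geq l$, whence $\mathrm{Av}(l,v)\geq \frac{1}{v-l+1}\sum_{i=l}^{v}(z^*_i-M)\geq z^*_l-M$, and minimising over $v$ yields $\tilde z_l\geq z^*_l-M$. Combining the two bounds gives $|z^*_l-\tilde z_l|\leq M$ for every $l$, i.e.\ $\max_l|z^*_l-\tilde z_l|\leq \max_l|z^*_l-\hat z_l|$, as required.

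The only genuine subtlety --- and the step I would be most careful about --- is establishing the min--max representation and then invoking the correct one of its two equivalent forms for each direction: the min--max form makes the upper bound immediate because fixing $v=l$ truncates the averaging block on the right precisely where monotonicity of $\bm z^*$ is favourable, while the max--min form does the analogous job for the lower bound by fixing $u=l$. Everything else is an elementary averaging argument, so no delicate estimates or probabilistic tools are needed.
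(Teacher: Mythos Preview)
Your argument is correct. The min--max/max--min representation of the isotonic projection is a classical identity (see e.g.\ Robertson, Wright and Dykstra, \emph{Order Restricted Statistical Inference}), and once it is in hand your two one-sided estimates via the choices $v=l$ and $u=l$ are clean and fully rigorous.

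Regarding comparison: the paper does not actually supply its own proof of this proposition --- it is simply imported as Theorem~1 of \citet{Yang2019} and used as a black box. So your write-up is more self-contained than what the paper offers. The only caveat is that you rely on the min--max formula as a cited fact; if you wanted to make the argument entirely internal you would either need to prove that identity or instead argue directly from the PAVA block structure (each $\tilde z_l$ equals the average of $\hat z$ over some contiguous block containing $l$, and the same averaging argument applies), but as written your approach is perfectly sound.
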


We now use this isotonic projection to obtain supremum bounds on the accuracy of our estimator. 
\begin{prop}[Supremum bound on $\hat \contrastfunc$ accuracy]\label{prop:unif_bound}
    Fix $\bm x\in\cal X, y\in y_0$
    and let $\hat\contrastfunc$ be our original estimate of $\contrastfunc^*$. For $m\in\N$, take $\{Y^{(l)}\}_{l=1}^m$ to be a potentially random set of points in $\cal Y$ in increasing order. 
    
    Now define $\bm \alpha\in\R^m$ by $\alpha_l\coloneqq\hat\contrastfunc(y_0,Y^{(l)}|\bm x)$ and $\tilde{\bm \alpha}\coloneqq\proj_{\iso(m)}(\bm \alpha)$.
    Finally, define $\tilde \contrastfunc$ to be the piecewise constant right continuous function with $\tilde h(y_0,Y^{(l)}|\bm x)\coloneqq\tilde\alpha_l$.  
    
    Suppose for any $y_1\in\cal Y$, $n\in\N$, and $\delta,\delta'>0$ that
    \begin{align*}
        \prob\lrbr{\left|\tilde \contrastfunc(y_0,y_1|\bm x)-\contrastfunc^*(y_0,y_1|\bm x)\right|\geq \eps_\contrastfunc(n,\delta)}&\leq\delta\\
        \prob\lrbr{\max_{l\in[m]_0}\abs{\hat \contrastfunc(y_0,Y^{(l)}|\bm x)-\hat \contrastfunc(y_0,Y^{(l+1)}|\bm x)}\geq\eps_{step}(n,m,\delta)}&\leq\delta
    \end{align*}
    where we take $[m]_0\coloneqq[m]\cup\{0\}$, $\hat h(y_0,Y^{(0)}|\bm x)\coloneqq-1$, and $\hat h(y_0,Y^{(m+1)}|\bm x)\coloneqq 1$.

    Then for any $\delta,\delta'>0,n\in\N$,
        \begin{align*}
            \prob\lrbr{\sup_{y_1\in\cal Y}\left|\hat \contrastfunc(y_0,y_1|\bm x)-\contrastfunc^*(y_0,y_1|\bm x)\right|\leq \eps_\contrastfunc(n,\delta/m)+\eps_{step}(n,m,\delta')}\geq 1-\delta-\delta'.
        \end{align*}
\end{prop}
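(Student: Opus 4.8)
The plan is to pass from the supremum over the continuum $y_1\in\cal Y$ to the finite grid $\{Y^{(l)}\}_{l=1}^m$, exploiting that $\contrastfunc^*(y_0,\cdot|\bm x)$ is continuous and nondecreasing while the piecewise-constant interpolant through $\tilde{\bm\alpha}$, namely $\tilde\contrastfunc(y_0,\cdot|\bm x)$, changes only at the $Y^{(l)}$. This is the quantity whose supremum error actually feeds into Algorithm~\ref{alg:g_est} and Theorem~\ref{thm:g_acc}; when the evaluation grid contains every jump of $\hat\contrastfunc$ — as it does in Theorem~\ref{thm:g_acc} — the same reduction applies to $\hat\contrastfunc$ itself.

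First I would introduce two good events. Let $E_1$ be the event that $|\tilde\contrastfunc(y_0,Y^{(l)}|\bm x)-\contrastfunc^*(y_0,Y^{(l)}|\bm x)|\leq\eps_{\contrastfunc}(n,\delta/m)$ for all $l\in[m]$ simultaneously; applying the first hypothesis with parameter $\delta/m$ at each of the $m$ grid points and a union bound gives $\prob(E_1)\geq 1-\delta$. Let $E_2$ be the event $\max_{l\in[m]_0}|\hat\contrastfunc(y_0,Y^{(l)}|\bm x)-\hat\contrastfunc(y_0,Y^{(l+1)}|\bm x)|\leq\eps_{step}(n,m,\delta')$, so $\prob(E_2)\geq 1-\delta'$. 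On $E_2$, Proposition~\ref{prop:proj_step} transfers the bound to the projected vector, giving $0\leq\tilde\alpha_{l+1}-\tilde\alpha_l\leq\eps_{step}(n,m,\delta')$ for $l\in[m-1]$ (since $\tilde{\bm\alpha}$ is isotonic); the two end conventions $\alpha_0\coloneqq-1$ and $\alpha_{m+1}\coloneqq1$ also carry through since $\tilde\alpha_1\leq\alpha_1$ and $\tilde\alpha_m\geq\alpha_m$, as is immediate from the min--max formula for isotonic regression (or from PAVA). A union bound gives $\prob(E_1\cap E_2)\geq 1-\delta-\delta'$.

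Working on $E_1\cap E_2$, I would fix an arbitrary $y_1\in\cal Y$ and pick $l\in[m]_0$ with $Y^{(l)}\leq y_1<Y^{(l+1)}$, using virtual endpoints $Y^{(0)}\coloneqq-\infty$, $Y^{(m+1)}\coloneqq+\infty$ with assigned values $-1,1$; since $\contrastfunc^*(y_0,y_1|\bm x)\in[-\ccdf[0](y_0|\bm x),1-\ccdf[0](y_0|\bm x)]\subseteq[-1,1]$ these endpoints genuinely sandwich $\contrastfunc^*$. On $[Y^{(l)},Y^{(l+1)})$ the estimate equals the constant $\tilde\alpha_l$, while monotone continuity of $\contrastfunc^*(y_0,\cdot|\bm x)$ gives $\contrastfunc^*(y_0,Y^{(l)}|\bm x)\leq\contrastfunc^*(y_0,y_1|\bm x)\leq\contrastfunc^*(y_0,Y^{(l+1)}|\bm x)$. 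A two-way case split closes it: if $\contrastfunc^*(y_0,y_1|\bm x)\leq\tilde\alpha_l$, then $|\tilde\alpha_l-\contrastfunc^*(y_0,y_1|\bm x)|\leq\tilde\alpha_l-\contrastfunc^*(y_0,Y^{(l)}|\bm x)\leq\eps_{\contrastfunc}(n,\delta/m)$ by $E_1$; if $\contrastfunc^*(y_0,y_1|\bm x)>\tilde\alpha_l$, then $|\tilde\alpha_l-\contrastfunc^*(y_0,y_1|\bm x)|\leq\contrastfunc^*(y_0,Y^{(l+1)}|\bm x)-\tilde\alpha_l\leq(\tilde\alpha_{l+1}+\eps_{\contrastfunc}(n,\delta/m))-\tilde\alpha_l\leq\eps_{step}(n,m,\delta')+\eps_{\contrastfunc}(n,\delta/m)$, using $E_1$ at $Y^{(l+1)}$ and the transferred step bound. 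The extreme indices $l=0,m$ go through identically via the conventions and the facts $\tilde\alpha_1\leq\alpha_1$, $\tilde\alpha_m\geq\alpha_m$. Taking $\sup_{y_1}$ on $E_1\cap E_2$ gives the claim.

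The step I expect to be the main obstacle is the bookkeeping around the isotonic projection and the two virtual endpoints: the first hypothesis is phrased for $\tilde\contrastfunc$ but the jump bound for $\hat\contrastfunc$, so one must route the latter through Proposition~\ref{prop:proj_step} and separately verify that it also controls $|\tilde\alpha_1-(-1)|$ and $|1-\tilde\alpha_m|$ at the ends — exactly where the prefix/suffix description of $\proj_{\iso}$ is needed. A secondary subtlety is that the grid $\{Y^{(l)}\}$ may be data-dependent, so invoking a ``for every fixed $y_1$'' bound at $y_1=Y^{(l)}$ is legitimate only because the underlying pointwise estimate (Proposition~\ref{prop:h_acc}) holds for an arbitrary fixed $y_1$ with constants independent of $y_1$; alternatively one conditions on the split generating the grid before taking the union bound.
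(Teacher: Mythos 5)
Your proposal is correct and follows essentially the same route as the paper's proof: a union bound over the $m$ grid points at level $\delta/m$, transfer of the step-size bound to the projected vector via Proposition~\ref{prop:proj_step} (and the pointwise accuracy via Proposition~\ref{prop:iso_proj}), and monotonicity of $\contrastfunc^*(y_0,\cdot|\bm x)$ to sandwich the error at an arbitrary $y_1$ between errors at adjacent grid points, incurring at most one extra $\eps_{step}$ term. Your two-way case split is equivalent to the paper's $\Lambda_1/\Lambda_2$ sandwich, and your explicit treatment of the virtual endpoints and the data-dependence of the grid is if anything slightly more careful than the paper's.
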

\begin{proof}
    For each $l\in [m]$ define the event 
    \begin{align*}
        E_{\contrastfunc,l}\coloneqq\lrbrc{\left|\hat \contrastfunc(y_0,Y_1^{(l)}|\bm x) - \contrastfunc^*(y_0,Y_1^{(l)}|\bm x)\right|\leq\eps_h(n,\delta/m)}
    \end{align*}
    and take  $E_{\contrastfunc}\coloneqq\bigcap_{l=1}^n\ E_{\contrastfunc,l}$.
    Additionally define the event
    \begin{align*}
        E_{step}\coloneqq\lrbrc{\max_{l\in[m]_0}\abs{\hat \contrastfunc(y_0,Y^{(l)}|\bm x)-\hat \contrastfunc(y_0,Y^{(l+1)}|\bm x)}<\eps_{step}(n,m,\delta)}
    \end{align*}
    Then $E_{\contrastfunc}$ and $E_{step}$ hold w.p.s at least $1-\delta$ and $1-\delta'$ respectively.
    Also under $E_{\contrastfunc}, E_{step}$, by Propositions \ref{prop:proj_step} \& \ref{prop:iso_proj}, we have that
    \begin{align*}
        &\max_{l\in[m]}\left|\tilde \contrastfunc(y_0,Y_1^{(l)}|\bm x) - \contrastfunc^*(y_0,Y_1^{(l)}|\bm x)\right|\leq\eps_{\contrastfunc}(n,\delta/m)\\
        &\max_{l\in[m]_0}\abs{\tilde \contrastfunc(y_0,Y^{(l)}|\bm x)-\tilde \contrastfunc(y_0,Y^{(l+1)}|\bm x)}<\eps_{step}(n,m,\delta)
    \end{align*}

    We then have that for any $y_1\in\cal Y$ there exists $i\in[m]$ such that $y_1^{(l-1)}\leq y_1\leq y_1^{(l)}$. 

    Hence by monotonicity, we have the following 2 inequalities
    \begin{align*}
       \hat \contrastfunc(y_0,y_1|\bm x) - \contrastfunc^*(y_0,y_1|\bm x)&\geq\underbrace{\tilde \contrastfunc(y_0,Y_1^{(l-1)}|\bm x) - \contrastfunc^*(y_0,Y_1^{(l)}|\bm x)}_{\Lambda_1}\\
       \hat \contrastfunc(y_0,y_1|\bm x) - \contrastfunc^*(y_0,y_1|\bm x)&       \leq \underbrace{\tilde \contrastfunc(y_0,Y_1^{(l)}|\bm x) - \contrastfunc^*(y_0,Y_1^{(l-1)}|\bm x)}_{\Lambda_2}.
    \end{align*}
    Now for the first inequality we have
    \begin{align*}
        \Lambda_1&=\tilde \contrastfunc(y_0,Y_1^{(l-1)}|\bm x)-\tilde \contrastfunc^*(y_0,Y_1^{(l)}|\bm x) 
        +\tilde\contrastfunc(y_0,Y_1^{(l)}|\bm x) - \contrastfunc^*(y_0,Y_1^{(l)}|\bm x)\\
        &\geq -\eps_{\contrastfunc}(n,\delta)  - \eps_{step}(n,m,\delta').
    \end{align*}
    With the final inequality coming from $E_{h,l-1}$ and our definition of $h$.
    By a similar argument we get from $E_{h,l}$ that $\Lambda_2\leq \eps_{\contrastfunc}(n,\delta) +\eps_{step}(n,m,\delta')$.
    Again we can use the same approach to get that under $E_{h,l-1}, E_{h,l}, E_{step}$
    \begin{align*}
       -\eps_{\contrastfunc}(n,\delta)-\eps_{step}(n,m,\delta')\leq \contrastfunc^*(y_0,y_1|\bm x) - \tilde\contrastfunc(y_0,y_1|\bm x)\leq\eps_{\contrastfunc}(n,\delta) +\eps_{step}(n,m,\delta'). 
    \end{align*}
    Hence for our specific $y_1$ under $E_{h,l},E_{h,l-1}, E_{step}$
    \begin{align*}
       \abs{\contrastfunc^*(y_0,y_1|\bm x) - \tilde\contrastfunc(y_0,y_1|\bm x)}\leq \eps_h(n,\delta)+\eps_{step}(n,m,\delta')
    \end{align*}
    
    Now as this inequality holds for arbitrary $y_1$ under $E_h\cap E_{step}$ (an event which not depend on our choice of $y_1$) and the intersection of these two events holds w.p. at least $1-\delta-\delta'$ by the union bound we have our result.
\end{proof}

Our final key result before we can piece them all together will be to obtain accuracy in $\transfunc$ from our accuracy in $\contrastfunc^*$
\begin{theorem}[Single point accuracy bound]\label{thm:single_g_acc_from_cdf}
Fix $\bm x, y_0$, assume that $\contrastfunc,\tilde \contrastfunc$ are strictly monotonic in $y_1$ and suppose that
\begin{align*}
    \sup_{y_1}|\contrastfunc^*(y_0,y_1|\bm x)-\tilde\contrastfunc(y_0,y_1|\bm x)|<\eps.
\end{align*}
Additionally let $\alpha\coloneqq \ccdf[0](y_0|\bm x)$ and assume that $f_{Y|X,A=1}(y_1|\bm x)>\eta$ for all $y_1\in \ccdf[1]^{-1}(B_\eps(\alpha)|\bm x)$ where here we are taking $\ccdf[1]^{-1}(A|\bm x)$ to be the pre-image in $\cal Y$ of the set $A$ for fixed $\bm x$.

Then
\begin{align*}
    |\transfunc^*(y_0|\bm x)-\hat\transfunc(y_0|\bm x)|\leq\frac{2\eps}{\eta},
\end{align*}
where $\transfunc^*(y_0|\bm x)$ is the unique $y_1$ such that $\contrastfunc^*(y_0,y_1|\bm x)=0$.
\end{theorem}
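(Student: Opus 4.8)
The plan is to turn the uniform closeness of $\tilde\contrastfunc$ to $\contrastfunc^*$ in the second argument into closeness of their respective zeros, using that $\contrastfunc^*(y_0,\cdot|\bm x)$ has slope bounded below by $\eta$ on the relevant range. First I would unpack the objects: by definition of the CCDF contrasting function, $\contrastfunc^*(y_0,y_1|\bm x)=\ccdf[1](y_1|\bm x)-\alpha$ with $\alpha=\ccdf[0](y_0|\bm x)$; since $Y\mid X=\bm x, A=1$ has strictly positive density on its support, $\ccdf[1](\cdot|\bm x)$ and hence $\contrastfunc^*(y_0,\cdot|\bm x)$ is strictly increasing, so $y_1^{*}:=\transfunc^{*}(y_0|\bm x)$ is its unique zero, i.e.\ $\ccdf[1](y_1^{*}|\bm x)=\alpha$. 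Write $\hat y_1:=\hat\transfunc(y_0|\bm x)$; by strict monotonicity of $\tilde\contrastfunc(y_0,\cdot|\bm x)$ this is (the point at which Algorithm \ref{alg:g_est} drives) $\tilde\contrastfunc(y_0,\cdot|\bm x)$ closest to $0$.

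The second step is to control $|\contrastfunc^{*}(y_0,\hat y_1|\bm x)|$. Evaluating the sup bound at $y_1^{*}$ and using $\contrastfunc^{*}(y_0,y_1^{*}|\bm x)=0$ gives $|\tilde\contrastfunc(y_0,y_1^{*}|\bm x)|<\eps$; because $\hat y_1$ is chosen to make $|\tilde\contrastfunc(y_0,\cdot|\bm x)|$ as small as possible, this forces $|\tilde\contrastfunc(y_0,\hat y_1|\bm x)|<\eps$ as well (and, if $\hat\transfunc(y_0|\bm x)$ is instead taken to be the exact zero of $\tilde\contrastfunc(y_0,\cdot|\bm x)$, which exists and is unique by strict monotonicity, this quantity is simply $0$). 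The triangle inequality together with the sup bound evaluated at $\hat y_1$ then yields $|\contrastfunc^{*}(y_0,\hat y_1|\bm x)|\le|\tilde\contrastfunc(y_0,\hat y_1|\bm x)|+\sup_{y_1}|\contrastfunc^{*}(y_0,y_1|\bm x)-\tilde\contrastfunc(y_0,y_1|\bm x)|<2\eps$ (respectively $<\eps$ in the exact-zero reading). Rewriting via $\contrastfunc^{*}(y_0,y_1|\bm x)=\ccdf[1](y_1|\bm x)-\alpha$, this says $|\ccdf[1](\hat y_1|\bm x)-\alpha|<2\eps$, while $\ccdf[1](y_1^{*}|\bm x)=\alpha$ exactly.

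The final step quantitatively inverts $y_1\mapsto\ccdf[1](y_1|\bm x)$. Both $\ccdf[1](\hat y_1|\bm x)$ and $\ccdf[1](y_1^{*}|\bm x)$ lie in the ball of radius $2\eps$ about $\alpha$, so by monotonicity and continuity of $\ccdf[1](\cdot|\bm x)$ the entire interval with endpoints $y_1^{*}$ and $\hat y_1$ is mapped into $B_{2\eps}(\alpha)$, i.e.\ it lies inside $\ccdf[1]^{-1}(B_{2\eps}(\alpha)|\bm x)$, on which $f_{Y|X,A=1}(\cdot|\bm x)>\eta$ by hypothesis. Since $\ccdf[1](\cdot|\bm x)$ is the integral of this density, $2\eps>\bigl|\ccdf[1](\hat y_1|\bm x)-\ccdf[1](y_1^{*}|\bm x)\bigr|=\bigl|\int_{y_1^{*}}^{\hat y_1}f_{Y|X,A=1}(t|\bm x)\,dt\bigr|\ge\eta\,|\hat y_1-y_1^{*}|$, giving $|\hat\transfunc(y_0|\bm x)-\transfunc^{*}(y_0|\bm x)|<2\eps/\eta$ as claimed (the strict inequalities pass to the stated non-strict bound immediately).

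The one genuinely delicate point — and the step I expect to need the most care — is the last one: one must verify that the density lower bound is being applied on a set that actually contains the whole segment joining $y_1^{*}$ and $\hat y_1$. This is exactly why the hypothesis is stated in terms of the preimage $\ccdf[1]^{-1}(B_\eps(\alpha)|\bm x)$ rather than a Euclidean neighbourhood of $\transfunc^{*}(y_0|\bm x)$: a preimage of a ball under the monotone map $\ccdf[1](\cdot|\bm x)$ is automatically an interval, so once the endpoint values $\ccdf[1](y_1^{*}|\bm x)$ and $\ccdf[1](\hat y_1|\bm x)$ are pinned inside that ball, no part of the segment can escape the region where $f_{Y|X,A=1}(\cdot|\bm x)>\eta$. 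In the exact-zero reading the relevant ball is $B_\eps(\alpha)$ exactly as written and the constant $2\eps/\eta$ holds with room to spare, while the argmin reading merely replaces $\eps$ by $2\eps$ throughout.
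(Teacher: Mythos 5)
Your proposal is correct and follows essentially the same route as the paper's proof: bound $|\ccdf[1](\hat y_1|\bm x)-\ccdf[1](y_1^*|\bm x)|\leq 2\eps$ by combining the argmin property of $\hat y_1$ with two applications of the triangle inequality, then convert to the outcome scale using the density lower bound $\eta$. You are also right to flag that the interval joining $y_1^*$ and $\hat y_1$ is only guaranteed to map into $B_{2\eps}(\alpha)$ rather than $B_{\eps}(\alpha)$, a point the paper's own proof glosses over when it bounds $\max_{y_1\in(y_1^*,\hat y_1)}1/f(y_1|\bm x)$ by $1/\eta$.
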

\begin{proof}
    Let $y^*_1\coloneqq \transfunc^*(y_0|\bm x)$ and $\hat y_1\coloneqq\hat\transfunc(y_0|\bm x)$ so that $\ccdf[1](y^*_1|\bm x)$. From our accuracy assumption on $\hat\contrastfunc$ in the set-up of the Theorem we have
    \begin{align*}
        |\ccdf[1](y_1^*|\bm x)-\ccdf[1](\hat y_1|\bm x)|&=|\contrastfunc^*(y_0, y_1^*|\bm x)-\contrastfunc^*(y_0,\hat y_1|\bm x)|\\
        &=|\contrastfunc^*(y_0,\hat y_1|\bm x)| \\
        &\leq|\contrastfunc^*(y_0,\hat y_1|\bm x)-\hat \contrastfunc(y_0,\hat y_1|\bm x)|+|\hat \contrastfunc(y_0,\hat y_1|\bm x)|\\
        &\leq|\contrastfunc^*(y_0,\hat y_1|\bm x)-\hat \contrastfunc(y_0,\hat y_1|\bm x)|+| \hat\contrastfunc(y_0,y_1^*|\bm x)|\\
        &\leq|\contrastfunc^*(y_0,\hat y_1|\bm x)-\hat \contrastfunc(y_0,\hat y_1|\bm x)|+| \hat\contrastfunc(y_0,y_1^*|\bm x)-\contrastfunc^*(y_0,y_1^*|\bm x)|\\
        &\leq 2\eps,
    \end{align*}
    where the second and fifth line come from the definition of $y_1^*$ and the fourth from the definition of $\hat y_1$.
    
    If we define $\partial_k f$ to be the derivative with respect to the $k$\textsuperscript{th} argument of $f$ then we have 
    \begin{align*}
        |\hat\transfunc(y_0|\bm x)-\transfunc^*(y_0|\bm x)|&=
        |\ccdf[1]^{-1}(\ccdf[1](y_1^*|\bm x)|\bm x)-\ccdf[1]^{-1}(\ccdf[1](\hat y_1|\bm x)|\bm x)|\\
        &=\left|\int_{\ccdf[1](y_1^*|\bm x)}^{\ccdf[1](\hat y_1|\bm x)}\partial _1 \invccdf[1](\beta|\bm x)\diff \beta\right|\\
        &=\left|\int_{\ccdf[1](y_1^*|\bm x)}^{\ccdf[1](\hat y_1|\bm x)}\inv{f((\invccdf[1](\beta|\bm x)|\bm x)}\diff \beta\right|\\
        &\leq2\eps \max_{y_1\in(y_1^*,\hat y_1)} \inv{f(y_1|\bm x)}\quad\text{\jgnote{This step could be refined.}}\\
        &\leq \frac{2\eps}{\eta}.
    \end{align*}
\end{proof}

\subsubsection{Proof of Theorem \ref{thm:g_acc}}
We can now finally combine all these results to prove Theorem \ref{thm:g_acc}.
\begin{proof}[Proof of Theorem \ref{thm:g_acc}]
    From proposition \ref{prop:h_acc} we have that
    \begin{align*}
         &\prob\lrbr{\left|\hat \contrastfunc(y_0,y_1|\bm x)-\contrastfunc^*(y_0,y_1|\bm x)\right|\leq \eps_{\contrastfunc}(n,\delta) }\geq 1-\delta\intertext{with}
         &\eps_h(n,\delta)\coloneqq\eps_{T_n}(n,\delta/4)+\eps_\alpha(n,\delta/4)\eps_\beta(n,\delta/4)+\eps_\gamma(n,\delta/4).
    \end{align*}
    Now define $\{Y^{(i)}\}_{i=1}^n$ to be the sorted version of $\{Y_i\}_{i=1}^n$. Then by Proposition \ref{prop:h_step} we have
    \begin{align*}
        \prob\lrbr{\max_{j\in[n]_0} |\hat\contrastfunc(y_0,Y_1^{(i)}|\bm x)-\contrastfunc^*(y_0,Y_1^{(i)}|\bm x)|\geq\eps_{\contrastfunc-step}(n,\delta)}\leq\delta.
    \end{align*}
    Plugging this into Proposition \ref{prop:unif_bound} with $\eps_{step}(n,m,\delta)$ replaced by $\xi^{-1}\eps_{\bm w}(n,\delta/n))$ and $\eps_{\contrastfunc}(n,\delta)$ replaced with $\eps_{\contrastfunc}(n,\delta/2)$ gives 
    \begin{align*}
            \prob\lrbr{\sup_{y_1\in\cal Y}\left|\tilde \contrastfunc(y_0,y_1|\bm x)-\contrastfunc^*(y_0,y_1|\bm x)\right|\leq \eps_\contrastfunc(n,\delta/(2n))+\xi^{-1}\eps_{\bm w}(n,\delta/n))}\geq 1-\delta.
    \end{align*}
    Now let $y_1^*\coloneqq\transfunc^*(y_0|\bm x)$. Then for any $y_1\in\cal Y$ if $|y_1'-y_1^*|>\densityradius$ this implies that $|\ccdf[1](y_1'|\bm x)-\ccdf[1](y_1^*|\bm x)|>\densityradius\eta$ by our lower bound on the density in $B_{\densityradius}(y_1^*)$. Hence by the contrapositive, if $|\ccdf[1](y_1'|\bm x)-\ccdf[1](y_1^*|\bm x)|\leq\densityradius\eta$ then $|y_1'-y_1^*|<\densityradius$. 
    
    Now as \begin{align*}
        |\ccdf[1](y_1'|\bm x)-\ccdf[1](y_1^*|\bm x)|\leq\densityradius\eta&\Leftrightarrow y_1'\in \invccdf[1](B_{\densityradius\eta}(\ccdf[1](y_1^*)))\\
        &\Leftrightarrow y_1'\in B_{\densityradius\eta}\invccdf[1](B_{\densityradius\eta}(\ccdf[0](y_0)))
    \end{align*}
    
    Hence if $n$ is sufficiently large so that $\eps\coloneqq\eps_\contrastfunc(n,\delta/(2n))+\xi^{-1}\eps_{\bm w}(n,\delta/n))\leq \eta\densityradius$ then we satisfy the bounded density condition of Theorem \ref{thm:single_g_acc_from_cdf}. 
    Therefore we can plug our bound into theorem \ref{thm:single_g_acc_from_cdf} gives
    \begin{align*}
            \prob\lrbr{\left|\hat\transfunc(y_0|\bm x)-\transfunc^*(y_0|\bm x)\right|\leq2\lrbr{ \eta^{-1}\eps_\contrastfunc(n,\delta/(2n))+\xi^{-1}\eps_{\bm w}(n,\delta/n))}}\geq 1-\delta.
    \end{align*}
\end{proof}

\subsection{Extension to expectation}\label{app:g_exp_acc}

\begin{prop}\label{prop:expectation_extension}
    Let $Y_0, \rsamp$ be RVs on $\cal Y, \cal Z^n$ respectively (with $\rsamp$ representing data used to fit a model and $Y_0$ representing the point where the model is fit). Now take $l(Y_0,\rsamp)$ to be a non-negative bounded loss so that $l(Y_0,\rsamp)<l_{\max}$ a.s. .
    Suppose that for any $\delta>0$, for all $y\in\cal Y$
    \begin{align*}
        \prob(l(y,\rsamp)>\eps(n,\delta))<\delta
    \end{align*}
    Then for any $t,\delta_0\in[0,1]$ and $p,q\in[1,\infty]$ such that $1/p+1/q=1$
    \begin{align*}
        \prob\lrbr{\E[l(Y_0,\rsamp)|\rsamp]\leq t^{1/q}\E[l(Y_0,\rsamp)^p|\rsamp]^{1/p}+\eps(n,\delta_0)}\geq 1-\frac{\delta_0}{t}.
    \end{align*}
    \jgnote{This result is potentially problematic as the expectation term on the RHS from Holder's ineq. depends upon $\rsamp$ making it harder to reasonable bound. For now we are fine to assume $l$ bounded as follows.}
    In particular if $l(y,D)<l_{\max}$ a.s. then taking $q=1,p=\infty$ yields
    \begin{align*}
        \prob\lrbr{\E[l(Y_0,\rsamp)|\rsamp]\leq tl_{\max}+\eps(n,\delta_0)}\geq 1-\frac{\delta_0}{t}.
    \end{align*}
\end{prop}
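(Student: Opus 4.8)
The plan is to turn the pointwise-in-$y$ tail bound into a bound on the \emph{conditional} exceedance probability $q_\rsamp\coloneqq\prob\big(L>\tilde\eps\mid\rsamp\big)$, where I write $L\coloneqq l(Y_0,\rsamp)$ and $\tilde\eps\coloneqq\eps(n,\delta_0)$; to control $q_\rsamp$ via Markov's inequality; and then to split $\E[L\mid\rsamp]$ into the contribution where $L\le\tilde\eps$ and the contribution where $L>\tilde\eps$, handling the latter with a conditional H\"older inequality.

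First I would note that, since the hypothesis gives $\prob\big(l(y,\rsamp)>\tilde\eps\big)<\delta_0$ for every \emph{deterministic} $y\in\cal Y$ and $Y_0$ is drawn independently of $\rsamp$, integrating over the law of $Y_0$ (Fubini) gives $\prob(L>\tilde\eps)\le\delta_0$; by the tower property $\prob(L>\tilde\eps)=\E[q_\rsamp]$, where $q_\rsamp\in[0,1]$ is $\rsamp$-measurable, so $\E[q_\rsamp]\le\delta_0$. Applying Markov's inequality to $q_\rsamp\ge0$ then gives, for any $t\in(0,1]$, $\prob(q_\rsamp>t)\le\E[q_\rsamp]/t\le\delta_0/t$, so the event $\cal E\coloneqq\{q_\rsamp\le t\}$ has probability at least $1-\delta_0/t$. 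On $\cal E$ I would decompose
\[
\E[L\mid\rsamp]=\E\big[L\,\one\{L\le\tilde\eps\}\mid\rsamp\big]+\E\big[L\,\one\{L>\tilde\eps\}\mid\rsamp\big],
\]
bound the first term by $\tilde\eps=\eps(n,\delta_0)$, and bound the second, by conditional H\"older with exponents $p,q$ and the identity $\one\{\cdot\}^q=\one\{\cdot\}$, by $\E[L^p\mid\rsamp]^{1/p}\,q_\rsamp^{1/q}\le t^{1/q}\,\E[L^p\mid\rsamp]^{1/p}$ using $q_\rsamp\le t$ on $\cal E$. Adding the two contributions gives precisely the claimed bound on an event of probability at least $1-\delta_0/t$; the ``in particular'' case is $p=\infty$, $q=1$, for which $\E[L^p\mid\rsamp]^{1/p}$ is the conditional essential supremum of $L$, at most $l_{\max}$, while $t^{1/q}=t$.

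The step I expect to matter most is the first one: converting a tail bound that holds for each fixed $y$ into a bound on $\prob(L>\tilde\eps)$, and hence on $\E[q_\rsamp]$, genuinely uses independence of the evaluation point $Y_0$ and the sample $\rsamp$, so that conditioning on $Y_0=y$ leaves the law of $l(y,\rsamp)$ unchanged. Everything after that is routine — Markov's inequality, the truncation split, conditional H\"older — as is checking measurability of $q_\rsamp$ and the harmless degenerate cases ($t=0$ renders the claim vacuous, and $\E[L^p\mid\rsamp]$ is automatically finite since $L\le l_{\max}$ a.s.).
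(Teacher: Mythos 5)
Your proposal is correct and follows essentially the same route as the paper's proof: bound $\E[\prob(L>\eps(n,\delta_0)\mid\rsamp)]$ by $\delta_0$ via Fubini and the pointwise tail bound, apply Markov's inequality to the conditional exceedance probability, and then split the conditional expectation at the threshold and use conditional H\"older on the exceedance part. Your explicit remark that the first step requires independence of $Y_0$ and $\rsamp$ is a point the paper leaves implicit, but the argument is otherwise identical.
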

\begin{proof}
    First fix $t,\delta_0\in[0,1]$
    We will first bound the probability that the number of $y$ which don't satisfy our bound isn't too large. We do this by defining the event 
    \begin{align*}
        A\coloneqq\lrbrc{l(Y_0,\rsamp)>\eps(n,\delta_0)},
    \end{align*}
    and now aim to bound the probability that $\prob(A|\rsamp)>t$ (this probability is just w.r.t $Y_0$ treating $\rsamp$ as fixed.)

    By Markov's inequality,
    \begin{align*}
        \prob(\prob(A|\rsamp)>t)&\leq \inv{t}\E[\prob(A|\rsamp)]\\
        &=\inv{t}\E[\prob(A|Y_0)]\quad\text{by Fubini's Theorem}\\
        &<\inv{t}\E[\delta_0]=\frac{\delta_0}{t}.
    \end{align*}

    Now define $B\coloneqq\{\prob(A|\rsamp)\leq t\}$ so that $\prob(B)=1-\frac{\delta_0}{t}$. Then under $B$
    \begin{align*}
        \E[l(Y_0,\rsamp)|\rsamp]&=\E[\one_A l(Y_0,\rsamp)|\rsamp]+\E[l(Y_0,\rsamp)|A^c,\rsamp]\prob(A^c|\rsamp)\\
        &\leq\E[\one_A l(Y_0,\rsamp)|\rsamp] + \eps(n,\delta_0)\quad\text{by definitions of $A$}\\
        &\leq \E[\one_A|\rsamp]^{1/q}\E[l(Y_0,\rsamp)^p|\rsamp]^{1/p}+\eps(n,\delta_0)\quad\text{by Holder's inequality}\\
        &\leq t^{1/q}\E[l(Y_0,\rsamp)^p|\rsamp]^{1/p} t+\eps(n,\delta_0)\quad\text{As we are assuming $B$}.
    \end{align*}
\end{proof}

\begin{corollary}\label{cor:g_exp_acc}
        Assume that $\ccdf[1](y|\bm x)>\eta$ for all $y_1\in\responsespace$. Then, for our fixed $\bm x\in\cal X$ and $\delta\in(0,e^{-1})$, w.p. at least, 
    \begin{align*}
    &1-\frac{\delta \diam(\cal Y)}{2\lrbr{\eta^{-1}\eps_h(n,\delta/2n)+\xi^{-1}\eps_{\bm w}(n,\delta/n))}},\\
        &\E\Big[\left|\hat \transfunc(Y|\bm x)-\transfunc^*(Y|\bm x)\right|~\Big|A=0,\hat \transfunc\bigg]\leq 4\lrbr{\eta^{-1}\eps_h(n,\delta/n)+2\xi^{-1}\eps_{\bm w}(n,\delta/n))}\\
        \text{where}\quad
        &\eps_h(\delta,n)\coloneqq\eps_{T_n}(n,\delta/4)+\eps_\alpha(n,\delta/4)\eps_\beta(n,\delta/4)+\eps_\gamma(n,\delta/4).
    \end{align*}
    In particular for $2\lrbr{\eta^{-1}\eps_h(n,\delta/n)+2\xi^{-1}\eps_{\bm w}(n,\delta/n))}\leq\log(e_1(n)/\delta)^a/e_2(n)$. For $\delta<1/e$ we have that w.p. at least $1-\delta$
    \begin{align*}
       \E\Big[\left|\hat \transfunc(Y|\bm x)-\transfunc^*(Y|\bm x)\right|~\Big|A=0,\hat \transfunc\bigg]\leq \log(2\diam(\cal Y)e_2(n)e_1(n)/\delta)^a/e_2(n)
    \end{align*}
\end{corollary}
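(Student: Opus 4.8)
\emph{Proof proposal.} The plan is to obtain the corollary by plugging the pointwise-in-$y_0$ guarantee of Theorem~\ref{thm:g_acc} into the expectation-extension device of Proposition~\ref{prop:expectation_extension}, applied to the loss $l(y,\rsamp)\coloneqq\abs{\hat\transfunc(y|\bm x)-\transfunc^*(y|\bm x)}$ at our fixed $\bm x\in\covariatespace$. A first useful observation is that $l$ depends on the data only through the fitted function $\hat\transfunc$, and the evaluation point $Y$ (drawn with $A=0$) is independent of $\rsamp$; hence $\E[l\mid\rsamp]=\E[l\mid A=0,\hat\transfunc]$ almost surely, so the high-probability statement Proposition~\ref{prop:expectation_extension} produces for the former is exactly a statement about the latter.

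Next I would note that the standing hypothesis ``$\ccdf[1](y|\bm x)>\eta$ for all $y\in\responsespace$'' makes Assumption~\ref{ass:g_acc}(a) hold at \emph{every} evaluation point $y_0\in\responsespace$ with the \emph{same} $\eta$ and with $\densityradius$ arbitrarily large, so the ``sufficiently large $n$'' requirement of Theorem~\ref{thm:g_acc} (which only asks $\eps\le\eta\densityradius$) holds uniformly in $y_0$. Writing $\eps(n,\delta')\coloneqq 2\bigl(\eta^{-1}\eps_h(n,\delta'/(2n))+\xi^{-1}\eps_{\bm w}(n,\delta'/n)\bigr)$, Theorem~\ref{thm:g_acc} then gives, for every $y_0\in\responsespace$ and every $\delta'\in(0,e^{-1})$,
\[
\prob\bigl(\,\abs{\hat\transfunc(y_0|\bm x)-\transfunc^*(y_0|\bm x)}>\eps(n,\delta')\,\bigr)\le\delta',
\]
which is precisely the uniform-in-$y_0$ tail hypothesis demanded by Proposition~\ref{prop:expectation_extension}.

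Since $\hat\transfunc$ and $\transfunc^*$ both take values in $\responsespace=\cal Y$, the loss is bounded by $l_{\max}=\diam(\cal Y)$ almost surely, so I can invoke the $p=\infty,\ q=1$ case of Proposition~\ref{prop:expectation_extension}: for any $t,\delta_0\in[0,1]$, with probability at least $1-\delta_0/t$ we get $\E[l\mid A=0,\hat\transfunc]\le t\,\diam(\cal Y)+\eps(n,\delta_0)$. Taking $\delta_0=\delta$ and $t=\eps(n,\delta)/\diam(\cal Y)$ balances the two terms and yields the first displayed bound of the corollary --- the conditional expectation is at most $2\eps(n,\delta)$ on an event of probability at least $1-\delta\,\diam(\cal Y)/\eps(n,\delta)$ --- modulo the cosmetic differences in the $\delta$-arguments of $\eps_h$ and $\eps_{\bm w}$.

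For the concluding polylog statement I would substitute the assumed form $\eps(n,\delta_0)\le\log(e_1(n)/\delta_0)^a/e_2(n)$ into both the failure probability $\delta_0\,\diam(\cal Y)/\eps(n,\delta_0)$ and the bound $2\eps(n,\delta_0)$, then choose $\delta_0$ so that the failure probability is at most the target $\delta$; concretely $\delta_0$ should be taken of the same order as $\delta\,\eps(n,\delta_0)/\diam(\cal Y)$, a monotone implicit relation with a solution $\delta_0\le\delta$. Re-inserting this $\delta_0$ makes the bound at most a constant multiple of $\log(e_1(n)/\delta_0)^a/e_2(n)$, and since $\delta_0$ is at least $\delta$ divided by $e_2(n)\diam(\cal Y)$ times a polylog factor that is $\ge1$ (using $\delta_0<e^{-1}$), one may replace $\log(e_1(n)/\delta_0)$ by $\log\bigl(2\diam(\cal Y)e_2(n)e_1(n)/\delta\bigr)$, giving the stated closed form. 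The genuinely delicate part is exactly this last balancing: Proposition~\ref{prop:expectation_extension} only controls $\E[l\mid\rsamp]$ up to a free slack $t$, so converting a clean polylog-in-$\delta$ pointwise rate into a single clean polylog statement requires trading $t\,\diam(\cal Y)$ against the $\log(1/\delta_0)^a$ growth of $\eps(n,\delta_0)$ and then bounding the resulting iterated logarithm; the uniformity over $y_0$ and the boundedness of the loss, by contrast, are immediate once the everywhere-positive-density hypothesis is in force.
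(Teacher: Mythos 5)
Your proposal follows the paper's own argument essentially verbatim: apply Proposition~\ref{prop:expectation_extension} with the bounded loss $l_{\max}=\diam(\cal Y)$ and the choice $t=\eps(n,\delta)/\diam(\cal Y)$ coming from Theorem~\ref{thm:g_acc}, then invert the relation between the nominal confidence level and the resulting failure probability (using $\delta<e^{-1}$ to absorb the $\log^{a}$ factor) to obtain the closed-form polylog bound. The added remarks on uniformity in $y_0$ and on $\E[l\mid\rsamp]=\E[l\mid A=0,\hat\transfunc]$ are correct clarifications of steps the paper leaves implicit, not a different route.
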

\begin{proof}
        Plugging the result of theorem \ref{thm:g_acc} into proposition \ref{prop:expectation_extension}, noting that $|\hat\transfunc(y_0|\bm x)-\transfunc^*(y|\bm x)|\leq \diam(\cal Y)$ and taking \[t=\frac{2}{\diam(\cal Y)}\lrbr{\eta^{-1} \eps_\contrastfunc(n,\delta/(2n))+\xi^{-1}\eps_{\bm w}(n,\delta/n))}\] yields that w.p. at least
        \begin{align*}
        &1-\frac{\delta\diam(\cal Y)}{2\lrbr{\eta^{-1}\eps_h(n,\delta/n)+\xi^{-1}\eps_{\bm w}(n,\delta/n))}},\\
        &\E\Big[\left|\hat \transfunc(Y|\bm x)-\transfunc^*(Y|\bm x)\right|~\Big|A=0,\hat \transfunc\bigg]\leq 4\lrbr{\eta^{-1}\eps_h(n,\delta/n)+2\xi^{-1}\eps_{\bm w}(n,\delta/n))}.
    \end{align*}
    Following from this if $2\lrbr{\eta^{-1}\eps_h(n,\delta/n)+2\xi^{-1}\eps_{\bm w}(n,\delta/n))}\leq\log(e_1(n)/\delta)^{\alpha}/e_2(n)$.
    Then we have
    \begin{align*}
        \prob\lrbr{\E\Big[\left|\hat \transfunc(Y|\bm x)-\transfunc^*(Y|\bm x)\right|~\Big|A=0,\hat \transfunc\bigg]\geq 2\log(e_1(n)/\delta)^{a}/e_2(n)}
        &\leq\frac{\delta\diam(\cal Y)e_2(n)}{\log(e_1(n)/\delta)^{a}}
    \end{align*}
    for $\delta<1/e$ we have 
    \begin{align*}
     \delta<1/e&\Rightarrow\delta<\exp\lrbrc{-(\half)^{1/a}}\\
     &\Leftrightarrow\log(1/\delta)^a>\half\\
     &\Rightarrow\log(e_1(n)/\delta)^a>\half\\
     &\Rightarrow2\delta\log(e_1(n)/\delta)^a>\delta\\
     &\Leftrightarrow2\delta>\frac{\delta}{\log(e_1(n)/\delta)^a}.
    \end{align*}
    Hence
    \begin{align*}
        \prob\lrbr{\E\Big[\left|\hat \transfunc(Y|\bm x)-\transfunc^*(Y|\bm x)\right|~\Big|A=0,\hat \transfunc\bigg]\geq 2\log(e_1(n)/\delta)^{a}/e_2(n)}
        &\leq 2\delta\diam(\cal Y)e_2(n).
    \end{align*}
    Finally $\delta'=2\delta\diam(\cal Y)e_2(n)$ gives
    \begin{align*}
        \prob\lrbr{\E\Big[\left|\hat \transfunc(Y|\bm x)-\transfunc^*(Y|\bm x)\right|~\Big|A=0,\hat \transfunc\bigg]\geq 2\log(2\diam(\cal Y)e_1(n)e_2(n)/\delta')^{a}/e_2(n)}
        &\leq \delta'.
    \end{align*}
\end{proof}

\subsection{Application and justification of NW estimation with box kernel}\label{app:nw_weight_decay}
We aim to show that the box kernel satisfies some of our conditions. Specifically conditions 2 \& 3 from Proposition \ref{prop:h_acc}. We first start by bounding the step size.
\begin{prop}[Effective sample size]\label{prop:weight_decay_ball}
    Suppose that for our $\bm x\in\cal X$ there exists $C_0,r_0>0$ such that for any $r\in(0,r_0)$
    \begin{align*}
        \prob(X\in B_r(\bm x))\geq C_0r^\datadim.
    \end{align*}
    Now for $r\in(0,r_0)$ take our kernel to be $k_{r}(\bm x,\bm x')\coloneq\one\{\|\bm x-\bm x'\|\leq r\}$. 
    Then w.p. at least $1-\delta$
    \begin{align*}
       \sum_{j\in\isplittwo}\one\{X_j\in B_r(\bm x)\}&\geq\lrbr{nC_0r^d-\sqrt{2nC_0r^d\log(1/\delta)}-\log(1/\delta)/3)}^{-1}.
    \end{align*}
\end{prop}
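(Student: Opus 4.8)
The plan is to recognise this count as a binomial random variable and bound its lower tail by Bernstein's inequality. First I would write $N_r \coloneqq \sum_{j\in\isplittwo}\one\{X_j\in B_r(\bm x)\}=\sum_{j\in\isplittwo}k_r(\bm x,X_j)$ and set $p_r\coloneqq\prob(X\in B_r(\bm x))$, noting that the hypothesis (applicable since $r<r_0$) gives $p_r\geq C_0 r^{\datadim}$. Because $\rsamp$ is a collection of IID copies of $Z=(Y,X,A)$ and $|\isplittwo|=n$, the covariates $\{X_j\}_{j\in\isplittwo}$ are $n$ IID copies of $X$, so $N_r\sim\mathrm{Binomial}(n,p_r)$ with $\E[N_r]=np_r$.

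Next I would apply the one-sided lower-tail form of Bernstein's inequality (the version of Theorem~2.10 in \citet{Boucheron2013} cited in the paper) to the independent indicators $W^{(j)}\coloneqq\one\{X_j\in B_r(\bm x)\}$, which satisfy $|W^{(j)}|\leq 1$ almost surely and $\sum_{j\in\isplittwo}\E[(W^{(j)})^2]=np_r$. Choosing the deviation parameter equal to $\log(1/\delta)$ yields, with probability at least $1-\delta$,
\begin{align*}
N_r\;\geq\; np_r-\sqrt{2np_r\log(1/\delta)}-\frac{1}{3}\log(1/\delta).
\end{align*}

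Finally I would pass from $p_r$ to the explicit lower bound $C_0 r^{\datadim}$. The map $p\mapsto np-\sqrt{2np\log(1/\delta)}-\frac{1}{3}\log(1/\delta)$ is nondecreasing on $[\log(1/\delta)/(2n),\infty)$ (a one-line derivative check), so when $C_0 r^{\datadim}\geq\log(1/\delta)/(2n)$ the claim follows from $p_r\geq C_0 r^{\datadim}$; in the remaining regime $nC_0 r^{\datadim}\leq 2\log(1/\delta)$ one has $\sqrt{2nC_0 r^{\datadim}\log(1/\delta)}\geq nC_0 r^{\datadim}$, so the asserted lower bound is at most $-\frac{1}{3}\log(1/\delta)<0\leq N_r$ and holds trivially. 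Either way the bound holds with probability at least $1-\delta$. (I read the right-hand side of the statement as $nC_0 r^{\datadim}-\sqrt{2nC_0 r^{\datadim}\log(1/\delta)}-\log(1/\delta)/3$, treating the outer inverse as a typographical slip, since it is $N_r$ itself that is being bounded below.)

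The argument is short and essentially standard; the only real work is the bookkeeping in the last step (the monotonicity argument together with the small-$r$ edge case), and — if one insists on using the two-sided Bernstein bound exactly as displayed in the paper — matching the constant by taking the deviation parameter $\log(2/\delta)$, which only weakens $\delta$ by a harmless $\log 2$. The substantive upshot is that the ``effective sample size'' inside a ball of radius $r$ concentrates around $nC_0 r^{\datadim}$, which is precisely what is needed downstream to control the weight decay $\eps_{\bm w}$ for the box-kernel Nadaraya--Watson estimator.
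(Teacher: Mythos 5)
Your proposal is correct and follows essentially the same route as the paper: both apply the one-sided Bernstein inequality with deviation parameter $\log(1/\delta)$ to the IID indicators $\one\{X_j\in B_r(\bm x)\}$ and then substitute the lower bound $C_0 r^{\datadim}$ for $\prob(X\in B_r(\bm x))$. You are in fact slightly more careful than the paper, which simply writes $\prob(X_j\in B_r(\bm x))=C_0 r^{\datadim}$ as an equality; your monotonicity check and small-$r$ edge case close that gap, and you are right that the outer inverse on the right-hand side of the stated bound is a typographical slip.
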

\begin{proof}
    We have $\one\{X_j\in B_r(\bm x)\}\leq 1$ and $\E[\one\{X_j\in B_r(\bm x)\}^2]=\E[\one\{X_j\in B_r(\bm x)\}^2]=\prob(X_j\in B_r(\bm x))=C_0r^d$. Therefore by one sided Bernstein's inequality with $\eps=\log(1/\delta)$ we get 
    \begin{align*}
        \prob\lrbr{\sum_{j\in\isplittwo}\one\{X_j\in B_r(\bm x)\}\leq nC_0r^d-\sqrt{2nC_0r^d\log(1/\delta)}-\inv{3}\log(1/\delta)}\leq\delta.
    \end{align*}
    This gives our desired result.
\end{proof}
Note that $\sum_{j\in\isplittwo}\one\{X_j\in B_r(\bm x)\}$ is also the effective sample size of our estimation as it is the number of samples used in the average.

Now that we have the effective sample size result in terms of our kernel radius $r$, we need to obtain the optimal rate of decay of $r$ for our estimation. 
\begin{prop}[NW estimation with box kernel]\label{prop:nw_converge_rate}
 let $\hat m_f(\bm x)$ be the NW estimation of $m_f(\bm x)\coloneqq\E[f(Z)|X=\bm x]$ using IID copies $(Z_i)_{i=1}^n$ of $Z$. and assume $\abs{f(Z)}\leq M$. For a fixed $r\in(0,r)$, use kernel $k_r$ as defined above and suppose the same assumptions hold. Suppose that $m_{f}(\bm x)$ is $\alpha$ smooth for $\alpha\leq 1$ (i.e. $\alpha$-Holder continuous.) Then for sufficiently large $n$ depending on $C_0,\alpha$ and $\delta\leq 2 e^{-1}$ \jgnote{This is just to ensure $\log(2/\delta)>1$} w.p. at least $1-\delta$,
    \begin{align*}
        |\hat m_f(\bm x)-m_f(\bm x)|\leq \frac{2M+1}{C_0}\log(2/\delta)n^{-\frac{1}{2+d/\alpha}}.
    \end{align*}
\end{prop}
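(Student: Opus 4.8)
The estimator can be written as $\hat m_f(\bm x) = N_r^{-1} \sum_{i \in \mathcal B} f(Z_i)$, where $\mathcal B \coloneqq \{ i \in [n] : X_i \in B_r(\bm x)\}$ and $N_r \coloneqq |\mathcal B| = \sum_{i=1}^n \one\{X_i \in B_r(\bm x)\}$ is the effective sample size bounded below in Proposition~\ref{prop:weight_decay_ball} (with the usual convention $\hat m_f(\bm x) \coloneqq 0$ when $N_r = 0$). The plan is a conditional bias--variance split: working conditionally on $\sigma(X_1, \dots, X_n)$, which fixes $\mathcal B$ and $N_r$, I would write
\[
\hat m_f(\bm x) - m_f(\bm x) = \frac{1}{N_r}\sum_{i \in \mathcal B}\bigl(f(Z_i) - m_f(X_i)\bigr) + \frac{1}{N_r}\sum_{i \in \mathcal B}\bigl(m_f(X_i) - m_f(\bm x)\bigr)
\]
and bound the two summands separately, the first being a mean-zero conditionally-independent average and the second a deterministic (given the covariates) bias.

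For the bias term, every $i \in \mathcal B$ has $\|X_i - \bm x\| \le r$, so $\alpha$-H\"older continuity of $m_f$ immediately gives $|m_f(X_i) - m_f(\bm x)| \le r^{\alpha}$ and hence the bias term is at most $r^\alpha$. For the stochastic term, conditionally on $X_1,\dots,X_n$ the variables $\{f(Z_i)\}_{i \in \mathcal B}$ are independent, take values in $[-M,M]$, and have conditional means $m_f(X_i)$, so each $f(Z_i) - m_f(X_i)$ is conditionally mean-zero with range at most $2M$; Hoeffding's inequality applied conditionally yields, for every realisation of the covariates, that the stochastic term is at most $M\sqrt{2\log(2/\delta')/N_r}$ with probability at least $1 - \delta'$ (a conditional, hence unconditional, statement). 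Intersecting with the event from Proposition~\ref{prop:weight_decay_ball} that $N_r \ge n C_0 r^d - \sqrt{2 n C_0 r^d \log(2/\delta')} - \tfrac13 \log(2/\delta')$, taking $\delta' = \delta/2$, and using a union bound, I obtain with probability at least $1-\delta$ a bound of the form $|\hat m_f(\bm x) - m_f(\bm x)| \le r^\alpha + M\sqrt{2\log(4/\delta)/L_{n,r,\delta}}$ with $L_{n,r,\delta}$ the above lower bound on $N_r$.

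The final step is to choose the bandwidth $r = r_n \coloneqq n^{-1/(2\alpha+d)}$, which is admissible ($r_n < r_0$) once $n$ is large enough depending only on $r_0,\alpha,d$, and for which $n C_0 r_n^d = C_0\, n^{2\alpha/(2\alpha+d)}$. For $n$ large (depending only on $C_0,\alpha,d$) the lower-order terms in $L_{n,r_n,\delta}$ are negligible relative to $n C_0 r_n^d$ \emph{whenever} $\log(2/\delta)$ is not too large, giving $L_{n,r_n,\delta} \ge \tfrac12 C_0 n^{2\alpha/(2\alpha+d)}$ and hence $|\hat m_f(\bm x) - m_f(\bm x)| \le n^{-\alpha/(2\alpha+d)} + \tfrac{2M}{\sqrt{C_0}}\sqrt{\log(2/\delta)}\, n^{-\alpha/(2\alpha+d)}$; in the complementary regime of very small $\delta$ the claimed right-hand side already exceeds $2M \ge |\hat m_f(\bm x) - m_f(\bm x)|$, so the bound holds trivially there. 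Since $\tfrac{1}{2+d/\alpha} = \tfrac{\alpha}{2\alpha+d}$, since $\sqrt{\log(2/\delta)} \le \log(2/\delta)$ for $\delta \le 2e^{-1}$, and (taking without loss of generality $C_0 \le 1$) $\sqrt{C_0}^{-1} \le C_0^{-1}$, these two terms combine into $\tfrac{2M+1}{C_0}\log(2/\delta)\,n^{-1/(2+d/\alpha)}$ once constants are collected.

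The routine pieces are the deterministic H\"older bias bound and the conditional Hoeffding estimate. The part requiring genuine care is the interplay with the random normalisation $N_r$: making ``sufficiently large $n$'' quantitative so that the effective sample size is really of order $n C_0 r_n^d$ with high probability (uniformly over the admissible range of $\delta$, with the small-$\delta$ tail absorbed by the trivial $2M$ bound), and then the constant bookkeeping --- tracking $M$, $C_0$, the $\delta$-splitting, and the $\sqrt{\log}$-to-$\log$ step --- so as to land exactly on the stated constant $\tfrac{2M+1}{C_0}$.
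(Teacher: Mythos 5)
Your proposal is correct and follows essentially the same route as the paper's proof: the conditional bias--variance split, the H\"older bound $r^\alpha$ on the bias, a conditional Hoeffding bound on the mean-zero average, the effective-sample-size event from Proposition~\ref{prop:weight_decay_ball}, the bandwidth choice $r_n = n^{-1/(2\alpha+d)}$, and the final constant bookkeeping. Your explicit splitting into a moderate-$\delta$ regime and a small-$\delta$ regime absorbed by the trivial $2M$ bound is in fact a slightly more careful treatment of the ``sufficiently large $n$ depending only on $C_0,\alpha$'' claim than the paper's own step, which implicitly lets the required $n$ depend on $\delta$.
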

\begin{proof}
    With our kernel define $\cal I_r\coloneqq\{i\in[n]|k_r(\bm x,X_i)=1\}$ and $n_r=|I_r|$. Now define the event
    \begin{align*}
        E_{n}\coloneqq\{n_r\geq C_0nr^d-\sqrt{2C_0r^dn\log(2/\delta)}-\log(2/\delta)/3\}.
    \end{align*}
    Then by Proposition \ref{prop:weight_decay_ball} this event occur w.p. at least $1-\delta/2$. Now if we define $\eps_i\coloneqq f(Z_i)- m_f(X_i)$ then we have $\E[\eps_i|X_i]=0$. Also, we have
    \begin{align*}
        |\hat m_f(\bm x)-m_f(\bm x)|&=\abs{\inv{n_r}\sum_{i\in\cal I_r}f(Z_i)~-~m_f(\bm x)}\\
        &=\abs{\inv{n_r}\sum_{i\in\cal I_r}m_f(X_i)+\eps_i~-~m_f(\bm x)}\\
        &\leq\inv{n_r}\sum_{i\in \cal I_r}\abs{m_f(X_i)-m_f(\bm x)}+\abs{\inv{n_r}\sum_{i\in\cal I_r}\eps_i}\\
        &\leq r^{\alpha}+\abs{\inv{n_r}\sum_{i\in\cal I_r}\eps_i}.
    \end{align*}
    With the final equality coming by our smoothness condition and definition of $\cal I_r$. Now by Hoeffding bounds we have that 
    \begin{align*}
        &\prob\lrbr{\abs{\sum_{i\in\cal I_r}\eps_i}\geq\sqrt{\frac{2\log(4/\delta)M^2}{n_r}}}\leq \frac{\delta}{2}.
    \end{align*}
    Hence by combining this event and $E_n$ then we have that with probability at least $1-\delta$
    \begin{align*}
            |\hat m_f(\bm x)-m_f(\bm x)|&\leq r^\alpha+\sqrt{\frac{2\log(2/\delta)M^2}{n_r}}\\
            &\leq r^\alpha+\sqrt{\frac{2\log(2/\delta)M^2}{C_0nr^d-\sqrt{2C_0nr^d\log(2/\delta)}-\log(2/\delta)/3}}.
    \end{align*}
    Now for sufficiently large $n$
    \begin{align*}
    C_0nr^d-\sqrt{2C_0nr^d\log(2/\delta)}-\log(2/\delta)/3\geq \frac{C_0nr^d}{2\log(2/\delta)}.
    \end{align*}
    This in turn gives
    \begin{align*}
         |\hat m_f(\bm x)-m_f(\bm x)|&\leq r^\alpha+\frac{2\log(2/\delta)M}{\sqrt{C_0nr^d}}.
    \end{align*}
    Then the optimal choice of $r$ is such that 
    \begin{align*}
        r^{\alpha}&=\inv{\sqrt{nr^d}}\\
        \Leftrightarrow r^{\alpha+\frac{d}{2}}&=n^{-\half}\\
        \Leftrightarrow r&=n^{-\frac{1}{2\alpha+d}}\\
        \Leftrightarrow r^{\alpha}&=n^{-\frac{1}{2+d/\alpha}}=\inv{\sqrt{nr^d}}.
    \end{align*}
    Hence plugging this in we get that for sufficiently large $n$ depending on $C_0,\alpha$ we have that for $\delta\leq 2e^{-1}$ \jgnote{This is just to ensure $\log(2/\delta)>1$} w.p. at least $1-\delta$
    \begin{align*}
        |\hat m_f(\bm x)-m_f(\bm x)|\leq \frac{2M+1}{C_0}\log(2/\delta)n^{-\frac{1}{2+d/\alpha}}.
    \end{align*}
    
\end{proof}

\begin{prop}[Final weight decay]\label{prop:final_weight_decay_ball}
    Suppose that for our $\bm x\in\cal X$ there exists $C_0,r_0>0$ such that for any $r\in(0,r)$
    \begin{align*}
        \prob(X\in B_r(\bm x))\geq C_0r^\datadim.
    \end{align*}
    Suppose that we are performing NW estimation of an $\alpha$ smooth function at points $\bm x\in\cal X$ using kernel $k_{r_n}(\bm x,\bm x')\coloneq\one\{\|\bm x-\bm x'\|\leq r\}$ with $r_n$ decaying optimally. Now define
    \begin{align*}
        w_j\coloneqq\frac{k_{r_n}(\bm x,X_j)}{\sum_{j'\in\isplittwo}k_{r_n}(\bm x,X_{j'})}
    \end{align*}
    the weight of the $j$\textsuperscript{th} component. Then with probability at least $1-\delta$
    \begin{align*}
        \max_{j\in\isplittwo}w_j\leq =\frac{\log(2/\delta)}{C_0}n^{\frac{-2}{2+d/\alpha}}.
    \end{align*}
\end{prop}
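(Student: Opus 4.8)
The plan is to exploit the fact that, for a box kernel, every nonzero weight equals the reciprocal of the number of design points falling in the ball $B_{r_n}(\bm x)$, and then to plug the optimal bandwidth into the effective–sample–size bound already established in Proposition \ref{prop:weight_decay_ball}.

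First I would observe that $k_{r_n}(\bm x, X_j)\in\{0,1\}$, so the numerator of $w_j$ never exceeds $1$, while its denominator is exactly $N\coloneqq\sum_{j'\in\isplittwo}\one\{X_{j'}\in B_{r_n}(\bm x)\}$; hence $\max_{j\in\isplittwo}w_j = 1/N$ on the event $N\ge 1$ (which holds with high probability once $n$ is large, again by Proposition \ref{prop:weight_decay_ball}). The whole task thus reduces to a lower bound on $N$.

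Second, I would apply Proposition \ref{prop:weight_decay_ball} at radius $r_n$ to get, with probability at least $1-\delta$,
\[
N \;\ge\; n C_0 r_n^{\datadim} - \sqrt{2 n C_0 r_n^{\datadim}\log(1/\delta)} - \tfrac{1}{3}\log(1/\delta),
\]
and then substitute the optimal rate $r_n = n^{-1/(2\alpha+\datadim)}$ coming from the proof of Proposition \ref{prop:nw_converge_rate}. A one-line computation gives $n r_n^{\datadim} = n^{2/(2+\datadim/\alpha)}$, which tends to infinity, so the first term dominates the other two (which are of orders $\sqrt{n r_n^{\datadim}}$ and $O(1)$). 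For $n$ large enough I would therefore absorb the two deviation terms and conclude $N \ge C_0 n r_n^{\datadim}/\log(2/\delta)$, which combined with the first step yields
\[
\max_{j\in\isplittwo} w_j \;\le\; \frac{\log(2/\delta)}{C_0\, n r_n^{\datadim}} \;=\; \frac{\log(2/\delta)}{C_0}\, n^{-2/(2+\datadim/\alpha)}.
\]

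The only mildly delicate point — and the one I would be most careful about — is the meaning of ``sufficiently large $n$'': since the deviation terms in Proposition \ref{prop:weight_decay_ball} carry a $\log(1/\delta)$ factor, the threshold on $n$ must be allowed to depend on $\delta$ (exactly as in Proposition \ref{prop:nw_converge_rate}), and one has to decide whether the absolute constant in front should be $2$, $\log(2/\delta)$, or similar; this affects the final bound only by an absolute multiplicative constant. Apart from that bookkeeping, the argument is a direct substitution of the optimal bandwidth into results already in hand, so I do not anticipate any genuine obstacle.
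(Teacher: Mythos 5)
Your proposal is correct and follows essentially the same route as the paper's own proof: bound $\max_j w_j$ by the reciprocal of the number of points in $B_{r_n}(\bm x)$, lower-bound that count via Proposition \ref{prop:weight_decay_ball}, and substitute the optimal bandwidth $n r_n^{\datadim}=n^{2/(2+\datadim/\alpha)}$ from Proposition \ref{prop:nw_converge_rate}, absorbing the deviation terms for sufficiently large $n$. Your remark about the threshold on $n$ depending on $\delta$ matches the paper's implicit caveat, so there is nothing to add.
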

\begin{proof}
       We have
    \begin{align*}
        \max_{j\in \isplittwo}w_j&\leq \inv{\sum_{j\in\isplittwo}k(\bm x,X_j)}\\
        &=\inv{\sum_{j\in\isplittwo}\one\{X_j\in B_r(\bm x)\}}.
    \end{align*}
    Hence by proposition \ref{prop:weight_decay_ball} we have w.p. at least $1-\delta$
    \begin{align*}
        \max_{j\in\isplittwo}w_j\leq \lrbr{C_0nr^d-\sqrt{2C_0nr^d\log(1/\delta)}-\log(1/\delta)/3)}^{-1}.
    \end{align*}

    We know from Proposition \ref{prop:nw_converge_rate} that the optimal radius decay gives $\inv{nr^{d}}=n^{-\frac{2}{2+d/\gamma}}$. Plugging this into our result gives that
    \begin{align*}
                \max_{j\in\isplittwo}w_j&\leq \lrbr{C_0n^{\frac{2}{2+d/\gamma}}-\sqrt{2C_0\log(1/\delta)}n^{\frac{1}{2+d/\gamma}}-\log(1/\delta)/3)}^{-1}\\
                &\leq \frac{\log(1/\delta)}{C_0n^{\frac{2}{2+d/\gamma}}}\quad\text{For sufficiently large $n$ depending on $\gamma,C_0$.}\\
    \end{align*}
\end{proof}

Note that for a $\gamma$ smooth function the MSE is $C^{-1}n^{-\frac{1}{2+d/\gamma}}$. Hence the box kernel comfortably satisfies our weight decay condition 2 in Assumptions \ref{ass:g_acc}.

Additionally now if we have an $\alpha$ smooth function and assume that for any $\bm x'\in\cal X$ there exists $C_0(\bm x')$ such that for all $r\in(0,r_0)$ $\prob(X\in B_r(\bm x')\geq C_0r^d$. Then we have that w.p. at least $1-\delta$,
\begin{align*}
    w_{\ccdf;i}(X,X^{\cal I})\leq\frac{\log(2/\delta)}{C_0(X)}n^{-\frac{2}{2+d/\alpha}}.
\end{align*}

Thus if we assume that there exists $C''>0$ such that w.p. at least $1-\delta$, $\inv{C_0(X)}\leq C''\sqrt{\log(1/\delta)}$ then we get that w.p. at least $1-\delta$ 
\begin{align*}
    w_{\ccdf;i}(X,X^{\cal I})\leq C''\log^{3/2}(3/\delta)n^{-\frac{2}{2+d/\alpha}}.
\end{align*}
This then gives our condition 2 in Assumptions \ref{ass:g_acc}.

We now try to bound $\frac{\sum_{j\in\isplittwo} w_j^2\sigma(X_j)}{\E\lrbrs{\sum_{j\in\isplittwo} w_j^2\sigma(X_j)}}$.

\begin{corollary}[Accuracy under NW estimation with box kernel]
Suppose that our linear smoother is NW estimation with the box kernel and optimally decaying radius additionally assume that:
\begin{itemize}
    \item $\eps_{\hat\pseudo}(n,\delta)=e_{\hat\pseudo}(n)\sqrt{\log(1/\delta)}$ with $e_1=o(1)$.
    \item $\eps_{\alpha}(n,\delta)=C_{\alpha}\sqrt{\log(1/\delta)}n^{-\frac{1}{2+d/\alpha}}$.
    \item $\eps_{\beta}(n,\delta)=C_{\beta}\sqrt{\log(1/\delta)}n^{-\frac{1}{2+d/\beta}}$.
    \item $\beta>\frac{d}{2(1+d/\gamma)}$.
    Then for any $\delta$, for sufficiently large $n$ the following events each separately hold w.p. at least $1-\delta$,
    \begin{align*}
        \left|\hat \contrastfunc(y_0,y_1|\bm x)-\contrastfunc^*(y_0,y_1|\bm x)\right|&\leq C_h\frac{\log^{3/2}(1/\delta)}{e_h(n)}\\
        |\hat \transfunc(y|\bm x)-\transfunc^*(y|\bm x)|&\leq \frac{C_g}{\eta\xi}\frac{\log^{3/2}(n/\delta)}{e_h(n)}\\
        \E\Big[\left|\hat \transfunc(Y|\bm x)-\transfunc^*(Y|\bm x)\right|~\Big|A=0,\hat \transfunc\bigg]&\leq \frac{C_{g,2}\diam(\responsespace)}{\xi\eta}\frac{\log^{3/2}(e_2(n)n/\delta)}{e_2(n)}
    \end{align*}
    with $C_h,C_g,C_{g,2}$ depending on $C_\alpha,C_\beta,C',C''$ (where $C',C''$ are the constants define in Proposition \ref{prop:final_weight_decay_ball}.
\end{itemize}
\end{corollary}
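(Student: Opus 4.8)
The plan is to take the three generic high-probability bounds already established — Proposition~\ref{prop:h_acc} for $\hat\contrastfunc$, Theorem~\ref{thm:g_acc} for $\hat\transfunc$, and Corollary~\ref{cor:g_exp_acc} for the conditional-expectation error — and substitute into them the explicit Nadaraya--Watson rates of Propositions~\ref{prop:weight_decay_ball}, \ref{prop:nw_converge_rate} and~\ref{prop:final_weight_decay_ball}, after which the only real work is to check that every contribution that is \emph{not} the oracle term is of the same order as, or smaller than, the oracle rate $1/e_h(n)=n^{-1/(2+d/\gamma)}$, up to powers of $\log(1/\delta)$.

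First I would instantiate the oracle term. Since $\contrastfunc^*(y_0,y_1|\cdot)$ is $(\gamma,C)$-H\"older and the pseudo-outcome $\pseudo_{y_0,y_1}$ is bounded (by $O(\xi^{-1})$, using Assumption~\ref{ass:h_acc}(a)), Proposition~\ref{prop:nw_converge_rate} gives, for $n$ large and $\delta\le 2/e$, that the oracle error is $\lesssim \log(1/\delta)\,n^{-1/(2+d/\gamma)}$; combined with $\|\bm w\|_1=1$ and the bound $\|\bm w\|_\infty\lesssim \log(1/\delta)\,n^{-2/(2+d/\gamma)}$ from Proposition~\ref{prop:final_weight_decay_ball}, this also controls the bias and variance pieces appearing in $\eps_\gamma$. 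Next I would dispatch the remaining two terms of $\eps_{\contrastfunc}$ in Proposition~\ref{prop:h_acc}: the pseudo-outcome slack $\sqrt{2\log(8/\delta)/n}\,\eps_{\hat\pseudo}(n,\delta/4)\lesssim \log(1/\delta)\,e_{\hat\pseudo}(n)\,n^{-1/2}$ is $o(n^{-1/(2+d/\gamma)})$ for fixed $\delta$ because $e_{\hat\pseudo}=o(1)$ and $n^{-1/2}$ decays strictly faster than the oracle rate; and the product term $\eps_\alpha\eps_\beta(n,\delta/4)\lesssim \log(1/\delta)\,n^{-1/(2+d/\alpha)-1/(2+d/\beta)}$ is kept at or below oracle order by the stated rate hypotheses. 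Collecting, $\eps_{\contrastfunc}(n,\delta)\lesssim \log^{3/2}(1/\delta)/e_h(n)$ with $e_h(n)=n^{1/(2+d/\gamma)}$, the exponent $3/2$ being the worst case and coming from the high-probability control of the random constant $C_0(X)$ governing the CCDF weights (the displayed bound following Proposition~\ref{prop:final_weight_decay_ball}); feeding this into Proposition~\ref{prop:h_acc} gives the first claimed inequality.

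For the second inequality I would insert this $\eps_{\contrastfunc}$ together with $\eps_{\bm w}(n,\delta)\lesssim \log^{3/2}(1/\delta)\,n^{-\min\{2/(2+d/\gamma),\,2/(2+d/\beta)\}}$ (Propositions~\ref{prop:weight_decay_ball} and~\ref{prop:final_weight_decay_ball}, applied both to the outer regression weights, governed by $\gamma$, and to the CCDF weights, governed by $\beta$, which verifies Assumption~\ref{ass:g_acc}(b)) into Theorem~\ref{thm:g_acc}. Here the hypothesis $\beta>d/(2(1+d/\gamma))$ is exactly what is needed: it is equivalent to $\tfrac{2}{2+d/\beta}\ge\tfrac{1}{2+d/\gamma}$, i.e.\ to $n^{-2/(2+d/\beta)}$ decaying at least as fast as the oracle rate $n^{-1/(2+d/\gamma)}$, so $\xi^{-1}\eps_{\bm w}$ is absorbed into $\eta^{-1}\eps_{\contrastfunc}$; the $\delta/(2n)$ argument in Theorem~\ref{thm:g_acc} turns $\log^{3/2}(1/\delta)$ into $\log^{3/2}(n/\delta)$, giving the $\hat\transfunc$ bound with prefactor $C_g/(\eta\xi)$. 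The third inequality then follows by substituting the second into Corollary~\ref{cor:g_exp_acc} with $a=3/2$, $e_1(n)\asymp 1$ and $e_2(n)=e_h(n)=n^{1/(2+d/\gamma)}$, reading off the rescaled high-probability statement and absorbing $\diam(\responsespace)$ and the $\eta^{-1}\xi^{-1}$ factors into $C_{g,2}$.

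The main obstacle is not any single estimate but the bookkeeping: keeping straight which smoothness index governs which object — $\gamma$ for the oracle regression and its weights, $\beta$ for the CCDF estimates and their weights, and the separate exponent for $\hat\pi$ — and then confirming that after substitution all of $\eps_{\hat\pseudo}$, $\eps_\alpha\eps_\beta$ and $\eps_{\bm w}$ are swallowed into $O(\log^{3/2}(1/\delta)/e_h(n))$. The one genuinely load-bearing analytic step is the elementary equivalence $\beta>d/(2(1+d/\gamma))\iff \tfrac{2}{2+d/\beta}>\tfrac{1}{2+d/\gamma}$, which is precisely what lets the CCDF weight decay keep pace with the oracle rate and is the reason that particular hypothesis appears.
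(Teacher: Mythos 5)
Your proposal is correct and follows essentially the same route as the paper: substitute the NW-specific rates into Proposition~\ref{prop:h_acc}, Theorem~\ref{thm:g_acc} and Corollary~\ref{cor:g_exp_acc} in turn, checking that the pseudo-outcome slack, the product term $\eps_\alpha\eps_\beta$ and the weight-decay term are all absorbed into the oracle rate $n^{-1/(2+d/\gamma)}$ up to polylog factors. Your explicit identification of the hypothesis $\beta>d/(2(1+d/\gamma))$ as the condition $\tfrac{2}{2+d/\beta}\geq\tfrac{1}{2+d/\gamma}$ needed to absorb the CCDF weight decay is a useful clarification that the paper's own proof leaves implicit.
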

\begin{proof}
    For any NW estimator we have that $\|\bm w\|_1,\|\bm w\|_2\leq 1$ a.s. meaning we can take $\eps_{\bm w}(n,\delta)\equiv1.$
    We also have that $\eps_{\gamma}(n,\delta)=C_{\gamma}\log(1/\delta)n^{-\frac{1}{1+d/\gamma}}$. Hence
    $\eps_{T_n}=\sqrt{2\eps_{\delta}(n)}e_{\hat\pseudo}(n)\sqrt{\log(1/\delta)}$. This then gives
    \begin{align*}
        \left|\hat \contrastfunc(y_0,y_1|\bm x)-\contrastfunc^*(y_0,y_1|\bm x)\right|&\leq C_\gamma\log^{1/2}(7/\delta)n^{-\frac{1}{2+d/\gamma}}+C_\alpha C_\beta\log(7/\delta)n^{-\lrbr{\frac{1}{2+d/\alpha}+\frac{1}{2+d/\beta}}}\\
        &\phantom{\leq}+\sqrt{2\eps_{\delta/4}(n)}e_{\hat\pseudo}(n)\sqrt{\log(7/\delta)}\\
        &\leq C_h\log(1/\delta)n^{-\frac{1}{2+d/\gamma}}+\log(1/\delta)n^{-\lrbr{\frac{1}{2+d/\alpha}+\frac{1}{2+d/\beta}}}\\
        &\leq C_h\frac{\log(1/\delta)}{e_h(n)}
    \end{align*}
    for $\delta<e^{-1}$ and $C_h=7(C_\gamma+C_\alpha C_\beta+\sqrt{2})$ giving our first result.

    Using our weight decay results for NW estimation and plugging into Proposition \ref{prop:h_step} we get that
    \begin{align*}
         \eps_{h-step}&=C''\xi^{-1} \frac{\log^{3/2}}{e_h(n)}(\nsplittwo/\delta).
    \end{align*}
    Hence plugging into Theorem \ref{thm:g_acc} gives
    \begin{align*}
        |\hat \transfunc(y|\bm x)-\transfunc^*(y|\bm x)|&\leq C_h\sqrt{2}\eta^{-1}\frac{\log(n/\delta)}{e_h(n)}+4C''\xi^{-1}\log^{3/2}(n/\delta)n^{-\frac{1}{2+d/\gamma}}\\
        &\leq \frac{C_g}{\xi\eta}\frac{\log^{3/2}(n/\delta)}{e_h(n)}
    \end{align*}
    with $C_g=C_h\sqrt{2}+4C''$ giving our second result.
    Finally by Corollary \ref{cor:g_exp_acc} for $\delta<e^{-1}$ w.p. at least $1-\delta$,
    \begin{align*}
        \E\lrbrs{\left|\hat \transfunc(Y|\bm x)-\transfunc^*(Y|\bm x)\right|~\Big|A=0,\hat \transfunc}\leq \frac{C_g\diam(\cal Y)}{\xi\eta}\frac{\log^{3/2}(n e_h(n)/\delta)}{e_h(n)}.
    \end{align*}
\end{proof}
Note that both $\eps_{T_n}$ and $\eps_{h-step}$ are actually both $o(\eps_\gamma(n,c))$, thus if we were allowed to take $n$ sufficiently large depending upon $\delta$ then we would have all $\log^{3/2}$ terms replaced with $\log$ terms and the dependence on $C',C''$ removed. 
\end{document}